\newif\iffun\funfalse
\newtheorem{theorem}{Theorem}
\newtheorem{observation}[theorem]{Observation}
\newtheorem{lemma}[theorem]{Lemma}
\newcommand\e\emph\newcommand\eps{\ensuremath{\varepsilon}\xspace}
\renewcommand\P{{\ensuremath P}\xspace}\newcommand\p{{\ensuremath p}\xspace}
\newcommand\s{{\ensuremath s}\xspace}\renewcommand\t{{\ensuremath t}\xspace}
\newcommand\B{{\ensuremath{\mathcal B}}\xspace}\newcommand\T{{\ensuremath{\mathcal T}}\xspace}
\newcommand\spms{\ensuremath{\mathsf{SPM}(\s)}\xspace}\newcommand\spmt{\ensuremath{\mathsf{SPM}(\t)}\xspace}
\newcommand\sps[1]{\ensuremath{\mathsf{SP}(\s,{#1})}\xspace}\newcommand\spt[1]{\ensuremath{\mathsf{SP}(\t,{#1})}\xspace}
\newcommand\rs[1]{\ensuremath{\mathsf r_{\s}({#1})}\xspace}\newcommand\rt[1]{\ensuremath{\mathsf r_{\t}({#1})}\xspace}
\newcommand{\geod}{\ensuremath{\mathsf{SP}}\xspace}
\newcommand{\SPM}{\ensuremath{\mathsf{SPM}}\xspace}
\newcommand\st{barrier\xspace}\newcommand\sts{barriers\xspace}\newcommand\Sts{Barriers\xspace}
\newcommand{\mkmcal}[1]{\ensuremath{\mathcal{#1}}\xspace}
\newcommand{\D}{\mkmcal{D}}
\newcommand{\old}[1]{}
\newcommand{\myremark}[3]{\textcolor{blue}{\textsc{#1 #2:}} \textcolor{red}{\textsf{#3}}}
 \renewcommand{\myremark}[3]{}
\newcommand{\frank}[2][says]{\myremark{Frank}{#1}{#2}}
\newcommand{\maarten}[2][says]{\myremark{Maarten}{#1}{#2}}
\newcommand{\val}[2][says]{\myremark{Val}{#1}{#2}}
\newcommand{\thmheadfont}{\bfseries}
\newenvironment{repeatenv}[2]%
  {\smallskip\noindent {\thmheadfont #1~\ref{#2}.}\ \slshape}
  {\normalfont}
\def\nicefrac#1#2{
	\raise.5ex\hbox{$#1$}%
	\kern-.1em/\kern-.15em%
	\lower.25ex\hbox{$#2$}}
\title{Most vital segment barriers}
\author
{ Irina Kostitsyna\thanks{TU Eindhoven, the Netherlands, {\tt i.kostitsyna@tue.nl}}
  \and Maarten L{\"o}ffler\thanks{Utrecht University, the Netherlands, {\tt m.loffler@uu.nl}}
  \and Valentin Polishchuk\thanks{Link\"oping University, Sweden, {\tt polishchuk@liu.se}}
  \and Frank Staals\thanks{Utrecht University, the Netherlands, {\tt f.staals@uu.nl}}
}
\date{}
\begin{document}






\maketitle
\vspace{-1em}
\begin{abstract}
  We study continuous analogues of ``vitality'' for discrete network
  flows/paths, and consider problems related to placing segment \sts that
  have highest impact on a flow/path in a polygonal domain. This extends the
  graph-theoretic notion of ``most vital arcs'' for flows/paths to geometric
  environments. We give hardness results and efficient algorithms for various
  versions of the problem, (almost) completely separating hard and
  polynomially-solvable cases.
\end{abstract}

\maarten {We just went through the terminology. We now use the macro \st everywhere, and use the term \st instead of stick, blocker, segment, or obstacle where appropriate. The words "segment" and "obstacle" are still used in the text, but with different meanings.}

\section{Introduction}

This paper addresses the following kind of questions:
\begin{quote}{\sl Given a polygonal domain with an ``entry'' and an ``exit'',
    where should one place a given set of ``\sts'' 
    so
    as to decrease the maximum entry-exit flow as much as possible (``flow'' version), or
    to increase the length of the shortest entry-exit
    path as much as possible (``path'' version)?}
\end{quote}
Figure~\ref {fig:intro}
illustrates these questions in their simplest form (placing a single \st in
a simple polygon).  We call the solutions to the problems \e{most vital}
segment \sts for the flow and the path resp.  The name derives from the notion of
\e{most vital arcs} in a network -- those whose deletion decreases the flow or
increases the length of the shortest path as much as possible. While the graph
problems are well studied
\cite{ahuja,alderson2013sometimes,BALL198973,bazgan2015refined,lin1994finding,lubore1971determining,nardelli2001faster,ratliff1975finding},
to our knowledge, geometric versions of locating ``most vital'' facilities have
not been
explored. 
Throughout the paper, the segment \sts will be called simply \sts.
When several segments are aligned to form a longer \st, we
call this longer segment a \e{super-\st}.
We focus
only on segment \sts because already with segments there are a number of
interesting problem versions, and in principle, any polygonal \st may be
created from sufficiently many segments; however, our results imply that the
optimal blocking is always attained by gluing the \sts into super-\sts
(no other configuration of segments is most vital).

Determining the most vital \sts is related to resilience and critical infrastructure protection, as it identifies the most vulnerable spots (``bottlenecks, weakest links'') in the environment by quantifying how fragile or robust the flow/path is, how much it can be hurt, in the worst case, due to an adversarial act. It is thus an example of optimizing from an \e{adversarial} point of view: do as much harm as possible using available budget. 
In practice, the abstract ``bad'' and ``good'' may swap places, e.g., when the
``good guys'' build a defense wall, under constrained resources, to make the
``evil'' (epidemics, enemy, predator, flood) reach a treasure as late as
possible (for the path version) or in a small amount (for the flow
version). Our problem may also be viewed as a Stackelberg game (in networks/graphs parlance aka \e{interdiction problems}  \cite{interdictionReport,interdictionOld,interdictionBandit,interfictionInHandbook,interdiction17}, extensively studied 
due to its relation to security) where the leader places the blockers and the follower computes the maximum flow or the shortest path around them.

Our paper also contributes to the plethora of work on uncertain environments \cite{CitovskyMM17,l-estip-11,Papadimitriou1989}. Motion planning under uncertainty is important, e.g., in computing aircraft paths: locations of hazardous storm systems and other no-fly zones are not known precisely in advance, and it is of interest to understand how much the path or the whole traffic flow may be hurt, in the worst case, if new obstacles pop up (of course, there are many other ways to model weather uncertainty).

Finally, similar types of problems arise when \sts are installed for managing the queue to an airline check-in desk or controlling the flow of spectators to an event entrance.

\begin{figure}[tb]
  \centering
  \includegraphics{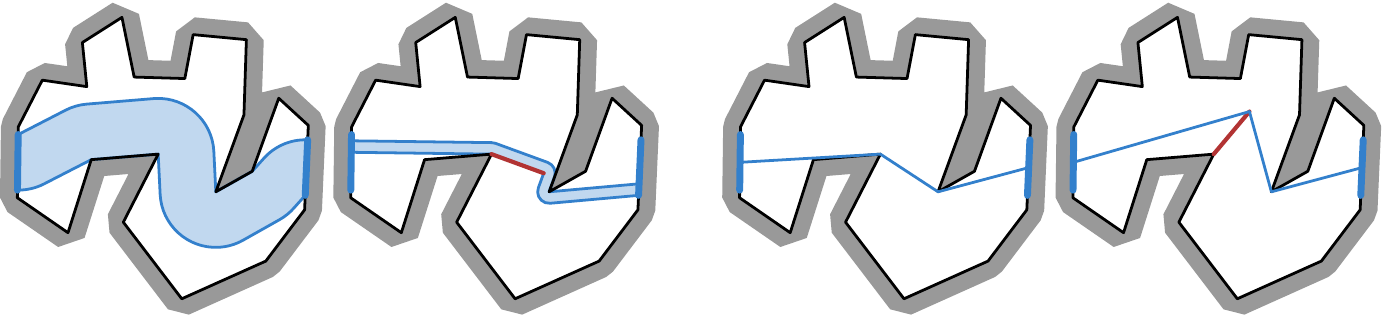}
  \caption{A polygon in which a single \st is placed to minimize the flow between two edges of the polygon (left) or lengthen the shortest path between two points (right). }
  \label{fig:intro}
\end{figure}


\paragraph{Taxonomy.}\label{sec:tax}
Since our input consists of the domain and the \sts, several problem versions may be defined:
\begin{description}
\item[H/h]The domain may have an arbitrary number of holes (such versions will be denoted by~H) or a constant number of holes (denoted by h)
\item[B/b]There may be arbitrarily many \sts (denoted B) or $O(1)$ \sts in the input (denoted b)
\item[D/1]The \sts may have different lengths (denoted D) or all have the same unit length
\end{description}
Overall, for each of the two problems---flow blocking and path blocking---we have 8 versions (HBD, HB1, HbD, Hb1, hBD, hB1, hbD, hb1); e.g., flow-hBD is the problem of blocking the flow in a polygonal domain with $O(1)$ holes using arbitrarily many \sts of different lengths, etc.
We allow \sts to intersect the holes. Depending on the nature of the \sts and the environment, in some of the envisioned applications these may be impractical (e.g., if a hole is pillar in the building, a \st cannot run through it) while in others the assumptions are natural (e.g., if a hole is a pond near the entrance to an event). From the theoretical point of view, in most of our problems these assumptions are w.l.o.g.\ because in the optimal solution the \sts just touch the holes, not ``wasting'' their length inside a hole (one exception is HBD in which the solution may change if the \sts must avoid the holes). 

\paragraph{Overview of the results.} Section~\ref{sec:main} describes our main
technical contribution: a linear-time algorithm for the fundamental problem of
finding \e{one} most vital \st for the shortest \s-\t path in a \e{simple}
polygon. The algorithm is based on observing that the \st must be ``rooted'' at
a vertex of the polygon. The main challenge is thus to trace the locations of
the \st's ``free'' endpoint (the one not touching the polygon boundary) through
the overlay of shortest path maps from \s and~\t. The overlay has quadratic
complexity, so instead of building it, we show that only a linear number of the
maps' cells can be intersected and work out an efficient way to go through all
the cells. Furthermore, we prove that when placing multiple \sts they can be
lined up into a single super-\st; this reduces the problem to that of placing
one \st.
In the remainder of the paper we consider polygons with holes.
Section~\ref{sec:hard} shows hardness of the most general problems flow-HBD and path-HBD, i.e., blocking with multiple different-length \sts in polygons with (a large number of) holes. We also prove weak hardness of the versions with small number of holes (flow-hBD and path-hBD). Finally, we argue that path blocking is weakly hard if the \sts have the same length (path-HB1).
Section~\ref{sec:poly} presents polynomial-time algorithms for 
    path blocking
    with few \sts 
    (path-HbD), implying that path-hbD, path-Hb1 and path-hb1 are also polynomial. 
The section then
describes polynomial-time algorithms for the remaining versions of flow
blocking. We first show that the problem is pseudopolynomial if the \sts have
the same length (flow-HB1). We then prove that blocking with few \sts
(flow-HbD) is strongly polynomial, implying that flow-hbD, flow-Hb1 and
flow-hb1 are also polynomial. Finally, we show polynomiality of the version
with constant number of holes (flow-hB1). Table~\ref{tab:results} summarizes
the hardness and polynomiality of our results.


\begin{table}[tb]
  \centering
\setlength{\tabcolsep}{4pt}
\begin{tabular}{lcccccccc}
  \toprule
       & HBD  & HB1    & HbD    & Hb1  &  hBD & hB1 & hbD  & hb1\\
  \midrule
  Path & NP-hard & weakly NP-hard & poly   & poly & weakly NP-hard &  ? & poly & poly \\
  Flow & NP-hard & pseudo-poly & poly   & poly & weakly NP-hard &  poly & poly & poly \\
  \bottomrule
\end{tabular}
\caption {When the number of holes and \sts exceeds $1$, the problem may become
  (weakly or strongly) NP-hard. This table shows which combinations of
  parameters lead to polynomial or hard problems. The results for $Hb1$, $hbD$
  and $hb1$ follow directly from the result for $HbD$.}
\label{tab:results}
\end{table}

\section{Preliminaries}\label{sec:prelim}


\begin{figure}[b]
\centering
\includegraphics{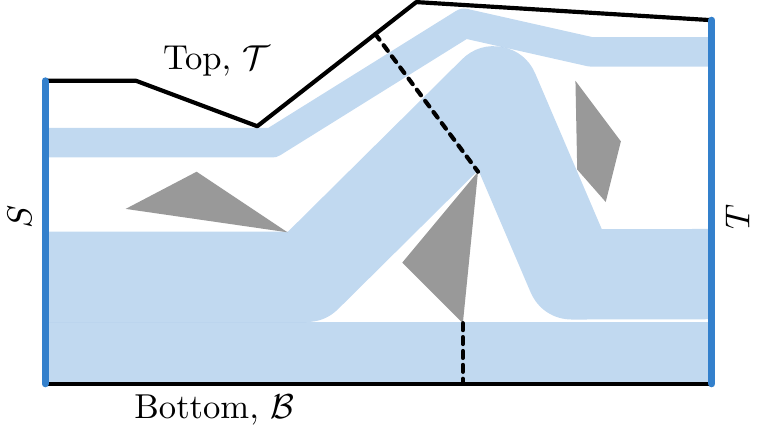}
\caption{Flow setup. An $S$-$T$ flow decomposed into 3 thick paths (yellow); two edges of the critical graph, defining a cut (dashed).}
\label{fig:setup}
\end{figure}

Let \P be a polygonal domain with $n$ vertices, and let the \e{source} $S$ and the \e{sink} $T$ be two given edges on the outer boundary of \P (Fig.~\ref{fig:setup}). 
A \e{flow} in \P is a vector field $F:\P\to\mathbb{R}^2$ with the following properties: $\mathrm{div}\,F(p)=0\,\,\forall p\in\P$ (there are no source/sinks inside the domain), $F(p)\cdot{\bf n}(p)=0\,\,\forall p\in\partial\P\setminus\{S\cup T\}$ where ${\bf n}(p)$ is the unit normal to the boundary of \P at point $p$ (the flow enters/exits \P only through the source/sink), and $|F(p)|\le1\,\forall p\in\P$ (the permeability of any point is 1, i.e., not more than a unit of flow can be pushed through any point -- the flow respects the capacity constraint). Similarly to the discrete network flow, the \e{value} of a continuous flow $F$ is the total flow coming in from the source ($\int_S\,F\!\cdot\!{\bf n}\,\,\mathrm{d}s$) -- since in the interior of \P the flow is divergence-free (flow conserves inside \P), by the divergence theorem, the value is equal to the total flow out of the sink ($-\int_T\,F\!\cdot\!{\bf n}\,\,\mathrm{d}t$). A \e{cut} is a partition of \P into 2 parts with $S,T$ in different parts (analogous to a cut in a network); the \e{capacity} of the cut is the length of the boundary between the parts. Finally, the source and the sink split the outer boundary of \P into two parts called the \e{bottom} \B and the \e{top} \T, and the \e{critical graph} of the domain \cite{gewali} is the complete graph on the domain's holes, \B and \T, whose edge lengths equal to the distances between their endpoints (we assume that the edges are embedded to connect the closest points on the corresponding holes, \B or \T). The celebrated Flow Decomposition and MaxFlow/MinCut theorems for network flows have continuous counterparts: (the support of) a flow decomposes into (thick) paths \cite{socg07}, and the maximum value of the $S$-$T$ flow is equal to the capacity of the minimum cut \cite{strang}; moreover, the mincut is defined by the shortest \B-\T path in the critical graph \cite{mitchell90}.

For shortest path blocking, the setup is a bit more elaborated. Let \s be a
point on the outer boundary of \P, and let $S^*$ be the edge containing \s. We
assume that \s is actually an infinitesimally small gap $s^-s^+$ in the
boundary of \P (with $s^-$ below \s and $s^+$ above), and that the union of the \sts and the holes is not allowed to
contain a path that starts on $S^*$ below $s^-$ and ends on $S^*$ above $s^+$,
completely cutting out \s (Fig.~\ref{fig:door}).\footnote{Other modeling
  choices could have been made; e.g, another way to avoid complete blockage
  could be to introduce a ``protected zone'' around \s \`{a} la in works on \e{geographic mincut}~\cite{alon}. Also a more generic view, outside our scope, could be to combine the flow and path problems into considering \e{minimum-cost} flows \cite{socg07,socg14} (the shortest path is the mincost flow of value~0) and explore how the \sts could influence both the capacity of the domain and the cost of the flow.} W.l.o.g.\ we treat $s^-$ and $s^+$ as vertices of~\P. Similarly, we are given a point \t, modeled as a gap $t^+t^-$ in another edge $T^*$ on the outer boundary of~\P.
\begin{figure}[tb]
\centering
\includegraphics{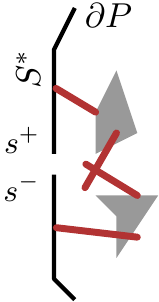}\hfil
\includegraphics[page=1]{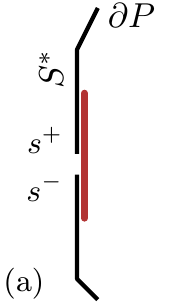}\quad\includegraphics[page=2]{door}\quad\includegraphics[page=3]{door}
\caption{Path setup; \sts are red and \s-\t path is blue. Surrounding $s^-s^+$ (left) is forbidden, even if no \st touches the gap. Completely ``shutting the door'' $s^-s^+$ with one \st (right (a)) is not allowed: if a \st is at \s, it must touch at most one of $s^-,s^+$ (right (b,c)).}\label{fig:door}
\end{figure}

Let $\geod(p,q)$ denote a shortest path (a geodesic) between points $p$ and $q$ in $P$. 
Where it creates no confusion, we will identify a path with its length; in particular, for two points $p,q$, we will use $pq$ to denote both the segment $pq$ and its length. The \e{shortest path map} from \s, denoted \spms, is the decomposition of \P into cells such that shortest paths \sps{\p} from \s to all points \p within a cell visit the same sequence of vertices of \P; 
the last vertex in this sequence is called the \e{root} of the cell and is
denoted by \rs{\p}. The shortest path map from \t (\spmt) and the roots of its
cells (\rt{p}) are defined analogously. The maps have linear complexity and can
be built in $O(n\log n)$ time (in $O(n)$ time if \P is simple)~\cite{SPsurvey}. 
Our algorithm for path blocking in a simple polygon uses:

\begin{lemma}\cite[Lemma~1]{Pollack1989}
  \label{lem:pollack}
  Let $p$, $q$, and $r$ be three points in a simple polygon $P$. The geodesic
  distance from $p$ to a point $x \in \geod(q,r)$ is a convex function of $x$.
\end{lemma}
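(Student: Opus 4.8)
The plan is to parametrize the geodesic $\sigma := \geod(q,r)$ by arc length as $\sigma:[0,L]\to P$, write $f(t):=\geod(p,\sigma(t))$, and prove $f$ is convex on $[0,L]$. Since convexity on an interval is a local property, it suffices to show $f$ is convex near each parameter, and away from finitely many parameters this is immediate: inside a cell of $\SPM(p)$ with root $w$ we have $f(t)=\geod(p,w)+|w-\sigma(t)|$, so while $\sigma$ traces a straight segment, $f$ is a constant plus the Euclidean distance from the fixed point $w$ to a point moving along a line, which is convex and has derivative $\cos\alpha(t)$, where $\alpha(t)$ is the angle at $\sigma(t)$ between $\sigma'(t)$ and the direction in which $\geod(p,\sigma(t))$ arrives at $\sigma(t)$. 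Furthermore, because shortest paths in a simple polygon are unique, every cell boundary of $\SPM(p)$ not contained in $\partial P$ is a window along which $\geod(p,\cdot)$ is in fact $C^1$ (its gradient being the unit vector along that window's supporting line). Hence the only parameters $t^*$ at which $f$ can fail to be $C^1$ are those with $\sigma(t^*)$ a vertex of $P$, and at each such $t^*$ it remains only to verify the one-sided derivatives satisfy $f'_-(t^*)\le f'_+(t^*)$.

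Fix such a $t^*$ and let $v:=\sigma(t^*)$. A geodesic turns only at a reflex vertex, and a short case analysis disposes of everything but a genuine turn: if $\sigma$ passes straight through $v$, or more generally if $\geod(p,\sigma(t))$ routes through $v$ for all $t$ on one side of $t^*$, then on that side the tail of $\geod(p,\sigma(t))$ is a sub-path of $\sigma$, so $f(t)=\geod(p,v)+|t-t^*|$ there, the one-sided slope equals $\pm1$, which is an extreme value, and $f'_-(t^*)\le f'_+(t^*)$ holds trivially. In the remaining case $\sigma$ bends at $v$ and the nearby shortest paths from $p$ avoid $v$; then, after shrinking the neighbourhood, $\geod(p,\sigma(t))$ has a root that tends to the last vertex $w$ of $\geod(p,v)$, so both one-sided limits of $\alpha$ at $t^*$ equal the angle between the corresponding forward tangent of $\sigma$ and the fixed direction $\vec u:=(v-w)/|v-w|$ in which $\geod(p,v)$ reaches $v$. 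Writing $\vec v_-$ and $\vec v_+$ for the unit tangents of $\sigma$ just before and just after $v$, the desired inequality $f'_-(t^*)\le f'_+(t^*)$ is now exactly $\vec u\cdot\vec v_-\le\vec u\cdot\vec v_+$, i.e.\ $\angle(\vec u,\vec v_+)\le\angle(\vec u,\vec v_-)$.

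The hard part will be this last inequality, which is the geometric heart of the lemma. I would prove it from the visibility structure around $v$: since $\sigma$ is a geodesic, it hugs the reflex vertex $v$, so the exterior wedge of $P$ at $v$ lies in the narrow sector spanned by $\vec v_-$ and $\vec v_+$, whereas $\geod(p,v)$ lies inside $P$ and, in the remaining case, reaches $v$ from a vertex $w$ that still directly sees points of $\sigma$ on \emph{both} sides of the bend; together these facts pin $\vec u$ into the complementary, wide sector and in fact onto its half adjacent to $\vec v_-$. An equivalent way to package the comparison is through the geodesic triangle with vertices $z,\sigma(t^*-\varepsilon),\sigma(t^*+\varepsilon)$, where $z$ is the last common vertex of $\geod(p,\sigma(t^*-\varepsilon))$ and $\geod(p,\sigma(t^*+\varepsilon))$: its interior angles at the two moving corners are $\pi-\alpha(t^*-\varepsilon)$ and $\alpha(t^*+\varepsilon)$, its angle at $z$ is some $\delta\ge0$, and at every other vertex --- necessarily a reflex vertex of $P$ --- the triangle's interior angle is at least $\pi$ by concavity of the funnel's two sides; the angle-sum identity for this simple polygon then yields $\alpha(t^*+\varepsilon)\le\alpha(t^*-\varepsilon)-\delta$, and letting $\varepsilon\downarrow0$ completes the proof. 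The one genuinely delicate ingredient, either way, is the ``which side'' claim --- that $\vec u$ lies in the correct sector, equivalently that the geodesic triangle cannot pinch into a narrow sub-wedge at a reflex vertex; everything else is bookkeeping. Granting it, $f'$ is non-decreasing across every breakpoint, so $f$ is convex.
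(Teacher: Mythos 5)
The paper does not prove this lemma; it cites it (Pollack--Sharir--Rote 1989), so there is no in-paper proof to compare against. Your argument is also different in spirit from the usual one, which establishes midpoint convexity directly via triangle inequalities in a geodesic ``funnel'' rather than by analysing one-sided derivatives.

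On the merits of your argument: the overall structure (piecewise-$C^1$, $C^1$ across windows, and breakpoints only where $\sigma$ hits a polygon vertex, so it suffices to show $f'_-\le f'_+$ there) is sound, and case (a) is fine. The difficulty is exactly where you flag it. First, a small but real issue: you assert the interior angles of the geodesic triangle at the two moving corners are $\pi-\alpha_-$ and $\alpha_+$, and that the interior angle at every other vertex is $\ge\pi$ ``by concavity of the funnel's two sides.'' But $v$ sits on the \emph{third} side of the triangle, not on a funnel side, and the formulas $\pi-\alpha_-$ and $\alpha_+$ flip to $\pi+\alpha_-$ and $2\pi-\alpha_+$ if the triangle's interior lies in the narrow sector of the bend at $v$; in that configuration the angle-sum computation yields exactly the \emph{wrong} inequality. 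So the ``which side'' claim is not a cosmetic lemma --- the entire argument inverts without it. Second, your proposed route to that claim is off: what is needed is not that $\vec u$ lies in ``the half adjacent to $\vec v_-$'' of the wide sector (if anything, the desired conclusion $\alpha_-\ge\alpha_+$ says $\vec u$ is \emph{weakly closer to} $\vec v_+$), and the informal visibility argument you gesture at (that $w$ sees both sides of the bend) does not obviously pin this down --- in fact one can set up ``near-counterexample'' configurations where $\vec u$ appears to be on the wrong side, and the only thing that saves the lemma is a global topological constraint you never invoke.

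That constraint is the clean way to close the gap, and it is worth stating: for small $\varepsilon$ the geodesic triangle on $z,\,\sigma(t^*-\varepsilon),\,\sigma(t^*+\varepsilon)$ is a simple closed curve in the simply connected polygon $P$, so the region it bounds is a subset of $P$; if the interior angle at $v$ were the narrow angle $\pi-\gamma$, the triangle's interior near $v$ would contain the full narrow wedge and hence the exterior wedge of $P$ at $v$, which is impossible. That forces the interior angle at $v$ (and at every other reflex vertex on the boundary) to be $\ge\pi$, after which the angle-sum bookkeeping gives $\alpha_+\le\alpha_--\delta$ as you wanted. Without an argument of this kind (or an equivalent one), the proof as written has a genuine gap at precisely the step you call ``the geometric heart of the lemma.''
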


Finally, let $E(u,v,p)$ denote the ellipse with foci $u$ and $v$, going through the point $p$. It is well known that the sum of distances to the foci is constant along the ellipse; for the points outside (resp.\ inside) the ellipse, the sum is larger (resp.\ smaller) than $up+pv$. It is also well known that the tangent to the ellipse at $p$ is perpendicular to the bisector of the angle $upv$ (the light from $u$ reaches $v$ after reflecting from the ellipse at $p$).

\section{Linear-time algorithms for simple polygons}\label{sec:main}

In this section \P is a simple polygon. For a set $X\subset P$, let
$\geod_X(p,q)$ denote the shortest path between points $p,q$ in $\P\setminus X$
(and the length of the path), i.e., the shortest $p\textrm-q$ path avoiding
$X$. We first consider finding the most vital unit \st for the shortest path,
i.e., finding the unit segment $ab$ maximizing $\geod_{ab}(\s,\t)$. For the
path blocking, we (re)define the bottom \B and top \T of \P as the
$t^-\textrm-s^-$ and $s^+\textrm-t^+$ parts of $\partial\P$ resp. (which mimics
the flow setup, replacing the entrance $S$ and exit $T$ with $s^-s^+$ and
$t^-t^+$). We will treat $s^-, s^+, t^-$, and $t^+$ as vertices of $P$.
We then prove that a most vital \st is placed at a vertex of \P
(Section~\ref{sec:atvertices}). We focus on placing the \st at (a vertex of) \B; placing at \T is symmetric. In Section~\ref{sub:blocking} we test whether
it is possible for any unit \st $ab$ touching \B to also touch \T (while
not lying on $S^*$ or $T^*$): if this is possible, the \st separates \s from \t
completely and $\geod_{ab}(s,t)=\infty$. We test this by computing the
Minkowski sum of \B with a unit disk \val[wonders]{would it suffice to just take the unit disks centered on vertices of \B? would make things little easier (no need for Chin, Snoeyink and Wang e.g.)}
and intersecting the resulting shape with \T, taking special care around \s and \t
(to disallow having $ab\subset S^*$). In Section~\ref{sub:placing} we then
proceed to our main technical contribution: showing how to optimally place a
\st touching (a vertex of) \B given that no such \st can simultaneously
touch \T. For this, we compute the shortest \s-\t path $H$ around the Minkowski
sum of \B with the unit disk and argue that an optimal \st will have one
endpoint on (a vertex of) \B and the other endpoint on $H$. Furthermore, we
show that this path $H$ intersects edges of the shortest path maps \spms and \spmt only
linearly many times. We subdivide $H$ at these intersection points, and show
that for each edge $e$ of $H$ we can then calculate the optimal placement of a
point on $e$ maximizing the sum of distances to \s and~\t. This gives us a
linear-time algorithm for finding a single most vital
\st. In Section~\ref{sec:many_simple} we then show that even if we have
multiple \sts, it is best to glue the \sts
together into a single super-\st.


\subsection {A most vital \st is ``rooted'' at a vertex of \P}\label{sec:atvertices}

We first make some observations about potentially optimal placements. 

\begin{lemma}
  \label{lem:touch}
  A most vital \st touches $\partial\P$.
\end{lemma}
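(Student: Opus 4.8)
The plan is to argue by contradiction: suppose an optimal unit \st $ab$ does not touch $\partial\P$, and show that we can translate it to touch the boundary without ever decreasing $\geod_{ab}(\s,\t)$. The intuition is that $\geod_{ab}(\s,\t)$ only depends on $ab$ insofar as $ab$ blocks some shortest $\s$--$\t$ paths; pushing $ab$ ``outward'' toward $\partial\P$ can only block more, never fewer, homotopy classes of paths, so the optimal detour length is monotone.

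First I would set up the relevant geometry. Fix an optimal \st $ab$ with $ab\cap\partial\P=\emptyset$, and let $\gamma=\geod_{ab}(\s,\t)$ be a shortest $\s$--$\t$ path avoiding $ab$; we may assume $\gamma$ is finite (otherwise $ab$ already separates $\s$ from $\t$, and by a compactness/limiting argument any sufficiently nearby position, in particular one obtained by a small translation toward $\partial\P$, still separates them, giving a non-interior optimal \st). Since $ab$ is an open segment in the interior of the simple polygon $\P$ not meeting $\gamma$, the complement $\P\setminus ab$ is still connected and $\gamma$ lies in one ``side.'' The key observation is that there is a direction $d$ such that translating $ab$ by a small amount $\varepsilon$ in direction $d$ keeps $ab$ disjoint from $\gamma$: indeed, $ab$ and $\gamma$ are disjoint compact sets, so they are separated by a positive distance, and any translation of length less than that distance preserves disjointness. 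Hence the translated \st $a'b'$ still satisfies $\geod_{a'b'}(\s,\t)\le \gamma = \geod_{ab}(\s,\t)$ — wait, we need the opposite inequality, so I would instead argue the following monotonicity directly.

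The clean way is this. Let $a'b'$ be any translate of $ab$, and consider sliding continuously from $ab$ to $a'b'$. I claim $\geod(\cdot)(\s,\t)$, as a function of the translation, is continuous, so it suffices to show it cannot strictly decrease as we push toward the boundary; equivalently, I will exhibit a specific translation direction that (weakly) increases it. Take the shortest path $\gamma=\geod_{ab}(\s,\t)$. Since $ab$ is in the interior and $\gamma$ is at positive distance from it, translate $ab$ toward $\gamma$ — more precisely, in a direction that moves $ab$ off to the opposite side, i.e., pick the side of $\gamma$ containing $ab$ and translate $ab$ away from $\gamma$ until it first touches $\partial\P$ or until $\gamma$ would be crossed; since $\gamma$ joins $\s,\t\in\partial\P$, sliding $ab$ in the direction away from $\gamma$ keeps $\gamma$ a valid avoiding path at every intermediate position, so $\geod(\cdot)(\s,\t)$ stays $\le \gamma$. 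That is the wrong monotonicity again — so the honest statement, and the one I would actually prove, is: there is \emph{some} boundary-touching translate whose detour value is \emph{at least} that of $ab$. For this, I would instead use the following swap argument. Among all unit \sts achieving the optimum value, pick one, $ab$, that is closest to $\partial\P$ (such a closest one exists by compactness of the space of placements and continuity of $\geod_{(\cdot)}(\s,\t)$). If it still does not touch $\partial\P$, then a small ball around $ab$ lies in the interior, and within that ball the value function is continuous; I claim it is not locally constant toward the boundary unless… — rather than chase this, I would finish with the cleanest route:

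\textbf{Clean argument.} Let $ab$ be optimal and suppose $ab\cap\partial\P=\emptyset$. The function $f(v)=\geod_{ab+v}(\s,\t)$ on the set of valid translations $v$ is continuous and upper semicontinuous in the limit $v\to\infty$ in the usual sense (values can jump up to $\infty$ when $ab+v$ becomes separating, but never jump down). I would translate $ab$ along the perpendicular bisector direction of the ``bottleneck,'' but since that is delicate, the real key step — and the main obstacle — is to show monotonicity of the detour length under a well-chosen translation. I expect the intended argument is: the detour $\geod_{ab}(\s,\t)$ equals the length of the shortest path in $\P$ that is forced to ``go around'' $ab$ on a particular side; translating $ab$ toward $\partial\P$ on that side strictly shrinks the free region on that side and so the forced detour can only get longer (or the \st becomes separating, value $\infty$). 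Making ``the side the detour goes around'' precise — using that in a simple polygon the avoiding geodesic wraps $ab$ at one of its two endpoints $a$ or $b$ — and checking that translating toward the boundary on that side never creates a new shortcut on the other side, is the crux; everything else (continuity of $\geod_{(\cdot)}(\s,\t)$, existence of the limiting boundary-touching position by compactness) is routine. So the main obstacle is the monotonicity-under-translation lemma, which I would prove by a homotopy argument: any $\s$--$\t$ path avoiding the translated \st, pulled taut, also avoids the original \st, hence the original detour is $\le$ the translated one.
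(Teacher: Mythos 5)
Your proposal circles the right idea (find a translation that does not decrease the detour and push to the boundary) but never closes the gap, and the one concrete step you finally commit to is false. You claim at the end that ``any $\s$--$\t$ path avoiding the translated \st, pulled taut, also avoids the original \st.'' This is not true: once you translate $ab$ to $a'b'$, the original location of $ab$ becomes free space again, so a geodesic in $\P\setminus a'b'$ can freely pass through where $ab$ used to be. More fundamentally, the wished-for monotonicity (``pushing toward $\partial\P$ can only lengthen the detour'') fails: if the current shortest path bends around, say, endpoint $a$, translating the \st outward may let the path switch to bending around $b$ and become shorter. You keep rediscovering this when you note you get ``the wrong inequality.''

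The missing ingredient, which the paper supplies, is a local optimality argument rather than a global monotonicity claim. At an interior optimum there must be \emph{two} equal-length shortest $\s$--$\t$ paths, $\pi_a$ through $a$ and $\pi_b$ through $b$; otherwise one could slide $ab$ along its supporting line to lengthen the unique geodesic. Given both are tight, the paper constructs, for each of $a$ and $b$, the tangent half-plane to the ellipse $E(\rs{\cdot},\rt{\cdot},\cdot)$ through that endpoint; moving the endpoint into its half-plane strictly increases $\rs{\cdot}\,\cdot+\cdot\,\rt{\cdot}$. Two half-planes always share a direction, so there is a translation of the rigid segment $ab$ that strictly lengthens \emph{both} $\pi_a$ and $\pi_b$, contradicting optimality. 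The paper also checks the degenerate case where a root changes across a shortest-path-map edge (the two relevant tangents coincide because the two ellipses share the same tangent line at the point). Your proposal has neither the two-paths observation nor the two-half-planes argument, and its homotopy-based substitute does not hold, so the proof does not go through.
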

\begin{proof}Suppose that a most vital \st $ab$ does not touch
  $\partial\P$. Clearly, there must be two (equal-length) shortest \s-\t paths, $\pi_a$ and
  $\pi_b$, going through $a$ and $b$ resp., for otherwise the shortest path
  length can be increased by shifting $ab$ along its supporting line. We argue that there is always a direction in which $ab$ can be translated so that the lengths of both $\pi_a$ and $\pi_b$ increase.

\begin{figure}[tb]
\centering\includegraphics{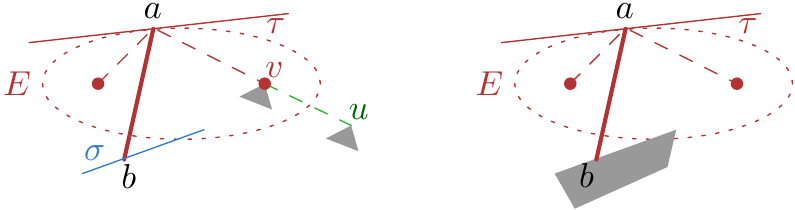}\caption{Red dots are \rs{a} and \rt{a}. Left: It is always possible to translate $ab$ so that $a$ moves into $\tau^+$ and $b$ moves into $\sigma^+$. The tangent will not change when \rt{a} jumps from $v$ to $u$. Right: Sliding $b$ to the right lengthens $\pi_a$}\label{fig:inf}
\end{figure}

  Consider the ellipse $E=E(\rs{a},\rt{a},a)$ through $a$; let $\tau$ be the
  tangent to $E$ at $a$, and let $\tau^+$ be the (closed) halfplane that does
  not contain $E$ (Fig.~\ref{fig:inf}, left). In order to increase
  $\rs{a}a+a\rt{a}$, the \st should be moved so that $a$ moves into
  $\tau^+$. Similarly, let $\sigma^+$ be the closed halfplane, moving $b$ into
  which increases $\rs{b}b+b \rt{b}$ (the halfplane is defined by the tangent
  $\sigma$, at $b$, to the ellipse $E(\rs{b},\rt{b},b)$). There is a direction
  $\theta$ so that the rays in direction $\theta$ starting in $a$ and $b$,
  respectively, are contained in the corresponding half-planes. Hence, we can
  translate $ab$ in this direction so that \e{both} $\rs{a}a+a\rt{a}$ and
  $\rs{b}b+b\rt{b}$ increase. Thus, if none of the roots \rs{a}, \rt{a},
  \rs{b}, \rt{b} changes as the \st is translated, the length of both $\pi_a$
  and $\pi_b$ increases.

It remains to deal with the case in which one of the four roots would change during the infinitesimal translation (say, \rt{a} changes from a vertex $v$ to a vertex $u$) -- i.e., when $a$ belongs to the line $uv$. Let $\tau_v,\tau_u$ be the tangents at $a$ to $E(\rs{a},v,a),E(\rs{a},u,a)$ resp. (refer to Fig.~\ref{fig:inf}, left). Can it be the case that the directions inside $\tau_v^+,\tau_u^+$ and $\sigma^+$ do not have a common direction, i.e., that the good translations defined by $\tau_u$ (the ones increasing $\rs{a}a+au$) are incompatible with those defined by $\tau_v$ (so that $ab$ would be stuck with $a$ on the line $uv$ because the path length would increase both when moving from the cell of $v$ into the cell of $u$ and vice versa)? The answer is no, because $\tau_v=\tau_u$: the former is perpendicular to the bisector of the angle $\rs{a}av$ and the latter is perpendicular to the bisector of the angle $\rs{a}au$ -- which are the same angle.\end{proof}

\begin{lemma}\label{lem:piab}
  A vertex of a most vital \st touches $\partial\P$.
\end{lemma}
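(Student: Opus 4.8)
The plan is to argue by contradiction, in the spirit of Lemma~\ref{lem:touch}. Suppose a most vital \st $ab$ has neither endpoint on $\partial\P$. By Lemma~\ref{lem:touch} the segment $ab$ still meets $\partial\P$, so fix a point $c\in ab\cap\partial\P$; since $a$ and $b$ lie in the open interior of $\P$, $c$ lies in the relative interior of $ab$. Such a point cannot be a relative-interior point of an edge (there $ab$ would have to run along that edge, contributing nothing to the blocking, which a short direct argument rules out), nor a convex vertex (a wedge of angle $<\pi$ cannot contain two opposite directions, so $ab$ would leave $\P$ near $c$); hence $c$ is a \emph{reflex} vertex of $\P$. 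The point of this is that the set $G_c$ of directions in which $ab$ may be translated while keeping its point at $c$ inside $\P$ is a cone of angular width strictly greater than $\pi$ — namely the complement of the exterior wedge at $c$, whose angle is $2\pi$ minus the (reflex) interior angle of $\P$ at $c$, and hence less than $\pi$. As the rest of $ab$ lies in the open interior of $\P$, for sufficiently small motions these translations, together with all rotations about $c$, are exactly the feasible ones. (If $ab$ touches $\partial\P$ at several reflex vertices, all necessarily on the line through $ab$, one works with the intersection of the corresponding cones; the rest of the argument is unchanged.)

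We may assume the shortest $\s$–$\t$ path avoiding $ab$ bends around $a$, or around $b$, or both, for otherwise $ab$ is inert and the claim is immediate. Let $\pi_a$ be a shortest such path bending around $a$, of length $L_a$; let $u,v$ be the vertices of $\P$ neighbouring $a$ on $\pi_a$; let $g_a$ be the outward normal at $a$ of the ellipse $E(u,v,a)$; and put $H_a=\{x:\langle x,g_a\rangle\ge 0\}$, so that displacing $a$ by a vector in the open half-plane $H_a^{\circ}$ raises $L_a$ to first order. Define $\pi_b,L_b,g_b,H_b$ analogously if a shortest path bends around $b$. The ingredient that goes beyond Lemma~\ref{lem:touch} is a \emph{second-order} gain: the ellipse $E(u,v,a)$ is strictly convex (because $\pi_a$ genuinely turns at $a$, i.e.\ $a$ does not lie on the segment $uv$), so its tangent line at $a$ meets it only at $a$ and otherwise lies outside it; hence displacing $a$ by \emph{any} nonzero vector of $\partial H_a$ also strictly raises $L_a$. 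Consequently, any feasible perturbation whose induced displacements satisfy $\delta a\in H_a\setminus\{0\}$ and $\delta b\in H_b\setminus\{0\}$ strictly increases $\min(L_a,L_b)$, and hence — by the same stability-of-shortest-paths reasoning used in the proof of Lemma~\ref{lem:touch} (including the handling of root changes) — strictly increases $\geod_{ab}(\s,\t)$, contradicting optimality.

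It therefore remains to produce such a perturbation inside the feasible cone. If only $\pi_a$ is taut this is immediate: translate by any $t\in H_a\cap G_c\setminus\{0\}$, which is nonempty since a half-plane of directions cannot lie inside the width-$<\pi$ complement of $G_c$. If both $\pi_a$ and $\pi_b$ are taut, combine a translation by $t\in G_c$ with a rotation of $ab$ about $c$ through a small angle $\phi$: with $\lambda=|ac|$ and $w$ a unit vector perpendicular to $ab$, this moves $a$ by $\delta a=t+\phi\lambda w$ and $b$ by $\delta b=t-\phi(1-\lambda)w$, and a short case analysis — on whether $\langle w,g_a\rangle$ and $\langle w,g_b\rangle$ vanish and, if not, whether they have the same or opposite sign — shows that $t$ and $\phi$ can be chosen with $\delta a\in H_a\setminus\{0\}$ and $\delta b\in H_b\setminus\{0\}$. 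The single case that truly requires the second-order gain is $g_a\parallel ab\parallel g_b$: there $g_a$ and $g_b$ necessarily point along $ab$ in opposite, outward directions, $H_a\cap H_b$ is exactly the line spanned by $w$, and it suffices to translate by $sw$ with $s\in\{+1,-1\}$ chosen so that $sw\in G_c$ — possible because $G_c$ is wider than a half-plane and hence contains $w$ or $-w$ — which keeps $c$ in $\P$ while sliding each of $a$ and $b$ along the tangent of its ellipse, a second-order increase of both $L_a$ and $L_b$.

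Any of these perturbations contradicts the most-vitality of $ab$, so a vertex of $ab$ lies on $\partial\P$. I expect the bulk of the work — and the main obstacle — to be the last paragraph: verifying, for every relative orientation of $g_a$ and $g_b$, that an improving perturbation survives the constraint imposed by the reflex vertex $c$. This is exactly where first-order improvements can be thwarted by a sufficiently narrow exterior wedge at $c$, and where the second-order gain (sliding an endpoint along its ellipse's tangent, using that the tangent line of a strictly convex ellipse lies outside it) becomes essential.
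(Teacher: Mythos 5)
Your approach departs substantially from the paper's and, as written, has a gap that sits exactly in the case the paper treats as central. The paper's proof first argues that for the optimal placement the shortest $\s$-$\t$ path $\pi_{ab}$ must bend at \emph{both} $a$ and $b$ consecutively (i.e., it has $ab$ as an edge), and then considers only translations of $ab$ along its own supporting line. Such a translation automatically keeps the contact point $c\in ab\cap\partial\P$ on that line, so no feasibility cone ever enters the picture; the path length is then $\geod(\s,a)+|ab|+\geod(b,\t)$, whose derivative along the supporting line is $\cos(\angle\,\rs{a}ab)-\cos(\angle\, ab\rt{b})$, and a first/second-order analysis finishes. You instead analyze $\pi_a$ and $\pi_b$ via the ellipses $E(u,v,a)$ and the cone $G_c$ of feasible translations at a reflex vertex $c$, combining translation with rotation.

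The problem is your definition ``let $u,v$ be the vertices of $\P$ neighbouring $a$ on $\pi_a$.'' When the shortest path goes $\cdots\to u\to a\to b\to w\to\cdots$ (which the paper shows is precisely what happens), $b$ is not a vertex of $\P$, so your $v$ is $w$ and your ellipse is $E(u,w,a)$ — but the path does not go from $a$ straight to $w$; it goes via $b$, and $b$ moves rigidly with $a$. For a pure translation with fixed orientation of $ab$, the level sets of the path length as a function of $a$ are ellipses with foci $u$ and the \emph{translate} $w-(b-a)$ of $w$, not with foci $u$ and $w$. Hence your normal $g_a$ (the outward bisector of $\angle uaw$) and halfplane $H_a$ are the wrong objects: a displacement $\delta a\in H_a\setminus\{0\}$ need not increase $L_a$ at all. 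The same issue affects $g_b,H_b$. Since this is exactly the case you most need (you yourself list ``or both''), the proof as stated does not go through. Separately, even the cases you do cover are carried by machinery (the feasibility cone $G_c$, the rotation-plus-translation decomposition, the second-order tangent argument) that the paper sidesteps entirely by restricting to translations along $ab$'s own line, for which feasibility is automatic; it would be worth noting that you are paying this price only because you chose a more general family of perturbations.
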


\begin{proof}Suppose that none of $a,b$ touches the boundary (so $ab$ touches $\partial\P$ with a point interior to the \st). Clearly, the shortest \s-\t path, $\pi_{ab}$, must go through \e{both} $a$ and $b$, for otherwise the shortest path can be lengthened by moving the \st; also, $\pi_{ab}$ must make same-direction turns (cw or ccw) at both $a$ and $b$, for otherwise the shortest \s-\t path may bypass the \st altogether (Fig.~\ref{fig:piab}, left and middle). We claim that it is always possible to move $ab$ along its supporting line (i.e., keeping the contact with $\partial\P$) increasing the length of $\pi_{ab}$. Indeed, if one of the angles $\rs{a}ab,ab\rt{b}$ is obtuse and the other is acute (Fig.~\ref{fig:piab}, right), then moving in the direction of the acute angle increases both $\rs{a}a$ and $\rt{b}b$ (we assume that the path visits $a$ before $b$). If both angles are acute, then, as can be easily seen by differentiation, the derivative of the path length w.r.t.\ the shift of $ab$ along its supporting line is $\cos(\rs{a}ab))-\cos(ab\rt{b})\ne0$ unless the angles are equal; however, if the angles \e{are} equal, the length is at the minimum (which, again, can be seen by differentiation). The case of both angles being obtuse is similar.\end{proof}
\begin{figure}
\centering
\includegraphics{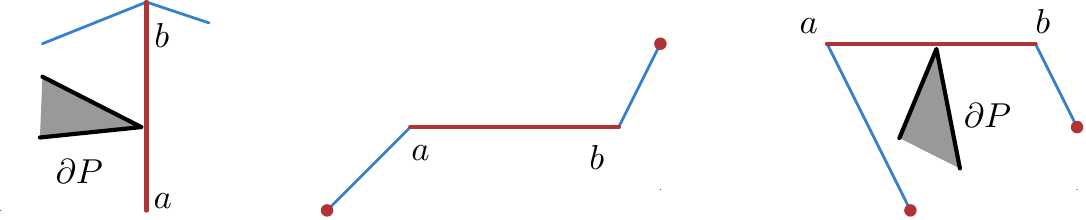}
\caption{The path is blue, red dots are \rs{a} and \rt{b}. Left: If the path bends only on $b$, moving $ab$ up lengthens $\pi_b$. Middle: If the turns are different, $\pi_{ab}$ is not pulled taut. Right: Moving $ab$ left lengthens $\pi_{ab}$}\label{fig:piab}
\end{figure}



\begin{lemma}
  \label{lem:atvertices}
  There exists a most vital barrier $ab$ in which one endpoint, say $b$, lies
  on a vertex of \P.
\end{lemma}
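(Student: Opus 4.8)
The plan is to build on Lemmas~\ref{lem:touch} and~\ref{lem:piab}. By Lemma~\ref{lem:piab} we may assume that one endpoint of a most vital \st $ab$—say $a$—already touches $\partial\P$, but $a$ need not be a \emph{vertex} of \P; it may lie in the relative interior of an edge $f$ of \P. The goal is to show that among all most vital \sts we can find one whose touching endpoint is a vertex. So suppose $a$ lies in the interior of edge $f$ and $b$ does not touch $\partial\P$ (the case where $b$ also touches $\partial\P$ is handled as in Lemma~\ref{lem:piab}, or we simply invoke that lemma's argument to reduce to this configuration). As before, the shortest \s-\t path $\pi_{ab}$ must pass through $b$ (otherwise translate the \st to lengthen it), and it bends at $b$, turning toward the side of the supporting line of $ab$ where $f$ lies.

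First I would set up the local picture: let $\rs{b}$ and $\rt{b}$ be the roots of the cells of \spms, \spmt containing $b$, so that near the optimum $\geod_{ab}(\s,\t) = \rs{b}s\text{-part} + \rs{b}b + b\rt{b} + \rt{b}t\text{-part}$, i.e., up to additive constants the quantity to maximize is $\rs{b}b + b\rt{b}$, the sum of distances from $b$ to the two foci $\rs{b},\rt{b}$. The constraint is that $b$ is at unit distance from $a\in f$, i.e., $b$ ranges over a unit-radius circular arc as $a$ slides along $f$, and also $\pi_{ab}$ must genuinely be forced around $b$ (the turn condition). The idea is to slide $a$ along $f$ toward one of its endpoints and argue that the objective $\rs{b}b + b\rt{b}$ does not decrease, so we can push $a$ all the way to a vertex of $f$ without losing optimality; at that vertex, $b$ becomes the free endpoint and the touching endpoint is a polygon vertex, giving the claim after relabeling.

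The key computation is the derivative of $\rs{b}b + b\rt{b}$ as $a$ moves along $f$ with $b$ constrained to stay at unit distance and on the correct side. As $a$ translates along $f$ by an infinitesimal amount, $b=a+\vec u$ where $\vec u$ is the unit vector from $a$ to $b$; since $|ab|=1$ is fixed, $b$ moves by the component of $a$'s motion perpendicular to $ab$ (a rotation of $\vec u$), i.e., $b$ traces a direction perpendicular to $ab$. Consider the ellipse $E(\rs{b},\rt{b},b)$ and its tangent $\sigma$ at $b$; the objective increases iff $b$ moves into the open halfplane $\sigma^+$ not containing the ellipse. So I need: the direction perpendicular to $ab$ in which $b$ is forced to move (by the turn/taut condition on $\pi_{ab}$) points into $\sigma^+$, at least for one of the two choices of endpoint of $f$ toward which we slide $a$. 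This should follow because $\pi_{ab}$ is taut and bends at $b$ around the \st: the segment $ab$ pushes $\pi_{ab}$, meaning $\vec u$ points from the "inside" of the bend toward $b$, and the bisector of angle $\rs{b}b\rt{b}$ points along $\vec u$ (or $b$ would not be a genuine turn vertex of the taut path)—so $\sigma$, being perpendicular to that bisector, is perpendicular to $\vec u$, i.e., $\sigma$ is parallel to $ab$; then the two perpendicular-to-$ab$ directions are exactly $\pm$ the outward normal of $\sigma$, and the one pointing away from the ellipse lies in $\sigma^+$. Hence \emph{whenever} sliding $a$ in that direction keeps $b$ on the feasible arc and $f$ has room, the objective strictly increases—contradicting optimality unless $a$ is already at an endpoint of $f$, i.e., a vertex of \P.

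The main obstacle I anticipate is bookkeeping around the degenerate cases: (i) when sliding $a$ toward one endpoint of $f$ would make $b$ move in the \emph{bad} direction while sliding toward the other endpoint moves it the good way—one must check that at least one direction is both objective-improving and feasible, and handle the case where the objective-improving direction is blocked because $a$ has reached a vertex of $f$ (which is exactly the desired conclusion); (ii) when a root $\rs{b}$ or $\rt{b}$ changes as $b$ moves, which is handled exactly as in Lemma~\ref{lem:touch}: the tangent $\sigma$ does not jump because consecutive roots $u,v$ lie on a line through $b$ so the two ellipses share the tangent at $b$ (the bisected angles coincide); and (iii) ensuring $\pi_{ab}$ remains the true shortest path (not bypassed) throughout the infinitesimal move, which holds for small enough perturbation by continuity of geodesic lengths and the fact that at the start $\pi_{ab}$ was strictly shorter than any path on the other side of the \st. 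Once these are dispatched, sliding terminates with the touching endpoint at a vertex, and relabeling that endpoint as $b$ proves the lemma.
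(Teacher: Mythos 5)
Your high-level strategy matches the paper's: slide the touching endpoint along the incident edge $f$ toward a vertex and argue, by a first-order/ellipse-tangent computation, that some direction of the slide does not decrease the detour. (The paper's proof is a one-liner in the same spirit: translate the \st \emph{rigidly} parallel to $f$ and observe that one of the two directions moves the free endpoint into $\tau^+$.) However, the perturbation analysis in your write-up has errors that make the derived improving direction wrong. Differentiating $|ab|=1$ gives $(db-da)\cdot\vec u=0$, so $db$ and $da$ agree in their component \emph{along} $ab$, while the component of $db$ perpendicular to $ab$ is the free rotation parameter. Your statement ``$b$ moves by the component of $a$'s motion perpendicular to $ab$'' therefore describes a motion that violates the rigid-length constraint. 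Separately, from ``the bisector is along $\vec u$'' and ``$\sigma\perp$ bisector'' you should get $\sigma\perp ab$, not ``$\sigma$ is parallel to $ab$'' as written; with the correct orientation, the perpendicular-to-$ab$ directions lie \emph{on} the tangent line $\sigma$ and give zero first-order change in $\rs{b}b+b\rt{b}$, not a strict increase.

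Once you correct both slips, the genuine improving direction for $b$ is \emph{along} $ab$ (the outward normal of $\sigma$), and by the corrected constraint this is realized by a slide $da$ along $f$ with $da\cdot\vec u>0$ --- which exists only when $f$ is not perpendicular to $ab$. Your argument never confronts that degenerate case, so as written it does not close (nor, to be fair, does the paper's sentence explicitly handle its analogous case $f\parallel\tau$; both need convexity of $\geod(s,\cdot)+\geod(\cdot,t)$ along a line, as in Lemma~\ref{lem:pollack} / Lemma~\ref{lem:geodesically_convex}, to cover it). Finally, the justification you give for the bisector normalization (``or $b$ would not be a genuine turn vertex'') is too weak: the taut-path condition only forces $-\vec u$ into the reflex wedge of the bend at $b$, a one-parameter range; what actually forces the bisector to align with $ab$ is the first-order optimality of rotating the \st about $a$, and this should be said explicitly. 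With those repairs the proposal becomes a valid (slightly more detailed) variant of the paper's argument.
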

\begin{proof}
Suppose that the \st touches the \e{interior} of an edge of \P (Fig.~\ref{fig:inf}, right). Then the shortest \s-\t path $\pi_a$ through $a$ may be lengthened by translating the \st parallel to itself while sliding $b$ along the edge---the argument is analogous to the one in the proof of Lemma~\ref{lem:touch}: one of the two possible translation directions moves $a$ inside the halfplane~$\tau^+$.
\end{proof}






\subsection{Blocking the path from $s$ to $t$ completely}
\label{sub:blocking}

We now argue that we can check in linear time whether it is possible to completely block passage from $s$ to $t$, by placing a \st that connects \B to \T (without placing the \st along $S^*$ or $T^*$, which is forbidden by our model; see Section~\ref{sec:prelim}).

\begin{observation}
  \label{obs:same_direction_B}
  Let $u$ and $v$ be two vertices of $\geod(s,t)$ in \B. The geodesic makes a
  right turn at $u$ if and only if it makes a right turn at $v$. Let $u'$ and
  $v'$ be two vertices of $\geod(s,t)$ in \T. The geodesic makes a left turn
  at $u'$ if and only if it makes a left turn at $v'$. Moreover, if
  $\geod(s,t)$ makes a right turn in $u$ then it makes a left turn in $u'$.
\end{observation}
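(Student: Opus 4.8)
The plan is to read the turn directions off the way $\geod(s,t)$ splits the polygon. Write $\gamma:=\geod(s,t)$; it is a simple polygonal path in $P$ all of whose vertices are reflex vertices of $P$, and it cuts $P$ into two regions --- let $P_\B$ be the one whose boundary contains $\B$ and $P_\T$ the one whose boundary contains $\T$. First I would record the local picture at a vertex $w$ of $\gamma$: by the tautness of shortest paths (the same fact used in the proof of Lemma~\ref{lem:touch}), $\gamma$ bends at $w$ so that the convex wedge at $w$ that is \emph{not} contained in $P$ --- the ``notch'' --- lies on the concave side of the turn, while the two edges of $\gamma$ incident to $w$ lie outside that notch (so the notch sits inside the reflex wedge spanned by these two edges).

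Next I would identify which of $P_\B,P_\T$ is on the concave side of the turn at $w$. If $w$ lies on $\B$, then (generically) the two edges of $\partial P$ at $w$ both belong to $\B$, so the piece of $P\setminus\gamma$ lying on the concave side of the turn near $w$ is exactly the thin region(s) squeezed between $\gamma$ and those two boundary edges; this region is bounded in part by $\B$, hence lies in $P_\B$, and since $P\setminus\gamma$ has only two components the convex side lies in $P_\T$. Thus at every vertex of $\gamma$ on $\B$ the path turns \emph{toward} $P_\B$; symmetrically, at every vertex on $\T$ it turns toward $P_\T$. Now I would glue the local pictures: along each maximal straight sub-segment of $\gamma$ (where $\gamma$ does not meet $\partial P$) the side of $\gamma$ on which $P_\B$ lies is trivially constant, and the previous paragraph shows that passing through a vertex also does not swap this side (the $P_\B$-side is the concave side both just before and just after a $\B$-vertex, and the convex side at a $\T$-vertex). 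Hence $P_\B$ lies on one fixed side of $\gamma$ --- say, walking from $s$ to $t$, always on the right. Then ``turning toward $P_\B$'' means ``turning right'' at every vertex of $\gamma$ on $\B$, and ``turning toward $P_\T$'' means ``turning left'' at every vertex on $\T$; this gives both ``if and only if'' statements, and ``right'' and ``left'' being opposite gives the ``moreover''. The only degenerate case, $w\in\{s^-,s^+,t^-,t^+\}$, where one incident edge of $\partial P$ may lie on $\B$ and the other on $\T$, is handled by the same argument with ``$\B$'' replaced by whichever boundary edge bounds the concave-side sliver.

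I expect the one point needing care to be the claim that $P_\B$ stays on a single side of $\gamma$ all the way from $s$ to $t$: this is intuitively clear since $\gamma$ is a simple arc separating the simply connected region $P$, but one must be slightly careful at the finitely many points where $\gamma$ touches $\partial P$ (and where $P_\B$ may even be pinched). As indicated, the local analysis at those contact vertices is precisely what closes that gap; everything else is immediate from the tautness of geodesics.
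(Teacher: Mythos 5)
The paper records this as an \emph{Observation} and gives no proof at all, so there is no ``paper's argument'' to compare against; your write-up is therefore supplying the justification that the authors took to be immediate. Your overall strategy is the right one: show that $P_\B$ lies on one fixed side of $\gamma=\geod(s,t)$, observe that at a reflex vertex $w$ the taut path turns so that the excluded ``notch'' is on the inside (concave side) of the turn, and note that when $w\in\B$ the portion of $P\setminus\gamma$ on that side is a sliver bounded by $\B$, hence in $P_\B$, while when $w\in\T$ it is bounded by $\T$. Combined with the consistency of the side along $\gamma$ (including across pinch points, which you handle), this gives both equivalences and the ``moreover''.

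There is, however, one incorrect parenthetical that should be fixed: you write ``so the notch sits inside the \emph{reflex} wedge spanned by these two edges [of $\gamma$].'' This is the wrong wedge, and it also does not follow from the premise (that the two path edges avoid the notch tells you the notch is in one of the two wedges, not which one). Tautness forces the notch to sit inside the \emph{convex} wedge of $\gamma$ at $w$: if the notch were in the reflex wedge, the entire convex wedge would lie in $P$, and the corner at $w$ could be shortcut through it, contradicting that $\gamma$ is locally shortest. Fortunately the sentence preceding the parenthetical (``the notch lies on the concave side of the turn'') already states the correct fact, and everything afterward uses that correct picture, so the slip does not propagate. I would simply delete or correct the parenthetical. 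With that repair, the proof is sound and matches the intended geometric content that the paper takes for granted.
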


Assume without loss of generality that $\geod(s,t)$ makes a right turn at a
vertex $u \in \B$. By Observation~\ref{obs:same_direction_B} it thus makes
right turns at all vertices of $\geod(s,t)\cap \B$, and left turns at all
vertices of $\geod(s,t) \cap \T$.

\begin{observation}
  \label{obs:same-direction}
  If $\geod(s,t)$ makes a right turn at $u \in \B$, and we place a \st
  $ur$ at $u$, then $\geod_{ur}(s,t)$ makes a right
  turn at $r$.
\end{observation}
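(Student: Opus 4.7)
The plan is to reduce this observation to Observation~\ref{obs:same_direction_B} by viewing the \st $ur$ as an extension of $\B$ into the polygon and exploiting the resulting consistency of geodesic turn directions against a single boundary arc.

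First I would thicken $ur$ into a narrow sliver $\sigma_\delta$ of width $\delta>0$, glued to $\B$ at $u$, and work in the modified simple polygon $P_\delta=\P\setminus\sigma_\delta$. Its outer boundary contains an extended ``bottom'' $\B_\delta = \B \cup \partial\sigma_\delta$ running from $t^-$ to $s^-$: along $\B$ up to the base of the sliver, around one side of $\sigma_\delta$, around its tip (which contributes two reflex vertices that both tend to $r$ as $\delta\to 0$), back along the other side, and then along the rest of $\B$. Because the sliver is attached to $\B$, $P_\delta$ remains simply connected and $\B_\delta$ is a single simple arc of $\partial P_\delta$.

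Next I would argue that Observation~\ref{obs:same_direction_B} transfers verbatim to $(P_\delta,\B_\delta)$: its proof relies only on $\B$ being a simple arc of the boundary of a simple polygon against which the $\s$-$\t$ geodesic is pulled taut, and both hypotheses still hold for $\B_\delta$ in $P_\delta$. So all turns of $\geod_{P_\delta}(\s,\t)$ at reflex vertices of $\B_\delta$ share the same chirality. For $\delta$ small enough, $\geod_{P_\delta}(\s,\t)$ agrees with $\geod(\s,\t)$ away from $\sigma_\delta$ and in particular still makes a right turn at $u$; therefore its turns at the tip vertices of $\sigma_\delta$ are also right turns. Letting $\delta\to 0$, the geodesics $\geod_{P_\delta}(\s,\t)$ converge to $\geod_{ur}(\s,\t)$, the two tip vertices coalesce into $r$, and the turn at $r$ inherits the right orientation.

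The main obstacle is justifying the transfer step: Observation~\ref{obs:same_direction_B} is stated only for the original polygon, so one has to verify that its underlying argument (taut string plus simple connectivity of the ambient domain) survives the thickening. A secondary subtlety is to ensure that, for all small $\delta$, $\geod_{P_\delta}(\s,\t)$ really does wrap around the tip of $\sigma_\delta$ in the same homotopy class as $\geod_{ur}(\s,\t)$ wraps around $r$, so that the limit of tip turns is the turn at $r$ rather than a turn that vanishes in the limit; this follows from the fact that the sliver monotonically shrinks onto the segment $ur$ while remaining equally obstructive to paths trying to cross $ur$.
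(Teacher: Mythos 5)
The paper gives no proof for this observation; it is stated as self-evident, the intended intuition being that $\B\cup ur$ is a single boundary arc of the (degenerate) slit polygon $\P\setminus ur$, so $\geod_{ur}(s,t)$ is pulled taut against it and turns with the same chirality at every vertex it hugs, including $r$. Your thickening plan is a reasonable way to formalize that intuition, and the overall strategy of reading $ur$ as an extension of $\B$ and invoking Observation~\ref{obs:same_direction_B} on the extended bottom arc is the right one.

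However, there is a genuine gap in the bootstrap step. You assert that, for small $\delta$, $\geod_{P_\delta}(s,t)$ ``still makes a right turn at $u$'', and you use this as the base case that lets the transferred Observation~\ref{obs:same_direction_B} propagate right turns to the tip of the sliver. This claim can fail: once the sliver occupies $u$, the new geodesic detours over the sliver's tip and need not touch $u$ or any other vertex of $\B$ at all. For a concrete example, take $\geod(s,t)$ to touch $\B$ only at $u$; then $\geod_{ur}(s,t)$ goes from $s$ straight to $r$ and from $r$ straight to $t$, and in $P_\delta$ the geodesic touches $\B_\delta$ only at the sliver's tip vertices. In that situation Observation~\ref{obs:same_direction_B} applied inside $P_\delta$ is vacuous for establishing the \emph{direction} of the turn at the tip, because there is no other $\B_\delta$-vertex to inherit chirality from, and your chain of reasoning does not close.

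The fix is to replace the bootstrap by the orientation argument that actually underlies Observation~\ref{obs:same_direction_B}: the geodesic is a simple $s$-$t$ arc that splits the simply connected domain $\P\setminus ur$ into a $\B$-side component (bounded by $\B\cup ur$ and the geodesic) and a $\T$-side component. At every vertex of $\B\cup ur$ the geodesic hugs, it bends toward the $\B$-side component; the hypothesis that $\geod(s,t)$ turns right at $u$ fixes which of the two global orientations ``toward the $\B$-side'' corresponds to, and that orientation is the same at $r$. This gives the conclusion directly, dispenses with the $\delta\to 0$ limit and the unproved transfer of Observation~\ref{obs:same_direction_B} to $P_\delta$, and covers the case where the geodesic touches no vertex of $\B$ other than $r$.
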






For every point $p$ on \B, consider placing a \st $pq$ of length at
most one, with one endpoint on $p$. The possible placements $D_p$ of the other endpoint, $q$,
form a subset of the unit disk centered at $p$. Let $\D = \bigcup_{p \in \B} D_p$ denote the union of all these regions (see Fig.~\ref{fig:elipsoid_algo}).

\begin{figure}[tb]
  \centering
  \includegraphics[trim={0 1.5cm 0 1.06cm},clip]{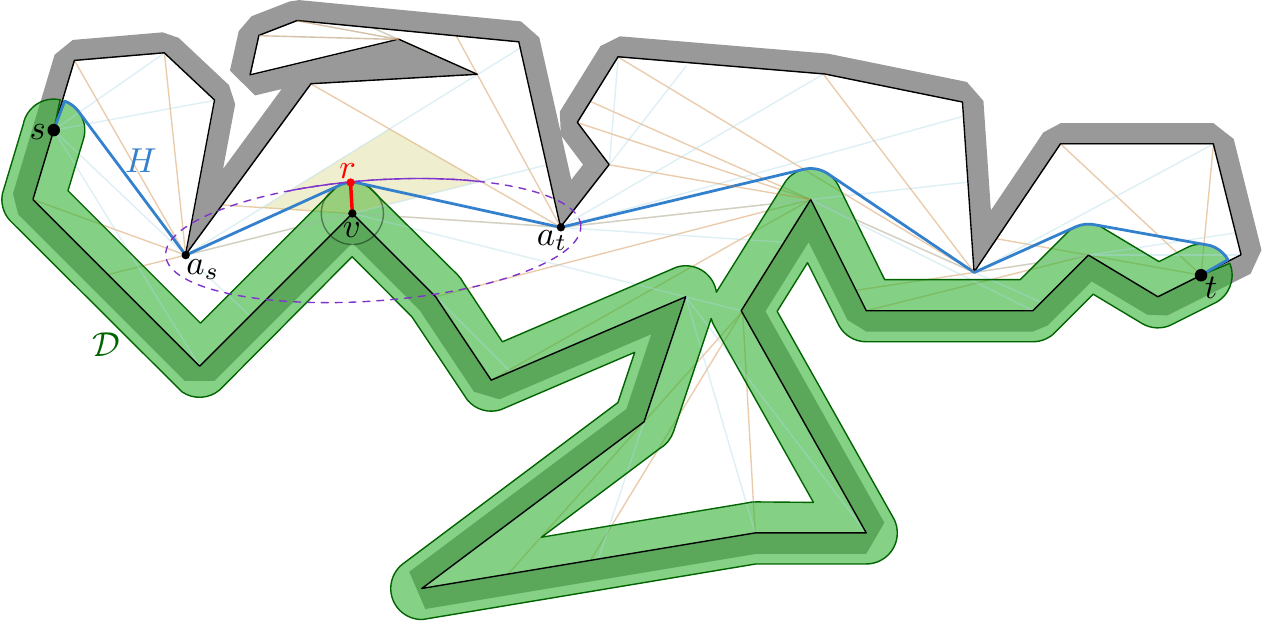}
  \caption{Our algorithm constructs the region \D describing possible
    placements of a \st incident to \B, and the shortest path $H$ around
    \D. An optimal \st incident to \B has one endpoint on $H$. }
  \label{fig:elipsoid_algo}
\end{figure}

\begin{observation}
  \label{obs:separatable}
  There is a \st that separates $s$ from $t$ if and only if $s$ and $t$ are
  in different components of $P \setminus \D$.
\end{observation}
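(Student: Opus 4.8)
The statement to prove is Observation~\ref{obs:separatable}: a \st separates $s$ from $t$ iff $s$ and $t$ lie in different components of $P \setminus \D$. The plan is to prove the two directions separately, using the fact that $\D$ is exactly the set of points reachable as the free endpoint of a unit-length (or shorter) \st rooted on \B.

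For the ``only if'' direction, suppose a \st $ab$ separates $s$ from $t$ in $P$. By Lemmas~\ref{lem:touch}--\ref{lem:atvertices} (and the remark that placing at \B is the case we focus on, placing at \T being symmetric) we may assume $ab$ has at least one endpoint on \B, say $a\in\B$; then every point of $ab$ lies in $D_a\subseteq\D$ (a unit disk around a point of \B contains the whole \st when one endpoint is that point), so $ab\subseteq\D$. Now I would argue that $s$ and $t$ cannot be connected by a path in $P\setminus\D$: any such path would in particular avoid $ab\subseteq\D$, contradicting that $ab$ separates $s$ from $t$. Here one must be slightly careful about the topological model near the gap $s^-s^+$ and near \B itself --- since a \st rooted on \B may run \emph{along} \B, $\D$ contains a neighborhood of \B inside $P$, which is why removing $\D$ (rather than just the single \st) still leaves $s,t$ separated; this monotonicity is the crux and I expect it to require a short explicit argument that $P\setminus\D$ deformation-retracts onto $P$ minus a neighborhood of $\B\cup ab$.

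For the ``if'' direction, suppose $s$ and $t$ are in different components of $P\setminus\D$. Consider the component $C_s$ containing $s$; its boundary (within $P$) consists of parts of $\partial P$ and parts of $\partial\D$. I would take a point $q$ on $\partial\D\cap \partial C_s$ that is ``generic'' (on the frontier facing $t$) and use the definition of $\D$: since $q\in\partial\D$, there is a point $p\in\B$ with $q\in\partial D_p$, i.e., $pq$ is a \st of length exactly one (if $|pq|<1$ then $q$ would be interior to $\D$) with $p\in\B$. The key claim is then that this single \st $pq$ already separates $s$ from $t$: any $s$--$t$ path $\gamma$ in $P\setminus pq$ would have to cross $\partial C_s$, hence cross $\partial\D$, but I need $\gamma$ to actually enter $\D$ and get stuck --- this does not follow from a single boundary \st in general, so instead I would argue that the \emph{entire} arc of $\partial\D$ bounding $C_s$ can be ``swept'' by \sts rooted on \B, and pick the \st witnessing the point where $\gamma$ first leaves $C_s$. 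Making this precise is the main obstacle: one wants that $\D\cap C_t=\emptyset$ forces a connected sub-arc of $\partial\B$ whose associated \sts collectively wall off $C_s$, and then observe (using that \B is connected and the reach map $p\mapsto D_p$ varies continuously) that a \emph{single} such \st suffices because consecutive \sts overlap, so their union is connected and any $s$--$t$ path must cross one of them at a point where it is blocked.

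Alternatively --- and this is probably the cleaner route I would actually write --- I would phrase ``if'' contrapositively: if \emph{no} \st rooted on \B separates $s$ from $t$, then for every such \st $pq$ there is an $s$--$t$ path avoiding it; taking a \st $pq$ with $q$ on the boundary of the component of $s$ and pushing the avoiding path just outside $pq$ shows the avoiding path can be taken arbitrarily close to $\partial\D$ on the side of $s$, and gluing these local detours along the connected arc of $\partial\D$ facing $t$ produces a single $s$--$t$ path in $P\setminus\D$, contradicting the hypothesis. The main obstacle in either formulation is the same: converting the local statement ``$q\in\partial\D$ witnesses a \st'' into the global statement ``the component structure of $P\setminus\D$ is governed by a single \st'', which relies on $\D$ being a connected region hugging \B whose outer boundary is reachable by a continuously-varying family of unit \sts. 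I would lean on the already-established structure of $\D$ (it is a union of unit disks clipped appropriately, so its boundary facing the interior of $P$ is a single polygonal/circular-arc curve) to make this rigorous.
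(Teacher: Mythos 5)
The paper does not write out a proof for this Observation at all --- it is treated as immediate from the definition of $\D$ --- so the comparison is really against the intended one-line argument. Your write-up captures the right intuition but has genuine gaps and detours in both directions.

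In the forward direction, invoking Lemmas~\ref{lem:touch}--\ref{lem:atvertices} is the wrong tool: those lemmas concern the barrier that \emph{maximizes} the finite geodesic length, and their translation/rotation arguments do not apply when the geodesic is $\infty$. What you actually need (and what makes this an observation) is elementary: a segment $ab$ that disconnects the simply connected $P$ must meet $\partial P$ in at least two points, and for $s$ and $t$ to land on opposite sides these two points must lie on $\B$ and $\T$ respectively. Replacing $ab$ by the subsegment between those contacts gives a valid barrier with an endpoint $p\in\B$, and that subsegment is contained in $D_p\subseteq\D$. The monotonicity you flag as ``the crux'' is then trivial ($P\setminus\D\subseteq P\setminus ab$, so removing the larger set can only refine components); no deformation-retract argument is needed.

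In the ``if'' direction, your plan has the gap you yourself identify, and neither of your two sketches closes it. Picking an arbitrary $q\in\partial\D\cap\partial C_s$ and its witness barrier $pq$ does \emph{not} yield a separator, and ``gluing local detours'' along $\partial\D$ is both hard to make precise and more than is needed. The missing key observation is to look not at $\partial\D$ but at $\D\cap\T$: if $s$ and $t$ lie in different components of $P\setminus\D$, then the arc $\T$ (which joins $s$ to $t$ along $\partial P$) cannot lie in $P\setminus\D$, so $\D$ meets the interior of $\T$ at some point $q$. By definition of $\D$ there is a $p\in\B$ with $q\in D_p$, i.e.\ $pq$ is a valid barrier of length at most one joining $\B$ to $\T$; any such chord of the simple polygon disconnects $s$ from $t$. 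This single $\B$-to-$\T$ segment is the witness, and no continuity or sweeping argument over a family of barriers is required. So while your overall shape (``$\D$ is $\B$ fattened by reach'') is the right one, the ``if'' direction as written is incomplete, and the way to fix it is the $\D\cap\T\neq\emptyset$ argument, not the boundary-tracing you propose.
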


We now observe that \D is essentially the Minkowski sum of \B with a unit disk
$D$. More specifically, let
$A \oplus B = \{ a + b \mid a \in A \land b \in B\}$ denote the Minkowski sum
of $A$ and $B$, let $S^*_\B=S^* \cap \B$ denote the part of $S^*$ in \B, let
$S^*_\T$, $T^*_\B$, and $T^*_\T$ be defined analogously, and let
$\B'= \B \setminus (S^*_\B \cup T^*_\B)$.\val{maybe let's draw a figure (refer to it also from the Lemma)}

\begin{lemma}
  \label{lem:minkowski}
  We have that $\D = \D' \cup X_S \cup X_T$, where $\D' = \B' \oplus D$,
  $X_A = (A^*_\B \oplus D) \setminus A^*_\T$, and $D$ is the unit disk centered at the origin. Moreover,
  \D can be computed in $O(n)$ time.
\end{lemma}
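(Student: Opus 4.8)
The plan is to describe the set $D_p$ of admissible positions of the free endpoint separately for each root $p$, according to which of the three sub-chains $\B'$, $S^*_\B$, $T^*_\B$ it lies on, take the union over $p\in\B$, and then read off the timing bound from a Minkowski-sum computation. The chain \B runs from $t^-$ to $s^-$; since $S^*$ is a single edge of \P whose intersection with \B is exactly the portion $S^*_\B=S^*\cap\B$ incident to $s^-$ (and likewise $T^*_\B$ is the portion incident to $t^-$), the edge of \B at $s^-$ equals $S^*_\B$, the edge at $t^-$ equals $T^*_\B$, and $\B=\B'\cup S^*_\B\cup T^*_\B$. Hence $\D=\bigcup_{p\in\B}D_p=\bigl(\bigcup_{p\in\B'}D_p\bigr)\cup\bigl(\bigcup_{p\in S^*_\B}D_p\bigr)\cup\bigl(\bigcup_{p\in T^*_\B}D_p\bigr)$.

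Next I would identify which $q\in D+p$ are forbidden. By the model of Section~\ref{sec:prelim}, a unit \st $pq$ rooted at $p\in\B$ is inadmissible precisely when it completely separates \s (or \t) from the interior of \P. As the gap $s^-s^+$ is infinitesimal, the former occurs exactly when $pq$ covers points of $S^*$ on both sides of \s; but two points on the line supporting $S^*$ force the whole segment $pq$ onto that line, so this is equivalent to $pq$ being collinear with $S^*$ and meeting both $S^*_\B$ and $S^*_\T$. Consequently: for $p\in\B'$, which in general position avoids the lines supporting $S^*$ and $T^*$ and hence hosts no such collinear \st, nothing is forbidden and $D_p=D+p$; for $p\in S^*_\B\subseteq S^*$ the forbidden $q$ are exactly the points of $S^*$ on the \T-side of \s within distance $1$ of $p$, so $D_p=(D+p)\setminus S^*_\T$ (placements farther along the line of $S^*$, past its far endpoint, lie outside \P and are irrelevant for the later use of \D); and symmetrically $D_p=(D+p)\setminus T^*_\T$ for $p\in T^*_\B$.

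Combining these with the set identity $\bigcup_p(U_p\setminus C)=\bigl(\bigcup_pU_p\bigr)\setminus C$, valid whenever $C$ does not depend on $p$, gives $\bigcup_{p\in\B'}D_p=\B'\oplus D=\D'$, $\bigcup_{p\in S^*_\B}D_p=(S^*_\B\oplus D)\setminus S^*_\T=X_S$, and $\bigcup_{p\in T^*_\B}D_p=(T^*_\B\oplus D)\setminus T^*_\T=X_T$, so $\D=\D'\cup X_S\cup X_T$. This is the right decomposition rather than an over-subtraction: a point of $S^*_\T$ within distance $1$ of $\B'$ does belong to \D, reached by a \st rooted in $\B'$ (which does not shut the \s-door), and the formula retains it because only the $S^*_\B$-term, not $\D'$, is trimmed by $S^*_\T$.

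Finally, for the running time, \B has $O(n)$ vertices, and $\B\oplus D$, the Minkowski sum of a simple polygonal chain with a unit disk, has $O(n)$ complexity and can be built in $O(n)$ time from the medial axis of the chain, computed in linear time by the algorithm of Chin, Snoeyink and Wang; from it one reads off $\B'\oplus D$, $S^*_\B\oplus D$, and $T^*_\B\oplus D$, removes the single segments $S^*_\T$ and $T^*_\T$ to form $X_S$ and $X_T$, and takes the union, all in $O(n)$ time. The step I expect to be most delicate is the exact characterization of the forbidden placements --- confirming that ``shutting the door'' reduces to collinearity with $S^*$ (resp.\ $T^*$) once the gap shrinks to a point, that the only trimming needed is by $S^*_\T$ and $T^*_\T$, and that no spurious blocking arises (for instance through an edge of \B accidentally collinear with $S^*$, or through a \st escaping \P) --- none of which happens in general position or affects how \D is used afterwards; the Minkowski-sum complexity is a standard fact.
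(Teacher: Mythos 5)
Your proof is correct and follows the same line as the paper's, which however condenses the set identity to ``follows directly from the definition of $\D$ and the Minkowski sum'' and spends its words only on the algorithmic part (Chin--Snoeyink--Wang medial axis, offset, constant-size pieces $X_S,X_T$, union). You make explicit what the paper leaves implicit: the partition $\B=\B'\cup S^*_\B\cup T^*_\B$, the reduction of ``shutting the door'' to collinearity with $S^*$ (resp.\ $T^*$) so that only $S^*_\T$ (resp.\ $T^*_\T$) needs to be trimmed and only from the $S^*_\B$- (resp.\ $T^*_\B$-) term, the observation that $\bigcup_p(U_p\setminus C)=(\bigcup_pU_p)\setminus C$ since the trimming set is independent of $p$, and the non-over-subtraction remark. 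That last point and your flagged caveats (a point of $\B'$ accidentally collinear with $S^*$, or a \st escaping past the far endpoint of $S^*$ yet re-entering $P$ at a reflex corner) are genuine corner cases that the paper also glosses over; you are right that they do not affect the subsequent use of \D via $P\setminus\D$, but it is good that you named them rather than silently passing them by.
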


\begin {proof}
  The equality follows directly from the definition of $\D$ and the Minkowski
  sum. It then also follows \D has linear complexity. So we focus on computing
  \D. To this end we separately compute $\D'$, $X_S$, and $X_T$, and take their union. More specifically, we construct the Voronoi diagram of $\B'$ using
  the algorithm of Chin, Snoeyink, and Wang~\cite{csw-fmasp-99}, and use it
  to compute $\B' \oplus D$~\cite{kim1998offset}. Both of these steps can be
  done in linear time. Since $S^*$, $T^*$, and $D$ have constant complexity, we
  can compute $X_S$ and $X_T$ in constant time. The resulting sets still have
  constant complexity, so unioning them with $\B' \oplus D$ takes linear time.
\end {proof}

\begin{lemma}
  \label{lem:compute_P_minD}
  We can test if $s$ and $t$ lie in the same component $C$ of $P \setminus \D$,
  and compute $C$ if it exists, in $O(n)$ time.
\end{lemma}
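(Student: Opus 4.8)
The plan is to build, inside $P$, the planar subdivision induced by $\partial P$ together with $\partial\D$, to extract from it the cell $C$ containing $s$, and to test whether $t$ lies in $C$. By Lemma~\ref{lem:minkowski} we already have $\D$ in $O(n)$ time, with $\partial\D$ represented as a cyclic list of $O(n)$ line segments and circular arcs and with the decomposition $\D=\D'\cup X_S\cup X_T$, $\D'=\B'\oplus D$; this, together with the $n$ edges of $\partial P$, is all the input the algorithm needs.

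The crucial and, I expect, most delicate step is to show that $\partial P$ and $\partial\D$ cross only $O(n)$ times, so that the overlay has $O(n)$ complexity. Since $X_S$ and $X_T$ have constant complexity they contribute only $O(1)$ crossings, so it suffices to bound crossings with $\partial\D'$. Here I would exploit that $\B'$ is a contiguous sub-chain of $\partial P$, so that $\partial\D'$ is essentially the distance-$1$ offset of $\B'$ and a point $x\in\partial P\setminus\B'$ lies in $\D'$ exactly when its Euclidean distance to $\B'$ is at most $1$. Walk along the complementary chain $Q=\partial P\setminus\B'$; the maximal subarcs of $Q$ lying inside $\D'$ are delimited by points at distance exactly $1$ from $\B'$, and I would charge each such subarc to the feature (vertex or edge) of $\B'$ realizing this distance at one of its endpoints. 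Simplicity of $\partial P$ forbids $Q$ from crossing $\B'$, and this is what prevents $Q$ from re-entering the Voronoi region of a feature of $\B'$ more than a bounded number of times; combined with the fact that the Voronoi diagram of $\B'$ (hence $\partial\D'$) has linear complexity \cite{csw-fmasp-99,kim1998offset}, this yields $O(n)$ charged subarcs and therefore $O(n)$ crossings. Making this charging argument airtight — in particular ruling out that any feature of $\B'$ is charged more than $O(1)$ times — is the main obstacle.

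Granting the $O(n)$ crossing bound, the part of the arrangement $\mathcal A(\partial P,\partial\D)$ lying inside $P$ has $O(n)$ total complexity; moreover the crossings along each segment and arc of $\partial\D$ are produced in sorted order by the way $\partial\D$ is constructed (via the Voronoi diagram of $\B'$), so the arrangement can be assembled in $O(n)$ time with no sorting. I would then obtain $C$ by a boundary trace: starting from $s$, follow $\partial P$ until it meets $\partial\D$, switch to following $\partial\D$ along the side facing $s$, and keep alternating until returning to $s$; each piece is visited $O(1)$ times, so this runs in $O(n)$ time and returns $C$ with its $O(n)$-size boundary. Finally I check whether $t$ lies in (the closure of) $C$: if so we output $C$, and otherwise $s$ and $t$ lie in different components, so by Observation~\ref{obs:separatable} a separating barrier exists and we report that no common component $C$ exists. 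Equivalently, once the $O(n)$-complexity bound is in hand one may skip the explicit trace and instead run a funnel-type geodesic shortest-path algorithm from $s$ to $t$ inside the simply connected region $\mathrm{cl}(P)\setminus\mathrm{int}\,\D$ in $O(n)$ time; it returns a finite path — the path $H$ reused in Section~\ref{sub:placing} — exactly when $C$ exists.
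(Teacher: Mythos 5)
Your proposal takes a genuinely different route from the paper's, and it contains a gap that the paper's route cleverly avoids. You propose to build the $O(n)$-complexity overlay of $\partial P$ with $\partial\D$ inside $P$ and then trace the cell containing $s$. The entire weight of this plan rests on the claim that $\partial P$ and $\partial\D'$ cross only $O(n)$ times, and you yourself flag that the charging argument for this is ``the main obstacle'' and not airtight. It really is: once a single edge of $Q=\partial P\setminus\B'$ can cross $\partial\D'$ a non-constant number of times (which happens as soon as $\B'$ is non-convex, since the inner offset of a reflex chain is non-convex), the naive bound is $O(n^2)$, and ruling out the quadratic case requires a Davenport--Schinzel-type or Voronoi-region argument that you sketch but do not close. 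Until that lemma is proved, the running time of the arrangement construction and of the boundary trace is not established, so the proof as written does not go through. The final alternative you offer --- run a funnel shortest-path algorithm in $\mathrm{cl}(P)\setminus\mathrm{int}\,\D$ --- has the same issue in disguise (and the additional one that when $s$ and $t$ are separated, that region is not a simple splinegon, so the cited algorithm does not directly apply).

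The paper's proof avoids any crossing bound altogether. After checking that neither $s$ nor $t$ lies in the interior of $\D$ (so both lie on $\partial\D$), it extracts the boundary arc $\sigma$ of $\D$ from $s$ to $t$ and observes that $C$ exists if and only if the closed curve $\sigma\cup\T$ is simple --- equivalently, if and only if $\sigma$ and $\T$ do not intersect --- in which case that curve itself bounds $C$. Testing simplicity of an $O(n)$-size closed curve is linear time by Chazelle's algorithm (the \cite{c-tsplt-91a} reference), so no arrangement, charging scheme, or crossing bound is ever needed. If you want to salvage your approach you would have to actually prove the $O(n)$ crossing bound, but the simplicity-test approach renders this unnecessary.
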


\begin{proof}
  Using Lemma~\ref{lem:minkowski} we compute \D in linear time. If $s$ or $t$
  lies inside \D, which we can test in linear time, then $C$ does not
  exist. Otherwise, by definition of $X_S$ and $X_T$, $s$ and $t$ must lie on
  the boundary of \D. We then extract the curve $\sigma$ connecting $\s$ to
  $\t$ along the boundary of \D, and test if $\sigma$ intersects the top of the
  polygon \T. If (and only if) $\sigma$ and \T do not intersect, their
  concatenation delineates a single component $C'$ of $P \setminus \D$. Since
  $C'$ contains both $s$ and $t$ we have $C=C'$. So, all that is left is to
  test if $\sigma$ and \T intersect. This can be done in linear time by
  explicitly constructing $C'$ and testing if it is simple~\cite{c-tsplt-91a}.
\end{proof}


\begin{theorem}Given a simple polygon \P with $n$ vertices and two points \s and \t on the boundary of \P, we  can test whether there exists a placement of a unit length \st that  disconnects \s from \t in $O(n)$ time.
\end{theorem}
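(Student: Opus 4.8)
The plan is to obtain the theorem as an immediate corollary of the machinery built in Section~\ref{sub:blocking}. First I would recall the structural fact that, inside a simple polygon $P$, a single segment $ab$ can put $s$ and $t$ into different connected components only if it is a chord of $P$ whose two endpoints lie on opposite sides of the pair $\{\s,\t\}$ along $\partial\P$ --- that is, one endpoint on the bottom \B and the other on the top \T. (A segment with an endpoint in the interior of $P$ can be perturbed around that endpoint and hence separates nothing; a chord with both endpoints on \B, or both on \T, leaves $s$ and $t$ in the same component.) Consequently it suffices to search for a disconnecting \st rooted at \B, whose free endpoint then necessarily lies in $\T\cap\D$; this is exactly why a single computation of \D, rather than also a symmetric one based on \T, is enough. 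Combining this with Observation~\ref{obs:separatable}, a unit-length \st disconnecting \s from \t exists if and only if \s and \t lie in different connected components of $P\setminus\D$.

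Second, I would turn this equivalence into the claimed algorithm. By Lemma~\ref{lem:minkowski}, \D has linear complexity and can be constructed in $O(n)$ time as $\D'\cup X_S\cup X_T$; here the pieces $X_S$ and $X_T$ already subtract off precisely the forbidden "door-shutting" placements along $S^*$ and $T^*$, so no separate case analysis near \s and \t is needed. Then, by Lemma~\ref{lem:compute_P_minD}, in a further $O(n)$ time we can decide whether \s and \t lie in the same component $C$ of $P\setminus\D$ (and output $C$ when it exists). The answer to the disconnection question is just the negation of this test: a disconnecting unit \st exists precisely when \s and \t are reported \emph{not} to share a component. Adding the two $O(n)$ stages yields the $O(n)$ bound.

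As for the main obstacle: at this point there essentially is none --- all the real content has been pushed into Lemma~\ref{lem:minkowski}, Lemma~\ref{lem:compute_P_minD}, and Observation~\ref{obs:separatable}. The only step that deserves an explicit sentence is the reduction in the first paragraph, namely the observation that a disconnecting chord must have exactly one endpoint on \B and one on \T (so that looking only at \sts rooted at \B loses nothing); the remainder of the proof is bookkeeping of the two linear-time subroutines.
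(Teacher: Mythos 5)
Your proposal is correct and follows exactly the route the paper intends: the theorem is a direct corollary of Observation~\ref{obs:separatable}, Lemma~\ref{lem:minkowski}, and Lemma~\ref{lem:compute_P_minD}, and your first paragraph correctly supplies the topological justification (a separating chord must join \B to \T) that the paper leaves implicit in defining \D only over barriers rooted at \B. Nothing to add.
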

\val{we don't really need this theorem, let's remove it?}
\frank{I think it makes sense to have some result at the end of this
  subsection. But maybe phrase it as a corollary? Also, I guess we should state
  that s and t lie on the boundary of $P$}
  \val{we work in a specific model (s and t are the "segments", placement on S* is not allowed, etc), so it might be too bold to state a general theorem. I commented out the old statement}
\maarten {I think the old statement is better. It is true that we work in a specific model, but if we would not be doing that, the problem would only become easier (because trivial in some cases) so the statement is not untrue.}

\subsection {Maximizing the length from $s$ to $t$ with a single \st}
\label{sub:placing}

In the remainder of the section we assume that we cannot place a \st on (a
vertex of) \B that completely separates $s$ from $t$. Fix a distance $d$, and
consider all points $p \in P$ such that $\geod(s,p)+\geod(p,t)=d$. Let $C_d$
denote this set of points, and define
$C_{\leq d} = \bigcup_{d' \leq d} C_{d'}$.

Observe that an optimal \st will have one of its endpoints on the boundary of
\D. Let $H = \geod_{\D}(s,t)$ be the shortest path from $s$ to $t$ avoiding
$\D$. We will actually show that there is an optimal \st $V^*$ whose endpoint
$a$ lies on $H$, and that $H$ has low complexity. This then gives us an
efficient algorithm to compute an optimal \st. To show that $a$ lies on $H$ we
use that if $V^*$ realizes detour $d^*$ (i.e., $\geod_{V^*}(s,t)=d^*$), the endpoint $a$ also lies on
$C_{d^*}$. First, we prove some properties of $C_{d^*}$ towards this end.

\begin{observation}
  \label{obs:elliptical}
  Let $\Delta_s$ be a cell in $\SPM(s)$ with root $a_s$, and $\Delta_t$ be a
  cell in $\SPM(t)$ with root $a_t$. We have that
  $C_d \cap \Delta_s \cap \Delta_t$ consists of a constant number of intervals
  along the boundary of the ellipse with foci $a_s$ and $a_t$.
\end{observation}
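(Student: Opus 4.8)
The plan is to replace the geodesic level set by an ordinary Euclidean one on the overlap of the two cells, and then bound its complexity using that a cell has constant size and an ellipse meets a line segment at most twice. For $p \in \Delta_s$ a shortest path from $s$ to $p$ reaches $p$ via the root $a_s$, so $\geod(s,p) = \geod(s,a_s) + |a_s p|$; symmetrically $\geod(p,t) = \geod(t,a_t) + |a_t p|$ for $p \in \Delta_t$. Hence on $\Delta_s \cap \Delta_t$ the function $p \mapsto \geod(s,p) + \geod(p,t)$ equals $c + |a_s p| + |a_t p|$ with $c = \geod(s,a_s) + \geod(t,a_t)$ a constant, and therefore
\[
  C_d \cap \Delta_s \cap \Delta_t \;=\; \{\, p \in \Delta_s \cap \Delta_t \ :\ |a_s p| + |a_t p| = d - c \,\}.
\]
If $d - c < |a_s a_t|$ this is empty; if $d - c = |a_s a_t|$ it is the part of the segment $a_s a_t$ lying in $\Delta_s \cap \Delta_t$, which is a single subsegment because $\Delta_s$ and $\Delta_t$ are star-shaped with respect to $a_s$ and $a_t$ (so each meets the line through $a_s a_t$ in one segment containing its root); and if $d - c > |a_s a_t|$ it is the intersection of $\Delta_s \cap \Delta_t$ with the Euclidean ellipse $E = E(a_s, a_t, p_0)$, where $p_0$ is any point with $|a_s p_0| + |a_t p_0| = d - c$, whose foci are precisely $a_s$ and $a_t$. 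In every case $C_d \cap \Delta_s \cap \Delta_t$ lies on this ellipse (or on its degenerate segment), and only the last case requires further argument.

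For that case: a cell of $\SPM(s)$ has constant complexity --- the shortest path map has linear total size, and if need be one refines it by triangulating each cell from its root, which keeps the size linear and changes nothing used later --- and the same holds for $\SPM(t)$; since a line segment meets $E$ at most twice, $E$ crosses $\partial(\Delta_s \cap \Delta_t)$ only a constant number of times, so $C_d \cap \Delta_s \cap \Delta_t = E \cap \Delta_s \cap \Delta_t$ is a constant number of arcs of $E$. The structure behind this is clean: $a_s$ and $a_t$ are the foci of $E$, so each ray out of $a_s$ meets $E$ exactly once, and, combined with star-shapedness of $\Delta_s$ from $a_s$, the arc $E \cap \Delta_s$ is exactly $\{\, E(\theta) : e_s(\theta) \le \rho_{\Delta_s}(\theta) \,\}$, where $e_s$ is the focal radius of $E$ and $\rho_{\Delta_s}$ is the radial function of $\partial\Delta_s$ seen from $a_s$; here $1/e_s$ is affine and $1/\rho_{\Delta_s}$ piecewise affine in $(\cos\theta,\sin\theta)$, the pieces coming from the (constantly many) edges of the cell, which again bounds the number of arcs.

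I do not expect a genuine obstacle here: the content is the linearisation of $\geod(s,\cdot) + \geod(\cdot,t)$ on a cell pair, after which the arc count is bookkeeping. The places I would be careful are the degenerate regimes $d - c \le |a_s a_t|$ and the boundary case $a_s = s$ (respectively $a_t = t$), for which one additionally uses that $\SPM(s)$ can be chosen so that the cell containing $s$ is also of constant complexity.
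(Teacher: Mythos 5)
Your proof is correct and follows essentially the same route as the paper's: linearize $\geod(s,\cdot)+\geod(\cdot,t)$ on $\Delta_s\cap\Delta_t$ to $c+\lvert a_s p\rvert+\lvert a_t p\rvert$, recognize an ellipse with foci $a_s,a_t$, and then invoke bounded cell complexity to bound the number of arcs. You are more careful than the paper in one place that is worth noting: the paper simply asserts that ``$\Delta_s$ and $\Delta_t$ have constant complexity,'' which is not literally true for raw $\SPM$ cells in a simple polygon (a cell's boundary can pick up many polygon edges), and your fix --- refine each cell by triangulating from its root, which preserves linear total size --- is exactly the standard repair and does no harm elsewhere; the degenerate-regime and star-shapedness remarks are likewise correct, though the paper does not bother with them.
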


\begin{proof}
  A point $p \in C_d$ satisfies $\geod(s,p)+\geod(p,t)=d$. For
  $p \in \Delta_s\cap\Delta_t$ we thus have
  $\geod(s,a_s)+\|a_sp\|+\|pa_t\| + \geod(a_t,t) = d$. Since $d$,
  $\geod(s,a_s)$, and $\geod(a_t,t)$ are constant, this equation describes an
  ellipse with foci $a_s$ and $a_t$. Since $\Delta_s$ and $\Delta_t$ have
  constant complexity the lemma follows.
\end{proof}

\begin{lemma}
  \label{lem:geodesically_convex}
  $C_{\leq d}$ is a geodesically convex set (it contains shortest paths between its points).
\end{lemma}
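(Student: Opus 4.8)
The plan is to reduce the geodesic convexity of $C_{\leq d}$ to the convexity of geodesic distance functions, which is exactly what Lemma~\ref{lem:pollack} provides. Fix two points $p, q \in C_{\leq d}$. Since $\P$ is a simple polygon, the geodesic $\geod(p,q)$ is unique; I parametrize it by arc length as $\gamma : [0, L] \to \P$ with $\gamma(0) = p$, $\gamma(L) = q$, and $L = \geod(p,q)$. The goal is to show $\gamma(\sigma) \in C_{\leq d}$ for every $\sigma \in [0, L]$.

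First I would apply Lemma~\ref{lem:pollack} to the three points $s$, $p$, $q$: it says $\sigma \mapsto \geod(s, \gamma(\sigma))$ is convex on $[0, L]$. Applying the same lemma with $s$ replaced by $t$ gives that $\sigma \mapsto \geod(\gamma(\sigma), t)$ is convex on $[0,L]$ as well. A sum of convex functions is convex, so $f(\sigma) := \geod(s, \gamma(\sigma)) + \geod(\gamma(\sigma), t)$ is convex on $[0, L]$. A convex function on an interval attains its maximum at an endpoint: writing $\sigma = (1 - \sigma/L)\cdot 0 + (\sigma/L)\cdot L$, convexity gives $f(\sigma) \le (1 - \sigma/L)\, f(0) + (\sigma/L)\, f(L) \le \max\{f(0), f(L)\}$. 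By assumption $f(0) = \geod(s,p) + \geod(p,t) \le d$ and $f(L) = \geod(s,q) + \geod(q,t) \le d$, hence $f(\sigma) \le d$ for all $\sigma$, i.e.\ the whole geodesic $\geod(p,q)$ lies in $C_{\leq d}$. (If $d < \geod(s,t)$ then $C_{\leq d} = \emptyset$ by the triangle inequality for geodesic distance, and there is nothing to prove.)

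The only subtlety — and thus the main, though minor, obstacle — is that the two applications of Lemma~\ref{lem:pollack} must yield convexity with respect to the \emph{same} parametrization of $\geod(p,q)$; using the canonical arc-length parametrization $\gamma$ for both resolves this, since the lemma is stated for an arbitrary point $x \in \geod(q,r)$ and is symmetric in which endpoint is called $q$ and which $r$. The uniqueness of geodesics in a simple polygon, needed for $\gamma$ to be well defined, is standard. No further case analysis is required, so this should give a short, self-contained proof.
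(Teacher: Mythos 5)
Your proof is correct and follows essentially the same route as the paper's: both apply Lemma~\ref{lem:pollack} twice to get convexity of $\geod(s,\cdot)$ and $\geod(\cdot,t)$ along $\geod(p,q)$, add them, and conclude that the sum is maximized at an endpoint of the geodesic. The only difference is presentational — you argue directly with an explicit arc-length parametrization while the paper argues by contradiction — but the mathematical content is identical.
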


\begin{proof}
  Let $p$ and $q$ be two points on $C_d$, and assume, by contradiction, that
  there is a point $r$ on $\geod(p,q)$ outside of $C_{\leq d}$. By
  Lemma~\ref{lem:pollack} the geodesic distance from $s$ to $\geod(p,q)$ is a
  convex function. Similarly, the distance from $t$ to $\geod(p,q)$ is
  convex. It then follows that the function $f(x) = \geod(s,x)+\geod(x,t)$, for
  $x$ on $\geod(p,q)$ is also convex, and thus has its local maxima at $p$
  and/or $q$. Contradiction. 
\end{proof}



\begin{lemma}
  \label{lem:ray_intersects}
  If there is an optimal \st $ua$ incident to a vertex $u$ of \B, then
  the ray $\rho$ from $u$ through $a$ intersects $H$.
\end{lemma}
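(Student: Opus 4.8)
The plan is a topological separation argument. Extend the \st $ua$ beyond $a$ along the ray $\rho$ until it first leaves \P, say at a point $w\in\partial\P$; the portion $uw$ of $\rho$ is then a chord of \P, splitting it into two sub-polygons. I will show that $uw$ separates $s$ from $t$, after which the lemma is immediate: $H$ joins $s$ and $t$ inside \P, so it must cross $uw$, and $uw\subseteq\rho$.

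First I would dispose of a degenerate case. If $\geod_{ua}(s,t)=\geod(s,t)$ then, $ua$ being optimal, no \st helps, the blocking problem is trivial and we need not place a \st at all; so we may assume the \st is \emph{useful}, i.e.\ $\geod_{ua}(s,t)>\geod(s,t)$. Usefulness forces the unobstructed geodesic $\geod(s,t)$ to cross the segment $ua$, since otherwise $\geod(s,t)$ already avoids the \st and $\geod_{ua}(s,t)=\geod(s,t)$. A routine perturbation lets us take this crossing to be transversal, using that $\geod(s,t)$ cannot turn at $a$ because $a$ is not a reflex vertex of \P.

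The technical core is the standard ``shortcut'' fact that a shortest path in a simple polygon meets any straight segment contained in the polygon transversally at most once: if it crossed the segment at $x$ and then at $y$, replacing the portion of the path between $x$ and $y$ by the straight segment $xy$ would yield a path in \P no longer than $\geod(s,t)$, hence itself a shortest $s$--$t$ path; by uniqueness of geodesics that portion would then equal $xy$, contradicting transversality at $x$. Applying this to the chord $uw\supseteq ua$, the geodesic $\geod(s,t)$ crosses $uw$ transversally at most once, yet it already crosses the sub-segment $ua$ transversally; hence it crosses $uw$ exactly once. Since a path joining $s$ and $t$ inside \P crosses a chord an even number of times precisely when $s$ and $t$ lie on the same side of it, $uw$ separates $s$ from $t$; as noted, this proves the lemma.

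The step I expect to require the most care is the bookkeeping of degeneracies behind the transversality and extension claims: $\geod(s,t)$ passing exactly through $a$ or running along a sub-segment of $\rho$; the free endpoint $a$ lying on $\partial\P$, so that $ua$ is already a chord (here one must invoke the standing hypothesis that no \st rooted at \B separates $s$ from $t$, together with the shortcut fact, to rule this out for a useful \st); and the behaviour near the infinitesimal gaps at $s$ and $t$. Each should yield to a small perturbation, but making all of this precise is where the real work lies.
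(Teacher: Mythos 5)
Your argument takes a genuinely different route than the paper's, although both ultimately rest on the same one-crossing shortcut fact. The paper reasons about the \emph{blocked} geodesic $\geod_{ua}(s,t)$: by Observation~\ref{obs:same-direction} it makes a right bend at $a$, so it visits both sides of the ray $\rho$, and a shortcut along $\rho$ shows it cannot cross $\rho$ a second time, forcing $s$ and $t$ into different sub-polygons. You instead reason about the \emph{unobstructed} geodesic $\geod(s,t)$: usefulness of $ua$ forces it to cross the open barrier, and the shortcut applied to the chord $uw$ shows it crosses only once, again separating $s$ from $t$. Both approaches establish the same topological separation; yours bypasses Observation~\ref{obs:same-direction} entirely at the cost of making the ``usefulness'' hypothesis explicit. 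That is not really a cost, since the paper's proof implicitly needs it too --- if $ua$ were useless, $\geod_{ua}(s,t)$ would not pass through $a$ and the right-bend observation would be vacuous. A small technical advantage on your side: your shortcut lives on a chord of $P$, hence is unconditionally valid, whereas the paper's shortcut along $\rho$ for the blocked geodesic must additionally be seen to avoid the barrier itself (which it does, since both crossing points of $\geod_{ua}(s,t)$ with $\rho$ lie at or beyond $a$, because the barrier occupies the portion of $\rho$ between $u$ and $a$). The degeneracies you list --- collinearity with $\rho$, tangential touching, $a$ already on $\partial P$ --- are exactly the cases needing care, and they do yield to the perturbation and the standing no-complete-blockage hypothesis as you indicate.
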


\begin{proof}
  The ray $\rho$ splits $P$ into two subpolygons $P_1$ and $P_2$. Since
  $\geod_{ua}(s,t)$ makes a right bend at $a$
  (Observation~\ref{obs:same-direction} and our assumption that $\geod(s,t)$
  makes a right turn at $u$) it intersects both subpolygons $P_1$ and $P_2$. It
  is easy to show that therefore $s$ and $t$ must be in different subpolygons
  (otherwise the geodesic crosses $\rho$ a second time, and we could shortcut
  the path along $\rho$). Since $H$ connects $s$ to $t$ it must thus also
  intersect $\rho$.
\end{proof}

\begin{figure}[tb]
  \centering
  \includegraphics{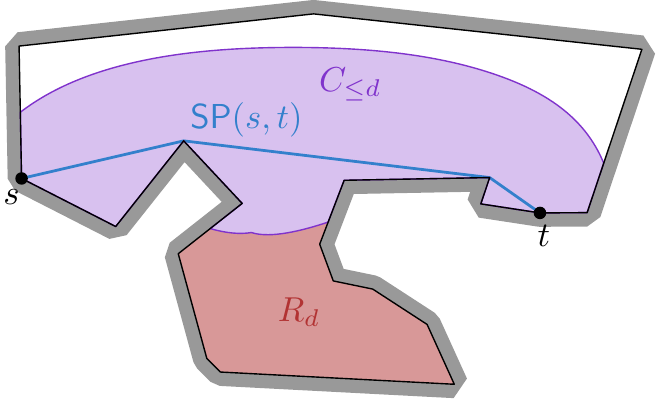}
  \caption{A sketch of the regions $C_{\leq d}$ (purple) and $R_d$. Observe
    that $R_d$ cannot contain any vertices of $\T$, otherwise $\T$ would have
    to pierce $\geod(s,t)$ and thus $C_{\leq d}$.}
  \label{fig:region_R}
\end{figure}
Next, we define the region $R_d$ ``below'' $C_{\leq d}$.
More formally, let $R'$ be the region enclosed by $\B$ and $\geod(s,t)$,
  let $d \geq \geod(s,t)$, and let $R_d = R' \setminus C_{\leq d}$.
  See Fig.~\ref{fig:region_R}.
  We then argue that it is
separated from the top part of our polygon \T, which allows us to prove that
there is an optimal \st with an endpoint on $H$.

\begin{observation}
  \label{obs:bottom_reg_free_of_vertices_of_T}
  Region $R_d$ contains no vertices of \T.
\end{observation}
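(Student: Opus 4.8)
The plan is to argue by contradiction, exploiting that $\T$ is a \emph{connected} sub-arc of $\partial\P$ whose endpoints lie on the ``wrong'' side of $\geod(s,t)$, together with the fact that a sub-arc of $\partial\P$ cannot cross a geodesic of $\P$. Write $\pi=\geod(s,t)$. The geodesic $\pi$ splits $\P$ into two subpolygons; one of them is $R'$ (by definition $\B\subseteq\partial R'$), and I will call the other one $R''$, so that, up to the two infinitesimal gaps at $s$ and $t$, we have $\partial R''=\pi\cup\T$, $R'\cup R''=\P$, and $R'\cap R''=\pi$.

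The first key step is the topological heart of the matter: $\pi$ and $\T$ do not cross. Being a shortest path in a simple polygon, $\pi$ meets $\partial\P$ only at $s$, at $t$, and at a finite set of reflex vertices of $\P$ at which it bends. At each such vertex $x$ — which may well be a vertex of $\T$ — the path $\pi$ both enters and leaves $x$ through the reflex (interior) wedge of $\P$ at $x$; hence, near $x$, $\pi$ stays within a single one of the two regions into which $\partial\P$ locally separates a neighbourhood of $x$, and in particular it does not pass from one side of $\T$ to the other. The second key step is to pin down which side $\T$ lies on. Its endpoints $s^+$ and $t^+$ lie strictly on the $R''$ side of $\pi$: near $s$, the region $R'$ is the wedge bounded by the first edge of $\pi$ and the portion of $S^*$ running from $s^-$ into $\B$, whereas $s^+$ lies on the opposite side of $s$ along $S^*$ (and symmetrically at $t$). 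Since $\T$ is connected and, by the first step, does not cross $\pi$, it follows that $\T\subseteq\overline{R''}$, and therefore $\T\cap R'\subseteq\partial R'\cap\T$. As $\partial R'$ consists of $\pi$, the arc $\B$, and the two infinitesimal segments closing the gaps, and $\T$ is disjoint from $\B$ and from those segments except possibly at its endpoints, we conclude $\T\cap R'\subseteq\pi$.

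Finally, every point $p\in\pi=\geod(s,t)$ satisfies $\geod(s,p)+\geod(p,t)=\geod(s,t)\le d$, so $p\in C_{\le d}$; thus $\pi\subseteq C_{\le d}$ and $\T\cap R_d=\T\cap(R'\setminus C_{\le d})\subseteq\pi\setminus C_{\le d}=\emptyset$. In particular $R_d$ contains no vertex of $\T$. I expect the only genuinely delicate point to be the non-crossing claim in the first step — making rigorous that $\pi$ merely \emph{touches} $\partial\P$ (and hence $\T$) at the reflex vertices where it bends, without traversing it — while the placement of $s^+,t^+$ and the final distance computation are routine.
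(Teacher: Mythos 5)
Your proof is correct, but it takes a genuinely different route from the paper's. The paper argues by contradiction using geodesic convexity: a vertex of $\T$ inside $R_d$ would ``pierce'' $\geod(s,t)$ and hence $C_{\leq d}$, disconnecting it, contradicting the connectedness of $C_{\leq d}$ that follows from Lemma~\ref{lem:geodesically_convex}. You instead bypass Lemma~\ref{lem:geodesically_convex} entirely: you observe directly that a geodesic in a simple polygon merely touches $\partial\P$ at reflex vertices without crossing it, so $\T$ cannot pass from the $R''$ side of $\pi=\geod(s,t)$ to the $R'$ side; combined with $\pi\subseteq C_{\leq d}$ (which holds because $d\geq\geod(s,t)$), this gives $\T\cap R_d=\emptyset$. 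Your argument is thus more elementary and self-contained, and proves a slightly stronger statement (no point of $\T$, not merely no vertex, lies in $R_d$); the paper's version is shorter precisely because it delegates the topological work to the already-established geodesic-convexity lemma. Both are valid, and both ultimately rest on the same non-crossing fact — the paper leaves it implicit in the word ``pierce,'' you make it explicit — so one could reasonably argue your write-up is the more rigorous of the two.
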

\begin{proof}
  Assume, by contradiction that there is a vertex of \T in $R_d$. Observe that
  this disconnects $C_{\leq d}$. However, since $C_{\leq d}$ is geodesically
  convex~(Lemma~\ref{lem:geodesically_convex}) and non-empty it is a connected
  set. Contradiction.
\end{proof}

\begin{lemma}
  \label{lem:opt_placement}
  If there is an optimal \st $ua$ where $u$ is a vertex of \B, then there is
  an optimal \st $ur$ where $r$ is a point on $D_u \cap H$ (recall that $D_u$
  is the unit disk centered at~$u$).
\end{lemma}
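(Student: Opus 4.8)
The plan is to establish the single inequality $\max_{r\in D_u\cap H} g(r)\ge d^*$, where $g(x):=\geod(s,x)+\geod(x,t)$ and $d^*:=\geod_{ua}(s,t)$ is the detour realised by the given optimal barrier. This suffices: having such an $r$, placing the barrier $ur$ keeps $u$ a right-turn vertex, so by Observation~\ref{obs:same-direction} the shortest path avoiding $ur$ bends at $r$ and $\geod_{ur}(s,t)=g(r)\ge d^*$; since $d^*$ is the maximum possible detour, this is an equality, so $ur$ is optimal and has an endpoint on $D_u\cap H$.

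First I would collect the ingredients. As no barrier separates $s$ from $t$, $d^*$ is finite and the shortest path avoiding $ua$ bends around the free endpoint $a$, so $g(a)=d^*$ (that is, $a\in C_{d^*}$) and, since $\|ua\|\le 1$, $a\in D_u$. Since $\geod(s,t)$ turns right at $u$, the vertex $u$ lies on $\geod(s,t)$ and is a reflex vertex of $P$ over which the geodesic passes; hence, by the Minkowski-sum description of $\D$ (Lemma~\ref{lem:minkowski}), $\partial\D$ carries a circular arc $A_u$ of radius one centred at $u$, and because $\geod(s,t)$ touched $u$ the path $H=\geod_\D(s,t)$ must ride over this bulge. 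Two consequences. First, the open unit disk about $u$ is contained in $\D$, which $H$ avoids, so $D_u\cap H=\partial D_u\cap H$ is a sub-arc $\gamma$ of the unit circle about $u$; and since $H$ cannot cut a chord across $A_u$ without entering that forbidden open disk, $H$ must in fact hug $A_u$ along $\gamma$. Second, $g$ is convex along every geodesic of $P$ (Lemma~\ref{lem:pollack}) and attains its global minimum at $u$ (triangle inequality and $u\in\geod(s,t)$), so $g$ is nondecreasing along every ray leaving $u$ inside $P$.

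Next I would show $a$ can be taken on $\gamma$. Sliding the free endpoint outward along the ray $\rho$ from $u$ through $a$ only increases $g$ (ray-monotonicity) and cannot decrease the detour (a longer barrier is no less obstructing), so we may assume $\|ua\|=1$, i.e.\ $a\in\partial D_u$, still with $g(a)=d^*$. For $ua$ to be active -- which it is, being optimal (if $d^*=\geod(s,t)$ the statement is trivial) -- its direction must lie in the angular wedge that $\geod(s,t)$ spans at $u$; that wedge is contained in the angular span of $A_u$, so $a\in A_u$, and it is precisely over this wedge that the fattened obstacle blocks the straight route, so $H$ hugs $A_u$ there, giving $a\in\gamma\subseteq D_u\cap H$. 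Thus $r:=a$ works. As a sanity check, $a\in\gamma$ also follows from Lemma~\ref{lem:ray_intersects}: $\rho$ meets $H$ at distance $\ge 1$ from $u$ because $H$ avoids the open unit disk about $u$, whereas $\rho$ meets $\partial D_u$ only at $a$, so the meeting point is $a$ itself.

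The step I expect to be the crux is pinning down this local picture around $u$: that $\partial\D$ genuinely carries the radius-one arc $A_u$, that $H$ hugs exactly the part of $A_u$ over which $\geod(s,t)$ used to pass, and that the optimal (hence active) barrier direction lies in that part. If one wishes to avoid the wedge argument, it can be replaced by sliding $a$ along $\gamma$ while keeping $g\ge d^*$, using the geodesic convexity of $C_{\le d^*}$ (Lemma~\ref{lem:geodesically_convex}). A few degenerate configurations also need to be absorbed -- $u$ lying near $S^*$ or $T^*$, the barrier failing to reach unit length because $\rho$ meets $\partial P$ first, or $u$ turning out not to lie on $\geod(s,t)$ -- which are dispatched either via the sliding argument of Lemma~\ref{lem:atvertices} or by observing that in those cases $a$ already lies on $H$.
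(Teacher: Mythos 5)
Your plan is to move $a$ out to $\partial D_u$ and then argue that this boundary point already lies on $H$. The second step is where the argument breaks: you assert that $H$ ``hugs'' the circular arc $A_u\subset\partial\D$ over the wedge of directions in which a barrier at $u$ is active, but nothing you have proved forces this. $H=\geod_\D(s,t)$ avoids \emph{all} of $\D$, not just $D_u$; if $\D$ bulges out more somewhere between $u$ and $s$ (because some other portion of $\B$ lies close by), $H$ will route around that bulge and can pass strictly outside $D_u$, never touching $A_u$ -- or touch it only on a sub-arc that misses the direction $ua$. Your ``sanity check'' exposes the same hole explicitly: you observe that $\rho$ meets $H$ at distance $\ge 1$ from $u$ and that $\rho$ meets $\partial D_u$ only at $a$, and conclude the two intersection points coincide. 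But ``distance $\ge 1$'' leaves open that $\rho\cap H$ lies strictly farther out than $a$, which is exactly the possibility you must rule out, and nothing rules it out. (Separately, your appeal to ray-monotonicity of $g$ at $u$ uses $u\in\geod(s,t)$; the paper does make the same tacit assumption, so this is not a gap relative to it, but it should be stated.)

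The paper's proof is a contradiction argument built precisely around the point $r'$ where the ray $\rho$ actually meets $H$. Since $a\in C_{d^*}$ and $C_{\le d^*}$ is geodesically convex (Lemma~\ref{lem:geodesically_convex}), $r'$ lies outside $C_{\le d^*}$; the proof then takes the maximal subpath $H[p,q]$ of $H$ through $r'$ that lies outside $C_{\le d^*}$ and splits into three cases: $H[p,q]$ is disjoint from $\D$ (then it is a geodesic and geodesic convexity puts it back inside $C_{\le d^*}$), $H[p,q]$ touches $\D$ at a point of $R_{d^*}$ (contradicting Observation~\ref{obs:bottom_reg_free_of_vertices_of_T}), or it touches $\D$ at a point $z\notin R_{d^*}$ (then $vz$, for the $v\in\B$ with $z\in D_v$, is a valid barrier with detour strictly larger than $d^*$). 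That case analysis is the content missing from your write-up: sliding to $\partial D_u$ alone does not land $a$ on $H$, and the paper never claims it does -- it shows instead that assuming no optimal barrier has an endpoint on $H$ is untenable.
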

\begin{proof}
  Assume, by contradiction, that there is no optimal \st incident to $u$ that
  has its other endpoint on $H$.  Consider the ray from $u$ in the direction of
  $a$. By Lemma~\ref{lem:ray_intersects}, the ray hits $H$ in a point $r'$
  (Fig.~\ref{fig:opt_placement}). Because $a$ lies on $C_{d^*}$ and
  $C_{\leq d^*}$ is geodesically convex (Lemma~\ref{lem:geodesically_convex}),
  \maarten {maybe don't need to ref the lemma every time?}
  $r'$ lies outside $C_{\leq d^*}$. Let $H[p,q]=\geod_\D(p,q)$ be the maximal
  (open ended) subpath of $H$ that contains $r'$ and lies outside of
  $C_{\leq d^*}$. We then distinguish two cases, depending on whether or not
  $H[p,q]$ intersects (touches) \D:

  \begin{figure}[b]
    \centering
    \includegraphics{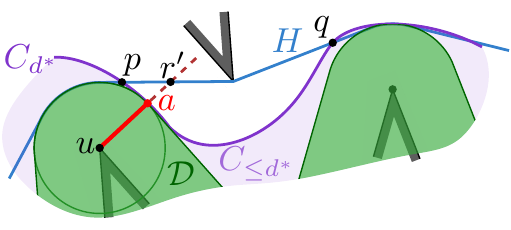}
    \caption{Illustration of Lemma~\ref{lem:opt_placement}.}
    \label{fig:opt_placement}
  \end{figure}

  \begin{description}
  \item[$H{[p,q]}$ does not intersect (touch) \D.] It follows that $H[p,q]$ is a
    geodesic in $P$ as well, i.e. $H[p,q]=\geod(p,q)$. Since
    $p,q \in C_{\leq d^*}$, and $C_{\leq d^*}$ is geodesically convex
    (Lemma~\ref{lem:geodesically_convex}) we then have that
    $H[p,q] \subseteq C_{\leq d^*}$. Contradiction.
  \item[$H{[p,q]}$ intersects $\D$ in a point $z$.] Let $v \in B$ be a point
    such that $z \in D_v$. We distinguish two subcases, depending on whether
    $z$ lies in the region $R_{d^*}$.
    \begin{description}
    \item[$z \in R_{d^*}$.] In this case $z$ lies ``below'' $C_{\leq
        d^*}$. From $z \in H$ it follows that $H[p,q] \subset
      R_{d^*}$. However, as $C_{d^*}$ is geodesically convex, this must mean
      that $H[p,q]$ has a vertex $w$ in $R_{d^*}$ at which it makes a left
      turn. This implies that $w$ is a vertex of \T. By
      Observation~\ref{obs:bottom_reg_free_of_vertices_of_T} there are no vertices of
      \T in $R_{d^*}$. Contradiction.
    \item[$z \not\in R_{d^*}$] Observe that $vz$ is a valid
      candidate \st.  Since $z \not\in C_{\leq d^*}$, the point $z$ actually
      lies above (i.e.~to the left of) $\geod(s,t)$, and thus
      $\geod_{vz}(s,t)$ makes a right turn at $z$. Using that
      $z \not\in C_{\leq d^*}$ it follows that
      $\geod_{vz}(s,t) > d^*$. This contradicts that $d^*$ is the
      maximal detour we can achieve.
    \end{description}
  \end{description}
  Since all cases end in a contradiction this concludes the proof.
\end{proof}


  We now know there exists an optimal \st with an endpoint on $H$. Next, we focus on the complexity of $H$.

\begin{observation}
  \label{obs:left_turns}
  Let $b$ and $c$ be two points on $H$, such that $H$ makes a left turn in
  between $b$ and $c$ (i.e. the subcurve $H[b,c]$ of $H$ between $b$ and $c$
  intersects the half-plane right of the supporting line of
  $bc$). Then $H[b,c]$ contains a vertex of \T.
\end{observation}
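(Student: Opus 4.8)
The plan is to argue that if $H[b,c]$ contains no vertex of \T, then it is a convex curve that turns consistently clockwise, and that such a curve must lie weakly to the left of the directed chord from $b$ to $c$ --- contradicting the assumption that $H[b,c]$ reaches the half-plane to the right of $bc$.

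First I would recall the structure of $H=\geod_\D(s,t)$. As a shortest path in the component of $P\setminus\D$ containing \s and \t, $H$ is taut: it is a concatenation of straight segments and (convex) boundary arcs of $\D$, and it changes direction only at a corner of $\partial\D$, or at a vertex of $\partial\P$. Every vertex of $\partial\P$ on \B lies inside $\D$ (which contains \B), whereas $H$ avoids $\D$; hence the only vertices of $\partial\P$ that $H$ can bend at are vertices of \T. Next, since $\D$ is connected and attached to $\B\subseteq\partial\P$, the path $H$ cannot slip between $\D$ and \B, so it runs on the \T-side of $\D$ for its whole length, keeping $\D$ to its right when traversed from \s to \t (using the standing convention, via Observation~\ref{obs:same_direction_B}, that $\geod(s,t)$, and hence $H$, turns right at the vertices of \B). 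Consequently every turn of $H$ at a corner of $\partial\D$, as well as every $\partial\D$-arc followed by $H$, curves clockwise.

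Now suppose, for contradiction, that $H[b,c]$ contains no vertex of \T. Then, by the previous paragraph, every corner and every arc of $H[b,c]$ turns clockwise, so $H[b,c]$ is a simple convex curve that turns monotonically clockwise from $b$ to $c$. It is standard that such a convex curve lies weakly in the closed half-plane to the left of the directed line $\overrightarrow{bc}$; in particular it does not meet the open half-plane to the right of $bc$. This contradicts the hypothesis of the observation, so $H[b,c]$ must contain a vertex of \T.

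The delicate point is the orientation claim --- that every turn of $H$ along $\partial\D$ is a clockwise (right) turn, equivalently that $\D$ lies on one fixed side of $H$ throughout. This is what I sketched above from $\D$ being connected, attached to $\B\subseteq\partial\P$, and disjoint from $H$; an equivalent formulation is that $H$ together with \T bounds a region (the analogue of the region $R'$) that contains all of $\D$, so that $\D$ stays on the \B-side of $H$. Once this is established, the two remaining ingredients --- tautness of geodesics in a region with circular-arc obstacles, and the elementary fact that a convex curve stays to one side of the chord joining its endpoints --- are routine.
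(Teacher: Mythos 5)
The paper states this as an \emph{Observation} and gives no proof of its own, so there is nothing to compare against line by line; your proposal supplies the natural argument that the authors evidently had in mind: $H$ is a taut curve in the component of $P\setminus\D$ containing $s$ and $t$, it can bend clockwise only along $\partial\D$ (since $\B\subset\D$ and $H$ avoids $\D$) and counterclockwise only at reflex vertices of $\T$, and $\D$ stays on a fixed side of $H$. That is exactly the content of the observation, and the structure of your proof is sound.

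The one step that deserves tightening is the inference ``every corner and arc of $H[b,c]$ turns clockwise, so $H[b,c]$ is a simple convex curve \dots\ hence lies weakly to the left of $\overrightarrow{bc}$.'' A simple curve that turns monotonically clockwise need \emph{not} be an arc of a convex region and need not stay on one side of its chord --- an inward clockwise spiral is a counterexample. What saves the argument here is not monotone turning alone but the fact that $H[b,c]$ is a \emph{shortest} path: since it touches only $\partial\D$ (not $\T$), it is a geodesic around a connected portion of the obstacle $\D$, hence it is an arc of the boundary of the convex hull of that portion together with $\{b,c\}$, and that is what makes it a genuine convex arc. If you insert a sentence invoking the shortest-path property (rather than appealing to ``standard'' behaviour of monotonically turning curves), the proof is complete. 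The auxiliary orientation claim (that $\D$ stays on one side of $H$) is argued only loosely, but the idea --- $\D$ is connected, contains $\B$, and $H$ lies in the same component of $P\setminus\D$ as $s$ and $t$ --- is correct and matches the role the region $R'$ plays in the surrounding lemmas.
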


\begin{lemma}
  \label{lem:intersect_spm}
  The curve $H$ intersects an edge $e$ of $SPM(z)$, with $z \in \{s,t\}$, at
  most twice. Hence, $H$ intersects $\SPM(z)$ at most $O(n)$ times.
\end{lemma}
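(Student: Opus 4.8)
Since $\SPM(z)$ has $O(n)$ edges, the second statement follows from the first, so we bound the number of times $H$ crosses a single edge $e$ of $\SPM(z)$; take $z=\s$ (the case $z=\t$ is symmetric). If $e\subseteq\partial\P$ then $H$, being a shortest path in $\P\setminus\D$ and hence taut wherever it does not touch $\D$, meets $e$ in $O(1)$ subsegments, which suffices. Otherwise $e$ is a \emph{window}: it lies on a line $\ell$ through a reflex vertex $v$ of $\P$, one side of $e$ is the cell $\Delta_v$ of $\SPM(\s)$ with root $v$ (so $\geod(\s,x)=\geod(\s,v)+\|vx\|$ on the closure of that side), the other side is another cell of $\SPM(\s)$, and $\ell$ also carries the segment from $v$ to its predecessor $p_v$ on $\geod(\s,v)$, on the side of $v$ opposite $e$. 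Suppose, for contradiction, that $H$ crosses $e$ at three points $q_1,q_2,q_3$, numbered in the order $H$ (directed from $\s$ to $\t$) meets them; they lie on $e\subseteq\operatorname{int}\P$, $[q_1,q_3]\subseteq e$, and since $q_1,q_2,q_3$ are consecutive crossings of $\ell$ along $H$, the subarcs $H[q_1,q_2]$ and $H[q_2,q_3]$ lie, except at their endpoints, in the two distinct open halfplanes of $\ell$.

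Two ingredients drive the contradiction. The first is a shortcut argument: each subarc $H[q_i,q_{i+1}]$ is a shortest path in $\P\setminus\D$ between two points of $e$ and is not equal to its chord (it lies in an open halfplane of $\ell$); hence the pocket bounded by $H[q_i,q_{i+1}]$ and this chord must contain a point of $\D$ --- it cannot contain a piece of $\partial\P$, as the chord is interior to $\P$ and $H$ is taut off $\D$, so no component of $\partial\P$ can be trapped there --- and, since otherwise the (now $\D$-free) chord would be a strictly shorter replacement, $\D$ actually meets both $[q_1,q_2]$ and $[q_2,q_3]$. The second ingredient is Observation~\ref{obs:left_turns}: along $H$ a left turn occurs only at a vertex of $\T$, while every other bend is a right turn taken along $\partial\D$. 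Applying this at the bend of each subarc $H[q_i,q_{i+1}]$ farthest from $\ell$ (where it turns back toward $\ell$), and reading off the handedness from the direction in which $H$ traverses $\ell$ there, yields for each configuration either a protrusion of $\D$ across $\ell$ that $H$ wraps, or a vertex $w$ of $\T$ through which $H[q_i,q_{i+1}]$ passes and which must lie strictly on the side of $\ell$ opposite that subarc.

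To finish, combine these with the global structure of Section~\ref{sub:placing}. The location of the $\D$-protrusion across $\ell$, and the fact that $\partial\P$ stays clear of $e$ there, pin down which side of $\ell$ the source $\s$ lies on (equivalently, which side of $e$ has root $v$) through the direction of the segment $p_vv\subseteq\ell$. Combined with the vertex $w$ of $\T$, this either forces $w$ to lie ``below'' $\geod(\s,\t)$ and outside $C_{\le d^*}$, i.e.\ inside the region $R_{d^*}$, which is impossible by Observation~\ref{obs:bottom_reg_free_of_vertices_of_T} (equivalently, $w$ would disconnect the connected, geodesically convex set $C_{\le d^*}$ of Lemma~\ref{lem:geodesically_convex}); or else it leaves the chord of $\ell$ between two of the $q_i$ disjoint from $\D$, making that chord a strictly shorter replacement for the corresponding portion of $H$ and contradicting $H=\geod_{\D}(\s,\t)$. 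Either way we reach a contradiction, so $H$ crosses $e$ at most twice.

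The real work is in this last step: one must enumerate the $O(1)$ possibilities for the cyclic order of $q_1,q_2,q_3$ along $e$ and for which open halfplane of $\ell$ each of the two subarcs of $H$ occupies, and in each case use Observation~\ref{obs:left_turns} together with the fact that $e$ is a window of $\SPM(\s)$ --- so $v$ is reflex and $p_vv$ points away from $\s$ --- to fix all the sign/side information needed to land one of the two contradictions. The individual geometric facts (the shortcut argument, the turn classification, and the $\T$-free region $R_{d^*}$) are each short; the difficulty is the bookkeeping that makes them apply uniformly.
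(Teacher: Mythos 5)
Your proposal sketches a more elaborate route than the paper's and, crucially, does not actually carry it out: you end by stating that ``the real work is in this last step'' and describe it as bookkeeping over $O(1)$ cases that ``fix all the sign/side information,'' but you never perform that enumeration. So as written this is a plan, not a proof. Moreover, the plan has two questionable ingredients. First, you assert that a left-turn vertex $w$ of \T on $H[q_i,q_{i+1}]$ ``must lie strictly on the side of $\ell$ opposite that subarc''; but $w$ lies \emph{on} the subarc, which you have placed entirely in one open halfplane of $\ell$, so $w$ lies on the \emph{same} side, and the claimed side-of-$\ell$ dichotomy collapses. Second, you route the contradiction through $C_{\le d^*}$ and $R_{d^*}$ (via Lemma~\ref{lem:geodesically_convex} and Observation~\ref{obs:bottom_reg_free_of_vertices_of_T}); but Lemma~\ref{lem:intersect_spm} is a statement purely about $H$ and $\SPM(z)$, independent of any particular barrier placement, so dragging in a specific optimum $d^*$ is both unnecessary and awkward (which $d^*$ would you use?).

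The paper's proof is considerably shorter and avoids all of this machinery. It simply picks three intersection points $a,b,c$ in order along $H$, splits into the two cases of whether they are also consecutive along $e$, and in each case observes that $H$ bends left only at vertices of \T (your Observation~\ref{obs:left_turns}); this forces a polygon vertex to be enclosed by $H[a,c]$ together with a chord of $e$, both of which lie in the interior of $P$. The enclosed vertex must therefore lie on a hole, contradicting that $P$ is simple. No appeal to windows of $\SPM$, to $\D$-protrusions, or to $C_{\leq d}$/$R_d$ is needed. You should either complete the case analysis you defer to, or switch to this direct simplicity-of-$P$ argument; and you should drop the incorrect claim about $w$'s side of $\ell$.

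One further small point: for a boundary edge $e\subset\partial P$ you argue only that $H$ meets $e$ in $O(1)$ subsegments. That is enough for the $O(n)$ total bound, but not for the stated ``at most twice.'' In fact $H$, being a geodesic in the splinegon $P\setminus\D$ between two boundary points, never runs along or crosses an edge of $\partial P$ at all (it turns only at reflex vertices of \T and along $\partial\D$), so the paper's stronger claim that $H\cap e=\emptyset$ for boundary edges holds.
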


\begin{proof}
If $e$ is a polygon edge, then $H$ cannot intersect $e$ at all, so
  consider the case when $e$ is interior to $P$. Assume, by
  contradiction, that $H$ intersects $e$ at least three times, in
  points $a$, $b$, and $c$, in that order along $H$ (Fig.~\ref{fig:intersection_H_spm}).

\begin{figure}[tb]
    \centering
    \includegraphics[page=1]{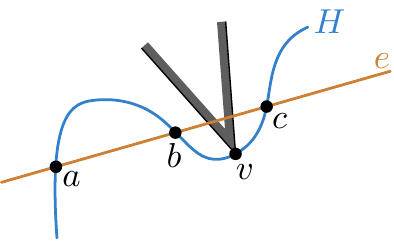}
    \hfil
    \includegraphics[page=2]{intersection_H_spm}
    \caption{The two cases in the proof of Lemma~\ref{lem:intersect_spm}.}
    \label{fig:intersection_H_spm}
  \end{figure}

  If the intersections $a$, $b$, and $c$, are also consecutive along $e$, then
  $H$ makes both a left and right turn in between $a$ and $c$. It is easy to see that since $H$ can bend to the left only at vertices of $\T$
  (Observation~\ref{obs:left_turns}), the region (or one of the two regions) enclosed by $H$ and $ac$ must contain a polygon vertex. Since both $e$ and $H[a,c]$ lie inside $P$, this means that $P$ has a hole. Contradiction.

  If the intersections are not consecutive, (say $a,c,b$), then again there is a region enclosed by $H[a,c]$ and $ab$, containing a polygon vertex. Since both $H[b,c]$ and $cb$ lie inside $P$, this vertex must lie on a hole. Contradiction.
\end{proof}




\paragraph{Algorithm.} We compute intersections of $H$ with the shortest
path maps $\SPM(s)$ and $\SPM(t)$, and subdivide $H$ at each intersection
point. By Lemma~\ref{lem:intersect_spm}, the resulting curve
$H'$ still has only linear complexity. Consider the edges of $H'$ in which $H'$
follows the boundary of $D_v$, for the vertices $v$ of \B. By
Lemma~\ref{lem:opt_placement} for some $v\in\B$ there is an optimal \st that has one
endpoint on such an edge of $H'$ and the other at $v$. Since $H'$ has only $O(n)$
edges we simply try each edge $e$ of $H'$. For all points $r\in e$, the geodesics $\geod(s,r)$ and $\geod(t,r)$ have the same combinatorial structure, i.e., the roots $a_s=\rs{r},a_t=\rt{r}$ stay the same. It follows that we have a constant-size subproblem in which we can compute an optimal \st in constant time.
Specifically, we compute the smallest ellipse $E$ with foci $a_s$ and $a_t$ that contains $e$ and
goes through the point $r$ in which $E$ and $e=D_v$ have a
common tangent (if such a point exists). See Fig.~\ref{fig:elipsoid_algo}. For
that point $r$, we then also know the length of the shortest path
$\geod_{vr}(s,t)=\geod(s,r)+\geod(r,t)$, assuming that we place the \st
$vr$. We then report the point $r$ that maximizes this length over
all edges of $H'$.

Constructing the connected component $P'$ of $P \setminus \D$ that contains $s$
and $t$ takes linear time (Lemma~\ref{lem:compute_P_minD}). This component $P'$
is a simple splinegon, in which we can compute the shortest path $H$ connecting
$s$ to $t$ in $O(n)$ time~\cite{melissartos1992shortest}. Computing $\SPM(s)$
and $\SPM(t)$ also requires linear time~\cite{ghlst-ltavs-87}. We can then walk
along $H$, keeping track of the cells of $\SPM(s)$ and $\SPM(t)$ containing the
current point on $H$.  Computing the ellipse, the point $p$ on the current edge
$e$, and the length of the geodesic takes constant time. It follows that we can
compute an optimal \st incident to \B in linear time. We use the same
procedure to compute an optimal \st incident to \T. We thus obtain the
following result.





\begin{theorem}\label{thm:biggest_stick_simple_polygon}Given a simple polygon $P$ with $n$ vertices and two points $s$ and $t$ on $\partial P$, we can compute a unit length \st that maximizes the length of the shortest path between $s$ and $t$ in $O(n)$ time.\end{theorem}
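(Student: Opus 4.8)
The proof assembles the lemmas of Section~\ref{sub:placing} into an algorithm, so the plan is essentially to certify correctness and bound the running time of that algorithm. First I would dispose of the trivial case: using Lemma~\ref{lem:compute_P_minD} test in $O(n)$ time whether $s$ and $t$ lie in the same component of $P\setminus\D$; if not, a unit \st separates them, $\geod_{ab}(s,t)=\infty$, and we are done. Otherwise let $P'$ be the common component; it is a simple splinegon of linear complexity, and by Lemma~\ref{lem:opt_placement} an optimal \st incident to \B has one endpoint at a vertex $v$ of \B and the other on $H=\geod_\D(s,t)$, which lives on $\partial\D\subset\partial P'$.

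Next I would argue correctness of the per-edge search. Compute $H$ in $O(n)$ time in the splinegon $P'$~\cite{melissartos1992shortest}, compute $\SPM(s)$ and $\SPM(t)$ in $O(n)$ time~\cite{ghlst-ltavs-87}, and overlay their edges onto $H$; by Lemma~\ref{lem:intersect_spm} this adds only $O(n)$ breakpoints, yielding a subdivided curve $H'$ with $O(n)$ edges. On each edge $e$ of $H'$ that follows the boundary of some disk $D_v$, the roots $\rs{r}=a_s$ and $\rt{r}=a_t$ are fixed, so $\geod_{vr}(s,t)=\geod(s,a_s)+\|a_s r\|+\|r a_t\|+\geod(a_t,t)$ is, up to an additive constant, the sum of distances from $r$ to the two foci $a_s,a_t$; maximizing this over $r\in e$ is the elementary problem of finding the point of the circular arc $e$ lying on the largest confocal ellipse, which is solved in $O(1)$ time either at an endpoint of $e$ or at the point where $e$ and an ellipse $E(a_s,a_t,\cdot)$ are tangent. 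Taking the best $r$ over all $O(n)$ edges of $H'$, and symmetrically over \T, gives the optimum by Lemma~\ref{lem:opt_placement} together with Lemma~\ref{lem:atvertices} (the rooting lemma, which guarantees one endpoint is at a vertex of $\partial P$, hence of \B or \T under the assumption of the subsection). Maintaining the current cells of $\SPM(s),\SPM(t)$ by a single synchronized walk along $H$ costs $O(n)$ overall, so the whole procedure is $O(n)$.

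The one subtlety I would be careful about — and which I expect to be the main obstacle in writing the proof cleanly — is the reduction from the full problem to "incident to \B or \T''. Lemma~\ref{lem:touch}, Lemma~\ref{lem:piab} and Lemma~\ref{lem:atvertices} together say that some optimal \st has an endpoint at a vertex of $\partial P$; I need that this vertex may be taken on \B or on \T (not, say, on the part of $\partial P$ that got reclassified, or at $s^\pm,t^\pm$) and that the contact is at an \emph{endpoint} of the \st, so that the disk-boundary analysis of Lemma~\ref{lem:opt_placement} applies. This is where the earlier structural lemmas about right/left turns (Observations~\ref{obs:same_direction_B}, \ref{obs:same-direction}) and the definitions of \B, \T for the path setup do their work, and I would state explicitly that we run the two symmetric searches (incident to \B, incident to \T) and return the better, so no case is missed. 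The remaining items — correctness of the ellipse computation, linear complexity of $H$ in a splinegon, and the $O(n)$ bound on intersections with the shortest path maps — are already established or are routine, so the proof is mostly bookkeeping once the reduction is pinned down.

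\begin{proof}
By Lemma~\ref{lem:touch}, Lemma~\ref{lem:piab}, and Lemma~\ref{lem:atvertices} there is a most vital \st $ab$ with $b$ a vertex of $\partial P$; say $b$ lies on \B (the case $b\in\T$ is symmetric and handled by running the procedure below a second time with the roles of \B and \T exchanged, returning the better of the two results). Using Lemma~\ref{lem:compute_P_minD} we test in $O(n)$ time whether $s$ and $t$ lie in a common component $C$ of $P\setminus\D$. If not, then by Observation~\ref{obs:separatable} there is a unit \st separating $s$ from $t$, giving $\geod_{ab}(s,t)=\infty$, which we report. Otherwise set $P'=C$; this is a simple splinegon of linear complexity, and $H=\geod_\D(s,t)$ is a path along $\partial\D\subset\partial P'$, computable in $O(n)$ time~\cite{melissartos1992shortest}.

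We compute $\SPM(s)$ and $\SPM(t)$ in $O(n)$ time~\cite{ghlst-ltavs-87} and walk along $H$ once, at each step knowing the cell of $\SPM(s)$ and of $\SPM(t)$ containing the current point; at the $O(n)$ points where $H$ crosses an edge of either map (Lemma~\ref{lem:intersect_spm}) we insert a breakpoint, obtaining a curve $H'$ with $O(n)$ edges. For an edge $e$ of $H'$ on which $H'$ follows $\partial D_v$ for a vertex $v$ of \B, all $r\in e$ share the same roots $a_s=\rs{r}$ and $a_t=\rt{r}$, so
\[
  \geod_{vr}(s,t)=\geod(s,a_s)+\|a_s r\|+\|r a_t\|+\geod(a_t,t),
\]
and maximizing this over $r\in e$ amounts to finding the point of the circular arc $e$ on the largest confocal ellipse with foci $a_s,a_t$; this maximum is attained either at an endpoint of $e$ or at the point $r$ where $e$ and such an ellipse are tangent, and is found in $O(1)$ time. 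By Lemma~\ref{lem:opt_placement} an optimal \st incident to a vertex of \B has its free endpoint on some such edge of $H'$, so reporting the best $r$ over all $O(n)$ edges of $H'$ yields an optimal \st incident to \B. Doing the same for \T and returning the better of the two completes the computation. All steps run in $O(n)$ time.
\end{proof}
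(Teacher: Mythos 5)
Your proof is correct and follows essentially the same route as the paper: reduce to the full-blocking test via Lemma~\ref{lem:compute_P_minD}, compute $H=\geod_\D(s,t)$ in the splinegon, subdivide at $\SPM(s)$/$\SPM(t)$ crossings using Lemma~\ref{lem:intersect_spm}, and solve a constant-size confocal-ellipse optimization per edge, invoking Lemma~\ref{lem:opt_placement} and the vertex-rooting lemmas for correctness and handling \B and \T symmetrically. The ``subtlety'' you flag about the vertex not being on \B or \T is a non-issue in the paper's setup, since \B and \T together (with $s^\pm,t^\pm$ declared vertices) cover all of $\partial P$, so your reduction is sound as written.
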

\val[rephrased the thm]{we work in a specific model (s and t are the "segments", placement on S* is not allowed, etc), so it might be too bold to state a general theorem}
\maarten {Also here, I think the old statement is fine, and I prefer to be more precise.}


\subsection{Using multiple vital \sts}
\label{sec:many_simple}

We prove a structural property that even when we are given many \sts,
there always exists an optimal solution in which they glued into a single super-\st.
This implies that our linear-time algorithm from the previous section can still be used to solve the problem.

Clearly, any solution distributes the \sts over some (unknown) number of super-\sts.
First observe that, similarly to Section~\ref{sec:atvertices}, any super-\st must have a vertex at a vertex of \P
%
%
, and the next lemmas prove that it is
suboptimal to have more than one such super-\st.

\frank{I guess we should still say something about testing if we can block the
  path completely if we have multiple \sts
}

The first lemma is a variant of Lemma~\ref{lem:pollack}.
Let $a_1 b_1$ and $a_2 b_2$ be two segments inside $P$, and let $m_1\in a_1 b_1,m_2\in a_2 b_2$ be two points that divide the segments in the same proportion, that is $\vec{m_1}=\gamma\vec{a_1}+(1-\gamma)\vec{b_1},\vec{m_2}=\gamma\vec{a_2}+(1-\gamma)\vec{b_2}$ for some $\gamma\in[0,1]$. Define function $f(\gamma)=\geod(m_1,m_2)$. 

\begin{lemma}\label{lem:middle-path}
$f(\gamma)$ is convex for $\gamma\in[0,1]$: $\geod(m_1,m_2)\leq \gamma\geod(a_1,a_2)+(1-\gamma)\geod(b_1,b_2)\,.$
\end{lemma}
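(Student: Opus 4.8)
The statement is a geodesic analogue of the elementary fact that in Euclidean space the segment connecting corresponding points of two segments has length at most the convex combination of the endpoint distances. The natural plan is to reduce it to Lemma~\ref{lem:pollack} (Pollack's convexity lemma) applied twice, sweeping one parameter at a time.

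First I would fix $\gamma$ and interpolate $m_2$ only. Consider the point $m_1 = \gamma\vec{a_1}+(1-\gamma)\vec{b_1}$ as fixed, and let $m_2$ range over the segment $a_2b_2$ as $\gamma'$ runs over $[0,1]$, i.e. along the geodesic $\geod(a_2,b_2)$... wait — here the subtlety is that $m_2$ moves along the straight segment $a_2b_2$, not along a geodesic, so Lemma~\ref{lem:pollack} does not apply directly. The clean way around this is: the segment $a_2b_2$ lies inside $P$, hence it \emph{is} the geodesic $\geod(a_2,b_2)$ (a straight segment inside a simple polygon is always the shortest path between its endpoints). So Lemma~\ref{lem:pollack} does apply: $x \mapsto \geod(m_1, x)$ is convex for $x \in a_2b_2$, giving $\geod(m_1,m_2) \le \gamma\geod(m_1,a_2) + (1-\gamma)\geod(m_1,b_2)$.

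Next I would handle the two terms on the right by interpolating $m_1$. Again $a_1b_1$ is a straight segment inside $P$, so it equals $\geod(a_1,b_1)$, and Lemma~\ref{lem:pollack} gives that $y \mapsto \geod(y, a_2)$ is convex on $a_1b_1$; in particular $\geod(m_1,a_2) \le \gamma\geod(a_1,a_2) + (1-\gamma)\geod(b_1,a_2)$, and similarly $\geod(m_1,b_2) \le \gamma\geod(a_1,b_2) + (1-\gamma)\geod(b_1,b_2)$. Substituting both into the previous inequality yields
\[
  \geod(m_1,m_2) \le \gamma^2\geod(a_1,a_2) + \gamma(1-\gamma)\bigl(\geod(b_1,a_2)+\geod(a_1,b_2)\bigr) + (1-\gamma)^2\geod(b_1,b_2).
\]
This is not yet the claimed bound $\gamma\geod(a_1,a_2)+(1-\gamma)\geod(b_1,b_2)$, so a naive double application of Pollack is too lossy. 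The fix is to not split symmetrically: instead apply Lemma~\ref{lem:pollack} along a single well-chosen geodesic path connecting $(a_1,a_2)$ to $(b_1,b_2)$ in the product sense. Concretely, parametrize by $\gamma$ the \emph{diagonal} path: let $p(\gamma)$ be the point on $\geod(a_1,a_2)$... the cleanest route is to reparametrize so that a single convexity argument sees both endpoints.

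So the step I actually expect to carry the proof — and the main obstacle — is setting up the right one-parameter family. The trick: for the fixed pair of segments, define $g(\gamma) = \geod(m_1(\gamma), m_2(\gamma))$ where $m_i(\gamma)$ traces $a_ib_i$ linearly, and show $g$ is convex directly. Convexity of $g$ is a local, second-order statement; away from the finitely many $\gamma$ where the geodesic $\geod(m_1,m_2)$ changes combinatorial type, the geodesic is a polygonal chain whose first and last edges depend smoothly on $\gamma$, and $g$ is a sum of Euclidean distances between affine functions of $\gamma$, each of which is convex; at the breakpoints $g$ is continuous and the left/right derivatives jump in the direction that preserves convexity (the geodesic length is the min over homotopy classes, but once we restrict to the correct class it is smooth — one must check the derivative behaves correctly, exactly as in the root-change argument in the proof of Lemma~\ref{lem:touch}). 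Given convexity of $g$, the inequality $g(\gamma) \le \gamma g(1) + (1-\gamma) g(0)$... with the correct orientation ($g(0)=\geod(b_1,b_2)$, $g(1)=\geod(a_1,a_2)$) is immediate, which is exactly the claim. Thus the bulk of the write-up is verifying convexity of $g$: smoothness between breakpoints plus the correct one-sided derivative inequality at breakpoints, both of which mirror arguments already used in the paper.
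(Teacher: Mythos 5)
Your final plan is correct, but it takes a genuinely different route from the paper. You prove convexity of $g(\gamma)=\geod(m_1(\gamma),m_2(\gamma))$ directly by observing that within each combinatorial type the length is a constant plus two terms of the form $\|\gamma u + w\|$ (norms of affine functions, hence convex), and then handling the breakpoints. The paper instead argues by a three-way case analysis on whether $\geod(a_1,a_2)$ and $\geod(b_1,b_2)$ intersect: if they cross it picks an intersection point $c$ and combines Lemma~\ref{lem:pollack} (applied from $c$ to each of the two moving endpoints) with the triangle inequality; if they are disjoint and $\geod(m_1,m_2)$ touches neither, it observes $\geod(m_1,m_2)$ is a straight segment and uses a Euclidean vector estimate; and if $\geod(m_1,m_2)$ touches one of them it recurses through a shared vertex. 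Your route avoids the casework and is closer to the standard proof that geodesic distance in a CAT(0) space is convex along pairs of geodesics, at the cost of having to justify the behaviour at breakpoints; the paper's route avoids differentiation entirely at the cost of the recursion in Case~2b. Two small imprecisions in your write-up worth fixing: in a simple polygon there is only one homotopy class, so the combinatorial-type changes are not ``mins over homotopy classes'' but changes in the vertex sequence of the unique geodesic; and at a breakpoint the one-sided derivatives do not merely ``jump in the right direction'' --- they in fact coincide, because the two formal path lengths (with and without the newly grazed vertex $w$) satisfy $\ell_{\text{without}}\leq\ell_{\text{with}}$ everywhere by the triangle inequality with equality exactly at the breakpoint, so their difference has a zero and a minimum there, forcing equal derivatives. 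Stating this explicitly would close the only gap in your sketch; with it, the argument is complete and, I would say, cleaner than the paper's.
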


\begin{proof}
Consider two cases: when the paths $\geod(a_1,a_2),\geod(b_1,b_2)$ intersect and when they do not.

{\bf Case 1.} Paths $\geod(a_1,a_2)$ and $\geod(b_1,b_2)$ intersect (Fig.~\ref{fig:middle-path}, left). Choose some intersection point $c$. From Lemma~\ref{lem:pollack} it follows that a shortest path from a fixed point to a point on a given segment is a convex function. Then,
\[
\geod(c,m_1)\leq \gamma\geod(c,a_1)+(1-\gamma)\geod(c,b_1)\,\quad\text{ and }\quad
\geod(c,m_2)\leq \gamma\geod(c,a_2)+(1-\gamma)\geod(c,b_2)\,.
\]
By triangle inequality,
\[
\geod(m_1,m_2)\leq\geod(c,m_1)+\geod(c,m_2)\,.
\]
Thus,
\[
\geod(m_1,m_2)\leq \gamma\geod(a_1,a_2)+(1-\gamma)\geod(b_1,b_2)\,.
\]

\begin{figure}
  \centering\hfil
  \includegraphics[width=.3\columnwidth,page=3]{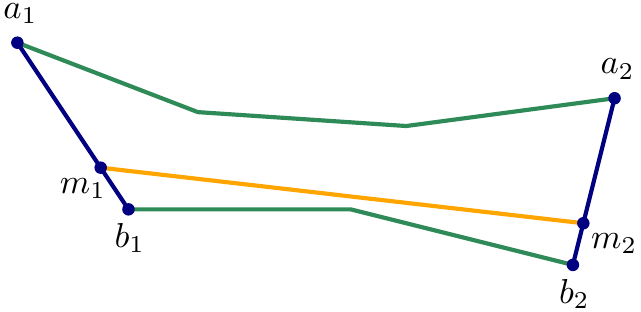}
  \hfil
  \includegraphics[width=.3\columnwidth,page=1]{middle-path}\hfil  \includegraphics[width=.3\columnwidth,page=2]{middle-path}\hfil
  \caption{From left to right: Cases 1, 2a, 2b from Lemma~\ref{lem:middle-path}.}
  \label{fig:middle-path}
\end{figure}

{\bf Case 2.} Paths $\geod(a_1,a_2)$ and $\geod(b_1,b_2)$ do not intersect.

{\bf Case 2a.} Suppose the path $\geod(m_1,m_2)$ is also disjoint from the paths $\geod(a_1,a_2)$ and $\geod(b_1,b_2)$ (Fig.~\ref{fig:middle-path}, middle). Because $P$ is a simple polygon, the path $\geod(m_1,m_2)$ must be a straight-line segment. By simple vector manipulation we can show that
$m_1 m_2\leq \gamma a_1 a_2+(1-\gamma)b_1 b_2$, and thus,
\[
\geod(m_1,m_2)=m_1 m_2\leq \gamma a_1 a_2+(1-\gamma)b_1 b_2 \leq \gamma\geod(a_1,a_2)+(1-\gamma)\geod(b_1,b_2)\,.
\]

{\bf Case 2b.} Finally, w.l.o.g., assume that the path $\geod(m_1,m_2)$ intersects $\geod(b_1,b_2)$, but not $\geod(a_1,a_2)$ (Fig.~\ref{fig:middle-path}, right). Let vertex $b'$ of $P$ lie on both paths $\geod(m_1,m_2)$ and $\geod(b_1,b_2)$. Because $P$ is a simple polygon, there exists a point $a'\in\geod(a_1,a_2)$ visible to $b'$. Let $m'\in a'b'$ divide the segment $a'b'$ in the same proportion, i.e., $\vec{m'}=\gamma\vec{a'}+(1-\gamma)\vec{b'}$. We can apply the same line of reasoning to the segments $a_1 b_1$ and $a'b'$, and the segments $a'b'$ and $a_2 b_2$. Either the case 2a will hold, or we have arrived at the same case 2b but a smaller size instance. Thus, recursively we can show that
\[
\geod(m_1,m')\leq \gamma\geod(a_1,a')+(1-\gamma)\geod(b_1,b')\,,
\]
and
\[
\geod(m',m_2)\leq \gamma\geod(a',a_2)+(1-\gamma)\geod(b',b_2)\,.
\]
By the triangle inequality we have that $\geod(m_1,m_2)\leq\geod(m_1,m')+\geod(m',m_2)$, and thus \val{replaced 1/2s with $\gamma,1-\gamma$}
\[
\geod(m_1,m_2)\leq \gamma\geod(a_1,a_2)+(1-\gamma)\geod(b_1,b_2)\,.\qedhere
\]
%
%
\end{proof}

\begin{lemma}\label{lem:two-sticks}
Given two \sts, possibly of different lengths, an optimal configuration will stack them into a single super-\st.
\end{lemma}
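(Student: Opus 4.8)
The plan is to take an optimal placement of the two barriers $B_1,B_2$ (of lengths $\ell_1,\ell_2$) and transform it into a stacked one without decreasing $\geod_{B_1\cup B_2}(\s,\t)$. First I would put the configuration into a normal form: by the observations preceding this lemma (the super-barrier analogues of Lemmas~\ref{lem:touch}--\ref{lem:atvertices}) we may assume each $B_i$ touches $\partial\P$, with one endpoint --- its \emph{root} $u_i$ --- at a vertex of $\P$, and the other endpoint $r_i$ free. Let $\pi=\geod_{B_1\cup B_2}(\s,\t)$. If $\pi$ is disjoint from one barrier, say $B_2$, then $B_2$ is inert; detaching it and re-attaching it collinearly at the free tip $r_1$ of $B_1$ yields a single super-barrier of length $\ell_1+\ell_2$ that blocks a superset of $B_1\cup B_2$, so $\geod$ does not decrease and we are done. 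Hence I may assume $\pi$ meets both barriers. Since a taut path avoiding the degenerate ``peninsula'' $u_ir_i$ touches it only at the tip $r_i$ (where it turns), $\pi$ passes through $r_1$ and $r_2$; naming them so that the order along $\pi$ is $\s,r_1,r_2,\t$ gives $\geod_{B_1\cup B_2}(\s,\t)=\geod(\s,r_1)+\geod(r_1,r_2)+\geod(r_2,\t)$.

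Next I would dispose of the ``opposite chains'' case. If $B_1$ is rooted on $\B$ and $B_2$ on $\T$ (or vice versa), then by Observation~\ref{obs:same-direction} and its $\T$-mirror $\pi$ turns right at one tip and left at the other, and a short argument of the same flavour as the inert case shows that in a simple polygon the second barrier is then already shadowed by the first --- the relevant subpath of $\pi$ does not in fact touch it --- contradicting that $\pi$ meets both. So we are left with both roots on the same chain, say $\B$, and by Observations~\ref{obs:same_direction_B} and~\ref{obs:same-direction} $\pi$ turns the same way at $r_1$ and $r_2$; both barriers then lie on the $\B$-side of the subpath $\pi[r_1,r_2]$.

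The crux is then to show that a single collinear, end-joined super-barrier of length $\ell_1+\ell_2$ (rooted at a vertex of $\P$ as in the normal form) can match this detour. I would realise such a barrier as the limit of a rigid motion that slides and rotates $B_2$ onto the supporting line of $B_1$ while keeping a common endpoint, and track $\geod$ throughout the motion. The role of Lemma~\ref{lem:middle-path} is that, as one barrier is interpolated into alignment with the other, the relevant geodesic lengths --- the two ``bumps'' that $\pi$ makes around $r_1$ and $r_2$ --- depend convexly on the interpolation parameter, so the single combined ``bump'' around the tip of the aligned super-barrier is at least as long: morally, a barrier reaching farther from $\B$ forces at least as long a detour, and convexity is precisely what forbids two short barriers from beating one long one. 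I expect the main obstacle to be exactly this monotonicity bookkeeping: as in the proof of Lemma~\ref{lem:touch}, the shortest path changes combinatorially only finitely often during the motion (each time a root of a cell of $\SPM(\s)$ or $\SPM(\t)$ changes), and one must verify that the bound is preserved across every such event, while also keeping the growing super-barrier inside $\P$ and off $S^*\cup T^*$ (complete disconnection of \s from \t, should it occur, only helps, since then $\geod=\infty$).
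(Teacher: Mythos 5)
Your setup is right — reduce to the case where both barriers touch $\partial\P$ at a vertex, are disjoint, and both are touched (at their free tips $m_1,m_2$) by the blocked geodesic $\pi$, so that $\geod_{B_1\cup B_2}(\s,\t)=\geod(\s,m_1)+\geod(m_1,m_2)+\geod(m_2,\t)$ --- and you correctly identify Lemma~\ref{lem:pollack} and Lemma~\ref{lem:middle-path} as the tools. But the crux step is misconceived, and the worries you attach to it are the wrong ones.

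The paper does \emph{not} realise the super-barrier as the limit of a rigid slide-and-rotate of $B_2$ onto the supporting line of $B_1$. Lemma~\ref{lem:middle-path} is a statement about \emph{linear interpolation of segment endpoints}, not about rigid motions, so it gives you nothing directly for that animation. The interpolation that actually works keeps \emph{both} roots $p,q$ fixed and transfers length between the two barriers: write $\alpha=|pm_1|$, $\beta=|qm_2|$, extend $pm_1$ past $m_1$ by $\beta$ to a point $a$ and extend $qm_2$ past $m_2$ by $\alpha$ to a point $b$ (if either extension leaves $\P$, the original configuration was already non-optimal). Then $m_1$ divides $pa$, and $m_2$ divides $qb$, both in ratio $\gamma:(1-\gamma)$ with $\gamma=\alpha/(\alpha+\beta)$. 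Pollack's lemma bounds $\geod(\s,m_1)$ and $\geod(m_2,\t)$ as convex combinations, Lemma~\ref{lem:middle-path} bounds $\geod(m_1,m_2)$ as the matching convex combination, and summing yields, in one shot,
\[
\geod_{B_1\cup B_2}(\s,\t)\ \le\ \gamma\,\geod_{pa}(\s,\t)\ +\ (1-\gamma)\,\geod_{qb}(\s,\t),
\]
where $pa$ and $qb$ are each single super-barriers of length $\alpha+\beta$. Hence one of them achieves at least as much detour, and there is nothing further to track: the inequality holds globally, so the ``monotonicity bookkeeping across combinatorial events'' you anticipate as the main obstacle simply does not arise. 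Likewise your separate treatment of the opposite-chains case is unnecessary --- the convexity bound is purely metric and does not care whether $p$ and $q$ lie on $\B$ or $\T$. As written, your proposal leaves the decisive step (why a combined barrier matches the two-barrier detour) as an unproved expectation tied to a deformation for which the stated lemma does not apply; that is the gap.
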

\begin{proof}
  First note that if the two \sts intersect, their endpoints must
  coincide. To see this, treat one of the \sts as (a part of) a hole; then,
  by Lemmas~\ref{lem:pi_atildeb} and~\ref{lem:piab_holes} (which are proved in
  Section~\ref{sec:poly} analogously to Lemmas~\ref{lem:touch}
  and~\ref{lem:piab}) and Lemma~\ref{lem:atvertices}, the \sts must touch
  each other with their vertices. It is easy to see that at the point of the
  touching, the \sts must form a $180^{\circ}$ angle.  \frank{I don't think
    this is true, again due to weirdness around $S^*$;
  } \val{In this case the same
    complete blockage can be attained by two aligned \sts. Let's work under
    the assumption that complete blockage is not possible (then I think the
    claim about 180 deg holds). If needed, we may assume $S^*$
    is longer than all \sts} In what follows we will assume that the
  \sts are disjoint.  \frank{Ah, you are right, it was possible to block
    with an aligned \st. in case $S^*$ is basically a reflex vertex
    (i.e. replace reflex vtx by a small seg) then you do need two
    non-consecutive \sts. Admittedly, in that case you can indeed also cut
    off $s$ from $t$. We say that we can test that with the medial axis
    approach; but even that argument is not fully complete: the medial axis
    gives us a way of testing if we can block with two \sts (i.e. is there a
    point on medial axis at dist $\leq 1$ from \B and \T), but what if you
    would need 3 segs to fully block $s$ from $t$? For that we need that you
    would glue the \sts together again. (Or maybe that is not even possible
    in our current model). Long story short: I believe this will all work out,
    but we should be more specific about it.}

Let the two \sts be the segments $p m_1,q m_2$ where $p,q$ are vertices of \P, and let $\alpha,\beta$ be their lengths. Let point $a$ lie on the extension of segment $p m_1$ into \P with $m_1 a=q m_2=\beta$ (i.e., extend $p m_1$ for distance $\beta$); similarly, let $b$ lie on the extension of $q m_2$ with $m_2 b=p m_1=\alpha$ (Fig.~\ref{fig:two-sticks}). If $m_1 a$ (or $m_2 b$) intersects the polygon boundary, then either $s$ and $t$ get disconnected, or the shortest path from $s$ to $t$ cannot pass through $m_1$ (or $m_2$ resp.). In both cases, the original configuration of the \sts was not optimal.

\begin{figure}
\centering
\includegraphics[width=0.33\textwidth,page=1]{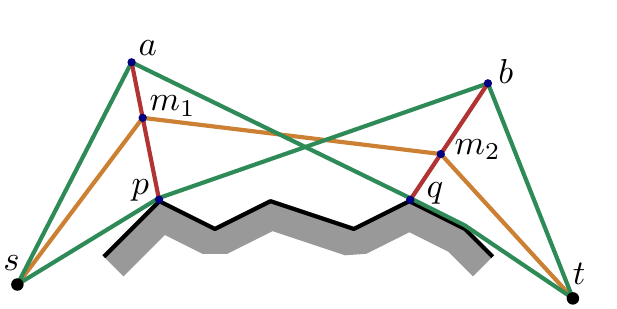}\hfill
\includegraphics[width=0.33\textwidth,page=3]{two-sticks}\hfill
\includegraphics[width=0.33\textwidth,page=2]{two-sticks}
  \caption{Illustration to Lemma~\ref{lem:two-sticks}. The potential \sts are red.}
  \label{fig:two-sticks}
\end{figure}

Then, assume that $m_1 a$ and $m_2 b$ are interior to $P$. W.l.o.g.\ assume that $m_1$ lies between $s$ and $m_2$ on the shortest path from $s$ to $t$. For simplicity assume that the path $\geod(s,b)$ passes through point $p$, and that the path $\geod(a,t)$ passes through point $q$ (Fig.~\ref{fig:two-sticks}, left and middle). Later we will lift this assumption.

Applying Lemma~\ref{lem:pollack} we get that
\[
\geod(s,m_1)\leq \frac{\alpha}{\alpha+\beta}\geod(s,a) + \frac{\beta}{\alpha+\beta}\geod(s,p)\,,
\]
and
\[
\geod(m_2,t)\leq \frac{\beta}{\alpha+\beta}\geod(b,t) + \frac{\alpha}{\alpha+\beta}\geod(q,t)\,.
\]
Applying Lemma~\ref{lem:middle-path} we get that
\[
\geod(m_1,m_2)\leq \frac{\alpha}{\alpha+\beta}\geod(a,q) + \frac{\beta}{\alpha+\beta}\geod(p,b)\,.
\]
Then, summing up these inequalities we get
\begin{multline*}
\geod_{p m_1,q m_2}(s,t) = \geod(s,m_1) + \geod(m_1,m_2) + \geod(m_2,t) \\
\leq \frac{\alpha}{\alpha+\beta}(\geod(s,a)+\geod(a,q)+\geod(q,t)) + \frac{\beta}{\alpha+\beta}(\geod(s,p)+\geod(p,b)+\geod(b,t)) \\
=\frac{\alpha}{\alpha+\beta}\geod_{pa}(s,t) + \frac{\beta}{\alpha+\beta}\geod_{pb}(s,t)\,.
\end{multline*}
Thus, either $\geod_{p m_1,q m_2}(s,t) \leq \geod_{pa}(s,t)$ or $\geod_{p m_1,q m_2}(s,t) \leq \geod_{pb}(s,t)$ (or both).

Now assume that path $\geod(s,b)$ intersects segment $pa$ in point $p'$, and path $\geod(a,t)$ intersects $qb$ in point $q'$ (Fig.~\ref{fig:two-sticks}, right). Then, let points $m'_1$ and $m'_2$ be the points dividing the segments $p'a$ and $q'b$ in the proportion $\alpha:\beta$ and $\beta:\alpha$ respectively. Applying the same argument as above, we get that
\[
\geod_{p m_1,q m_2}(s,t) \leq \frac{\alpha}{\alpha+\beta}\geod_{pa}(s,t) + \frac{\beta}{\alpha+\beta}\geod_{pb}(s,t)\,.
\]
It also must hold that
\[
\geod_{p m'_1,q m'_2}(s,t)\geq\geod_{p m_1,q m_2}(s,t)\,.
\]
We can again conclude that at least one of the following inequalities hold, $\geod_{p m_1,q m_2}(s,t) \leq \geod_{pa}(s,t)$ or $\geod_{p m_1,q m_2}(s,t) \leq \geod_{pb}(s,t)$. That is, the original configuration of the two blocking \sts is not optimal. 
\end{proof}

Lemma~\ref{lem:two-sticks} resolves the question for two \sts that are attached to the boundary of $\P$.
We use induction to extend the result to an arbitrary number of \sts.

\begin{theorem}
\label{thm:many-sticks}
Given a simple polygon $P$ with $n$ vertices, two points $s$ and $t$ on $\partial P$, and $k$ unit-length \sts, the optimal placement of the \sts which maximizes the length of the shortest path between $s$ and $t$ consists of a single super-\st.
\end {theorem}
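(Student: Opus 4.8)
The plan is to prove Theorem~\ref{thm:many-sticks} by induction on the number of super-\sts in an optimal solution, using Lemma~\ref{lem:two-sticks} as the base case / merging step. First I would recall the structural facts already established: by the discussion at the start of Section~\ref{sec:many_simple} (the analogue of Section~\ref{sec:atvertices}), in an optimal solution the $k$ unit \sts are partitioned into some number $\ell$ of maximal super-\sts $\Sigma_1,\dots,\Sigma_\ell$, each of which is a single straight segment with one endpoint ``rooted'' at a vertex of $P$, and we may assume that no placement completely disconnects $s$ from $t$ (otherwise the value is $\infty$ and any configuration achieving the disconnection is trivially optimal, and in particular one can realize it by aligning all segments, as noted in the remark inside the proof of Lemma~\ref{lem:two-sticks}). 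So it suffices to show that if $\ell \ge 2$ we can merge two of the super-\sts into one without decreasing $\geod(s,t)$.

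The key step is to apply Lemma~\ref{lem:two-sticks} to two of the super-\sts. Here I would pick $\Sigma_i$ and $\Sigma_j$ to be two super-\sts that appear \emph{consecutively} along the shortest path $\geod(s,t)$ in the current (optimal) configuration — that is, such that the portion of $\geod(s,t)$ between leaving $\Sigma_i$ and reaching $\Sigma_j$ touches no other super-\st. (If the shortest path touches at most one super-\st, then all other super-\sts are irrelevant — removing them does not change $\geod(s,t)$ — so they could have been placed lined up with the one that matters; this degenerate situation needs a sentence.) With $\Sigma_i, \Sigma_j$ consecutive on the path, the rest of the configuration (all other super-\sts and the polygon boundary) can be regarded as a fixed ``environment'': treat $P$ together with these other obstacles as the domain in which Lemma~\ref{lem:two-sticks} is invoked. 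Lemma~\ref{lem:two-sticks} then tells us there is a way to re-place the total length of $\Sigma_i$ and $\Sigma_j$ as a \emph{single} super-\st (rooted at one of the two vertices $p,q$) so that $\geod_{\,\cdot\,}(s,t)$ does not decrease. This strictly reduces $\ell$, and iterating (or, formally, applying the induction hypothesis to the new configuration with $\ell-1$ super-\sts and $k$ total length) drives $\ell$ down to $1$.

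There is one subtlety that I expect to be the main obstacle, and it must be handled carefully: Lemma~\ref{lem:two-sticks} is stated for two \sts ``attached to the boundary of $P$'' in a simple polygon, whereas after merging steps our ``environment'' contains other super-\sts, so the domain is no longer simple. The fix is to observe that the lemmas underlying Lemma~\ref{lem:two-sticks} (Lemma~\ref{lem:pollack}, Lemma~\ref{lem:middle-path}, and the rooting lemmas~\ref{lem:pi_atildeb}, \ref{lem:piab_holes}, \ref{lem:atvertices}) all have versions valid in polygons with holes — indeed the paper explicitly forward-references \ref{lem:pi_atildeb} and \ref{lem:piab_holes} for exactly this reason — so treating each previously-placed super-\st as a (degenerate, zero-area) hole, the argument of Lemma~\ref{lem:two-sticks} goes through verbatim with $\geod$ replaced by $\geod_{\text{other super-\sts}}$. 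The only place where simplicity of $P$ was genuinely used inside the proof of Lemma~\ref{lem:two-sticks} is Case 2a/2b of Lemma~\ref{lem:middle-path}, where a disjoint geodesic was argued to be a straight segment and a visibility point $a'$ was extracted; with holes one simply allows $\geod(m_1,m_2)$ to bend at hole vertices and the same recursive descent (Case 2b) terminates. I would state this as an explicit remark: ``all ingredients of Lemma~\ref{lem:two-sticks} hold in a polygon with holes,'' and then the induction is clean.

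Concretely, the write-up would read: assume the claim holds for any instance whose optimal solution uses at most $\ell-1$ super-\sts; given an optimal solution with $\ell \ge 2$ super-\sts, if $\geod(s,t)$ meets at most one of them, realign all the others with it (not decreasing the value, since they are not tight) and we are in the $\ell=1$ case; otherwise take two super-\sts consecutive along $\geod(s,t)$, apply the holes-version of Lemma~\ref{lem:two-sticks} with the remaining super-\sts frozen as obstacles to merge them into one, obtaining an optimal solution with $\ell-1$ super-\sts, and invoke the induction hypothesis. The base case $\ell = 1$ is immediate. Hence the optimum is always attained by a single super-\st, and by Theorem~\ref{thm:biggest_stick_simple_polygon} it can be found in $O(n)$ time.
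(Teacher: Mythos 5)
Your proof takes a genuinely different inductive route from the paper's, and the difference matters. The paper inducts on the number $k$ of \emph{unit} \sts: it first shows the optimal rigid configuration touches $\partial P$, glues one boundary-touching \st to $P$ to get a slit polygon $P'$, applies the induction hypothesis in $P'$ to conclude the remaining $k-1$ \sts form a super-\st, and then applies Lemma~\ref{lem:two-sticks} \emph{in the original simple polygon $P$} to two \sts of lengths $1$ and $k-1$. You instead induct on the number $\ell$ of super-\sts in an optimal solution and merge two super-\sts that are consecutive along $\geod(s,t)$ while freezing the other $\ell-2$ as obstacles. The paper's ordering is designed precisely so that Lemma~\ref{lem:two-sticks} (and hence Lemmas~\ref{lem:pollack} and~\ref{lem:middle-path}, both stated for simple polygons) is never invoked in a modified domain; your ordering requires it in $P$ augmented with extra obstacles.

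That is where there is a genuine gap. You justify this by claiming the ingredients of Lemma~\ref{lem:two-sticks} ``all have versions valid in polygons with holes'' and point to the forward references Lemma~\ref{lem:pi_atildeb} and~\ref{lem:piab_holes}. But those two lemmas concern which \emph{type} of shortest path the optimal \st admits, not geodesic convexity, and in fact Lemma~\ref{lem:pollack} is \emph{false} in polygons with holes: with a free-floating obstacle between $p$ and $\geod(q,r)$, the distance $\geod(p,x)$ can decrease, increase, decrease, increase again as the shortest $p$-$x$ path flips from one side of the obstacle to the other. Consequently Lemma~\ref{lem:middle-path} has no ``holes version,'' and your proposed fix (``allow $\geod(m_1,m_2)$ to bend at hole vertices'') does not rescue Case~2a, whose entire point is that a disjoint $\geod(m_1,m_2)$ must be straight so that $m_1m_2\le \gamma\,a_1a_2 + (1-\gamma)\,b_1b_2$ follows by vector algebra. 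What would save your approach is a different observation you almost make but then discard: since every frozen super-\st is rooted at a vertex of $P$, the augmented domain is \emph{not} a polygon with holes but a weakly simple, still simply connected polygon with slits, in which geodesics are unique and the convexity lemmas do extend. Even then, Cases~2a/2b of Lemma~\ref{lem:middle-path} need to be re-examined when slits can protrude into the region bounded by the three geodesics, so the extension must be argued, not asserted. The paper's induction on $k$ sidesteps this entirely by reapplying Lemma~\ref{lem:two-sticks} only in the original simple $P$; you should either adopt that ordering, or replace ``holds in polygons with holes'' with a careful simply-connected-with-slits argument.
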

\begin {proof}
  We use induction on $k$.
  The base case for $k=2$ follows from Lemma~\ref{lem:two-sticks}.

  First, we argue that Lemma~\ref {lem:touch} generalizes to arbitrary rigid configurations of \sts: as we translate the configuration, the length of the shortest path above (below) the configuration is a convex differentiable function in the translation vector, and hence there must still be at least one direction in which the configuration can be translated so as to increase the lengths of both paths.
  Thus, we may assume that the optimal configuration of the $k$ \sts touches $\partial\P$.

  But now, consider (one of) the \st that touches $\partial\P$, and consider it to be part of $\P$.
  By the induction hypothesis, the remaining $k-1$ \sts are combined into a single super-\st. But then, the optimal solution for $k$ \sts must be the same as the optimal solution to a problem in which we have only two \sts: one of length $1$ and one of length $k-1$. We apply Lemma~\ref{lem:two-sticks} again and conclude that the optimal solution uses, in fact, a single super-\st of length $k$.
\end {proof}

Theorem~\ref{thm:many-sticks} implies that 
placing an arbitrary set of \sts reduces to 
placing just one super-\st, so our linear-time algorithm from the previous section applies.

\subsection{Most vital \sts for the flow}\label{sec:flowSimple}In simple polygons the critical graph has only two vertices -- \B and \T (which, for the flow blocking, are the $T\textrm-S$ and $S\textrm-T$ parts of $\partial\P$; refer to Fig.~\ref{fig:setup}). Flow blocking thus boils down to finding the shortest \B-\T connection (then all the \sts will be placed along the connecting segment) -- a problem that was solved in linear time in~\cite{mitchell90}.

\section{Hardness results}\label{sec:hard}

In the remainder of the paper \P is a polygonal domain with holes (as defined in Section~\ref{sec:prelim}).
%
%
We first show that in general it is hard to decide whether full blockage can be achieved, i.e., whether it is possible to decrease the $S$-$T$ flow to 0 or to lengthen the \s-\t path to infinity:

\begin{theorem}
  \label{thm:3part}
  Flow-HBD and path-HBD are NP-hard.
\end{theorem}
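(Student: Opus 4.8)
The plan is to reduce from a strongly NP-hard number problem — most naturally \textsc{3-Partition}, given the label \texttt{3part} on the theorem. Recall that in \textsc{3-Partition} we are given $3m$ positive integers $x_1,\dots,x_{3m}$ with $\sum_i x_i = mB$ and $B/4 < x_i < B/2$, and we must decide whether they can be split into $m$ triples each summing to exactly $B$. The overall idea is to build a polygonal domain with holes in which ``full blockage'' (reducing the $S$-$T$ flow to $0$, equivalently lengthening the $s$-$t$ path to $\infty$) is possible \emph{if and only if} the instance is a YES-instance, and in which the total length of \sts available is exactly $\sum_i x_i = mB$ (suitably scaled). Each integer $x_i$ is encoded as a \st of length proportional to $x_i$ (this is where the ``different lengths'' D in HBD is essential, and why the same construction does not immediately give hardness of HB1).

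The geometry I would use: recall from Section~\ref{sec:prelim} that the min cut in a domain with holes is governed by the shortest \B--\T path in the critical graph, whose vertices are the holes together with \B and \T, and whose edge lengths are the gap distances between obstacles. So to block the flow completely we must ``bridge'' every gap along some \B--\T curve using \sts whose total length covers the sum of the gap widths on that curve. I would lay out $m$ ``channels'' in parallel between \B and \T; crossing channel $j$ requires bridging a sequence of three consecutive gaps (plus fixed ``connector'' gaps) whose widths, if the channel is to be sealed, force the three \sts assigned to it to have lengths summing to exactly $B$. The bounds $B/4 < x_i < B/2$ guarantee that a gap of width $B$ can be bridged only by exactly three of our \st-pieces, never by one, two, or four — this is the standard reason \textsc{3-Partition} is the right source problem, and it prevents ``cheating'' configurations. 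Because total \st length is exactly $mB$ and we must seal all $m$ channels, a full blockage exists iff every channel receives a triple summing to $B$, i.e.\ iff the \textsc{3-Partition} instance is feasible. For the path version the same domain works verbatim: by the model of Section~\ref{sec:prelim}, surrounding $s$ (or $t$) is forbidden, so making $\geod_{\text{\sts}}(s,t)=\infty$ is exactly the event ``\sts $\cup$ holes separate $s$ from $t$'', which again is a \B--\T crossing being fully bridged.

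The main obstacle — and the step to be most careful about — is the \emph{soundness} direction combined with geometric rigidity: I must ensure that in \emph{any} blocking configuration the \sts behave as the combinatorial argument assumes. Several things can go wrong that need gadget engineering: (i) a single \st of length $x_i$ could be split across two different gaps, or two \sts could be glued (cf.\ Lemma~\ref{lem:two-sticks}) and the super-\st then shared between channels; (ii) \sts could run diagonally, or partly inside holes (the paper allows \sts to intersect holes), using their length ``inefficiently'' in a way that nonetheless helps block; (iii) a \st could bridge a gap only partially while a hole boundary does the rest. I would neutralise (i)–(iii) by making the channels narrow corridors whose only blockable cross-sections are short straight segments of prescribed width, flanked by holes placed so that any \st not lying (nearly) straight across a single such cross-section is wasted, and by scaling so that the arithmetic slack is smaller than any geometric rounding. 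Concretely one picks a large integer scaling factor $N$, makes all gap widths and \st lengths integer multiples of $1/N$ after scaling by, say, $\mathrm{poly}(m,B)$, and argues that because total length equals total required width \emph{exactly}, there is zero slack: every unit of \st length must be spent bridging, every bridged gap must be covered exactly, and hence the assignment of \sts to gaps is forced to be a partition into the prescribed groups. Writing this rigidity argument carefully (ideally via the critical-graph characterisation of the cut, so that one reasons about gap widths rather than about arbitrary curves) is the crux; the counting itself is then immediate.
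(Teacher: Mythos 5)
Your plan is essentially identical to the paper's proof: both reduce from \textsc{3-Partition} by building $m$ parallel width-$B$ channels between $S$ and $T$ and supplying one \st of length $a_i$ per integer, so that full blockage (flow $=0$, or $s$-$t$ path of infinite length) is achievable iff the integers partition into triples of sum $B$. The paper states this reduction tersely (with a figure) and does not spell out the rigidity/cheating concerns you flag, but the construction and source problem coincide.
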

\begin{proof}
\begin{figure}
\centering
\includegraphics{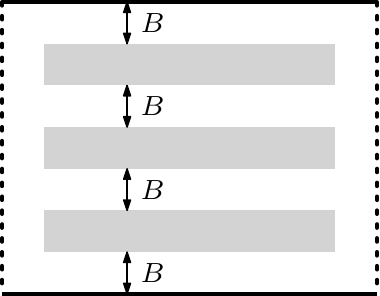}
\caption{Full blockage is possible iff each channel is fully blocked}
\label{fig:3part}
\end{figure}
Figure~\ref{fig:3part} shows the reduction for flow-HBD (for path-HBD, just replace the edges $S,T$ with the points \s,\t): given an instance of 3-Partition (``Can $3m$ given integers $\{a_1\dots a_{3m}\}$ be split into triples so that the sum of integers in each triple is equal to $B=\sum a_i/m$?''), construct the domain with $m$ width-$B$ channels between $S$ and $T$, and have a length-$a_i$ \st (line segment) for each integer $i=1\dots3m$; the \sts can cut $S$ from $T$ iff the 3-Partition instance is feasible.
\end{proof}
With different-length \sts, the full blockage remains (weakly) hard even if $n=O(1)$ (and the number of holes is, of course, also small):
\begin{theorem}
Flow-hBD and path-hBD are weakly NP-hard.
\end{theorem}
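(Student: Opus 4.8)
The plan is to reduce from \textsc{Partition}, in the same spirit as the reduction from \textsc{3-Partition} in Theorem~\ref{thm:3part}, but using only a \emph{constant} number of channels so that the resulting domain has $O(1)$ vertices (and hence $O(1)$ holes), which also yields the claimed ``$n=O(1)$'' strengthening. Given a \textsc{Partition} instance $a_1,\dots,a_n$ with $\sum_i a_i = 2A$ --- where we may assume every $a_i\le A$, since otherwise the instance is trivially infeasible --- I would build a domain \P of constant complexity: a rectangle whose interior is divided by a single thin rectangular hole (a ``wall'') into two parallel channels, each of width exactly $A$, both running from the source side to the sink side. The wall almost spans the length of the channels (leaving only a narrow connecting strip at each end), so that any cut from the bottom \B to the top \T of \P must traverse \emph{both} channels. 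As \sts we take $n$ segments of lengths $a_1,\dots,a_n$. For flow-hBD the source and sink are the edges $S$ and $T$; for path-hBD we replace them by the points \s and \t, exactly as in the proof of Theorem~\ref{thm:3part}, placing \s and \t on the short end-edges so that the degenerate ``cannot shut the door around \s'' conventions of Section~\ref{sec:prelim} are irrelevant (the cut we build lies in the interior, away from \s and \t, and never forms a forbidden loop).

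For correctness I would argue the two directions. If the \textsc{Partition} instance is feasible, split $\{a_i\}$ into two groups of total length $A$ each; aligning the \sts of group $j$ end-to-end yields a super-\st of length exactly $A$ that spans channel $j$ from its outer wall to the central hole, and the two super-\sts together with the hole form an $S$--$T$ curtain, so the flow drops to $0$ (the \s--\t distance becomes $\infty$). Conversely, suppose full blockage is achieved. Then the placed \sts together with the wall contain a curve from \B to \T; since the wall alone does not connect \B to \T, this curve must pass through both channels, and the portion lying inside channel $j$ is a connected union of \sts joining the two parallel walls of the channel (distance $A$ apart). Projecting onto the direction across the channel, the \sts used in channel $j$ must have total length at least $A$. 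Summing over the two channels and accounting for any \st straddling the wall (or routing around its ends): if a \st of length $a_i$ contributes projection $p$ to one channel and $q$ to the other, its length is at least $p+w+q$ where $w>0$ is the wall thickness, so the total \st length would be at least $(p+w+q)+(A-p)+(A-q)=2A+w>2A$, contradicting $\sum_i a_i=2A$. Hence no \st straddles the wall, no \st length is wasted, and each channel receives \sts summing to exactly $A$; the induced two-way split is a valid partition.

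The step I expect to be the main obstacle is precisely this converse analysis: making rigorous that an $S$--$T$ separating set really decomposes into per-channel sub-curtains each of length $\ge A$, and ruling out all ``cheating'' placements (a \st straddling the wall, a \st going around the narrow end strips, or several \sts jointly reusing length across the two channels, or a cut routed entirely through an end strip). This forces one to pin down the gadget's proportions carefully --- long, narrow channels, a wall that reaches close to both ends, and end strips so narrow that cutting through them costs more than $2A$ of \st length --- so that the tight length budget $2A$ compels the clean behaviour. With the geometry of Figure~\ref{fig:3part} as a template these verifications are routine but need to be stated.

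Finally, since the \st lengths $a_i$ are encoded in binary and can be exponentially large in the input size, while the constructed domain has only $O(1)$ vertices and $O(1)$ holes, this is a weak (not strong) NP-hardness reduction, establishing exactly that flow-hBD and path-hBD are weakly NP-hard (and already for $n=O(1)$).
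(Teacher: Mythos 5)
Your proposal is correct and takes essentially the same approach as the paper: a reduction from 2\textsc{-Partition} using two width-$B$ channels separated by a central wall, with \sts of lengths $a_1,\dots,a_m$, so that full blockage is possible iff the instance is feasible. The paper states this in one line by analogy to Theorem~\ref{thm:3part}; your more detailed converse analysis (projection bound, ruling out straddling the wall) fills in exactly the routine verifications the paper leaves implicit.
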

\begin{proof}
Given an instance of 2-Partition (``Can $m$ given integers $\{a_1\dots a_{m}\}$ be split into two sets so that the sum of integers in set is equal to $B=\sum a_i/2$?''), construct the domain with $2$ width-$B$ channels between $S$ and $T$ (analogously to the proof of Theorem~\ref{thm:3part}), and have a length-$a_i$ \st for each integer $i=1\dots m$; the \sts can cut $S$ from $T$ iff the 2-Partition instance is feasible.
\end{proof}
If all \sts have the same length, deciding possibility of full blockage is polynomial; in fact, Section~\ref{sec:flow} shows that even the more general flow-HB1 problem (finding how to maximally decrease the flow) can be solved in polynomial time. On the contrary, (partial) path blockage is hard even for unit \sts:
\begin{theorem}
\label{thm:multi_stick_hard}
Path-HB1 is weakly NP-hard.
\end{theorem}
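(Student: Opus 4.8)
The plan is to reduce from Partition (or Subset-Sum), the same source of weak hardness used for the different-length versions, but now we must simulate arbitrary integer lengths using many \emph{unit} \sts. The key idea is that by Theorem~\ref{thm:many-sticks}-style reasoning, a contiguous bundle of $a_i$ unit \sts placed collinearly behaves exactly like a single segment \st of length $a_i$; so a budget of $k=\sum_i a_i$ unit \sts can be partitioned by the solver into groups, and each group effectively yields one super-\st whose length is the size of that group. The construction should force the optimal solution to use exactly the ``natural'' grouping dictated by the geometry, so that deciding whether the path can be lengthened past a target threshold encodes the Partition question.

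Concretely, I would build a polygonal domain with holes consisting of several parallel ``channels'' (gadgets), one per integer $a_i$, arranged between \s and \t so that the geodesic is forced to thread through each of them in turn, and a ``target'' gadget whose blocking requires total super-\st length exactly $B=\tfrac12\sum a_i$. The channel for $a_i$ should have a gap of width exactly $a_i$ bounded by two holes, so that the \emph{only} way a collinear bundle of unit \sts helps is if it is placed across that gap, and it is useful there only if the bundle has size $\ge a_i$ (a shorter bundle leaves a passage and contributes nothing; a longer bundle wastes \sts). Then, with $k=\sum a_i$ unit \sts total, achieving the target detour requires allocating a set $I$ of channels that are fully plugged with $\sum_{i\in I} a_i$ \sts and leaving $\sum_{i\notin I} a_i = \sum_{i\in I} a_i$ \sts — i.e., the allocation is a balanced Partition. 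Feasibility of Partition $\iff$ the optimal path length reaches the prescribed threshold. One must also argue the converse direction carefully: any placement not of this ``one-super-\st-per-chosen-channel'' form is no better, which follows from the super-\st structural lemma (Theorem~\ref{thm:many-sticks} applied locally, treating the holes as part of the boundary) together with the observation that a super-\st straddling two channels, or lying partly inside a hole, wastes length and hence is dominated by an aligned placement inside a single channel.

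The main obstacle I expect is the geometric bookkeeping that makes the reduction \emph{robust}: ensuring that the solver cannot cheat by (a) using one long super-\st to block several channels at once, (b) placing \sts that only partially block a channel yet still force a long detour through the remaining gap, or (c) exploiting the holes (e.g.\ running a super-\st through a hole, which is allowed in our model) to cover a channel cheaply. This is handled by making the channels long and well-separated so that a super-\st spanning two of them would have to be implausibly long (exceeding $k$), by shaping each channel so that any unblocked sub-gap still admits a short bypass (so partial blocking gives zero benefit — the detour is all-or-nothing), and by placing the holes so that a \st passing through a hole gains nothing relative to plugging the adjacent gap. Quantities ($B$, channel widths, channel lengths, the target threshold) are all polynomially bounded in the input numbers, giving a pseudo-polynomial-size instance and hence \emph{weak} NP-hardness; a brief remark can note that the same construction does not obviously yield strong hardness precisely because the channel lengths scale with the $a_i$.
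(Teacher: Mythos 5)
Your reduction uses $k=\sum_i a_i$ unit \sts. When the $a_i$ are encoded in binary (as they must be for Partition to be nontrivially hard), $\sum_i a_i$ can be exponential in the input size, so you cannot even \emph{write down} the instance of Path-HB1 in polynomial time. A reduction that outputs a pseudopolynomially large instance is not a valid NP-hardness reduction, weak or otherwise; ``weakly NP-hard'' still means NP-hard under polynomial-time many-one reductions, the qualifier only records that the source problem (Partition) loses its hardness under unary encoding. So the core idea of simulating a length-$a_i$ segment by bundling $a_i$ unit \sts is fatally flawed at the complexity-theoretic level, independent of any geometric concerns.

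The paper's construction sidesteps this by using exactly $m$ unit \sts --- one per input integer --- and encoding the magnitude $a_i$ not in the \emph{number} of \sts spent on gadget $i$ but in the \emph{geometry} of gadget $i$: \st $i$ is forced to plug a mandatory gap in a short middle corridor, and the plug can be slid to close either the top or the bottom of that gap; the two placements lengthen a ``red'' (top) route or a ``blue'' (bottom) route by exactly $a_i$, via carefully chosen distances to auxiliary spike vertices. Thus each \st realizes a binary choice $a_i\in R$ versus $a_i\in B$, and the shortest \s-\t path reaches the target length $C+\sum(A)/2$ iff the choices form a balanced partition. The instance size is $O(m)$ and all coordinates are polynomially bounded in $\sum a_i$, so the reduction is genuinely polynomial. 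Two secondary issues in your sketch: you invoke Theorem~\ref{thm:many-sticks} to control super-\st formation, but that structural result is proved only for simple polygons and demonstrably fails in polygons with holes (the 3-Partition reduction for Theorem~\ref{thm:3part} relies on splitting \sts across channels); and your ``all-or-nothing'' channel behavior is not automatic --- a partially plugged channel can still force a nontrivial detour, which is exactly the loophole the paper closes by having a genuinely short bypass through each unblocked sub-gap.
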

\begin{proof}Given an instance $A=\{a_1\dots a_{m}\}$ of 2-Partition, we construct a domain with $O(m)$ vertices, in which it is possible to place $m$ unit \sts so that the shortest \s-\t path has length $C + \sum(A)/2$ (where $C$ is some number and $\sum(X)$ is the sum of numbers in a set $X$), iff $A$ can be partitioned into two sets $R$ and $B$ with $\sum(R) = \sum(B) = \sum(A)/2$.

  \begin{figure}[tb]
    \centering
    \includegraphics{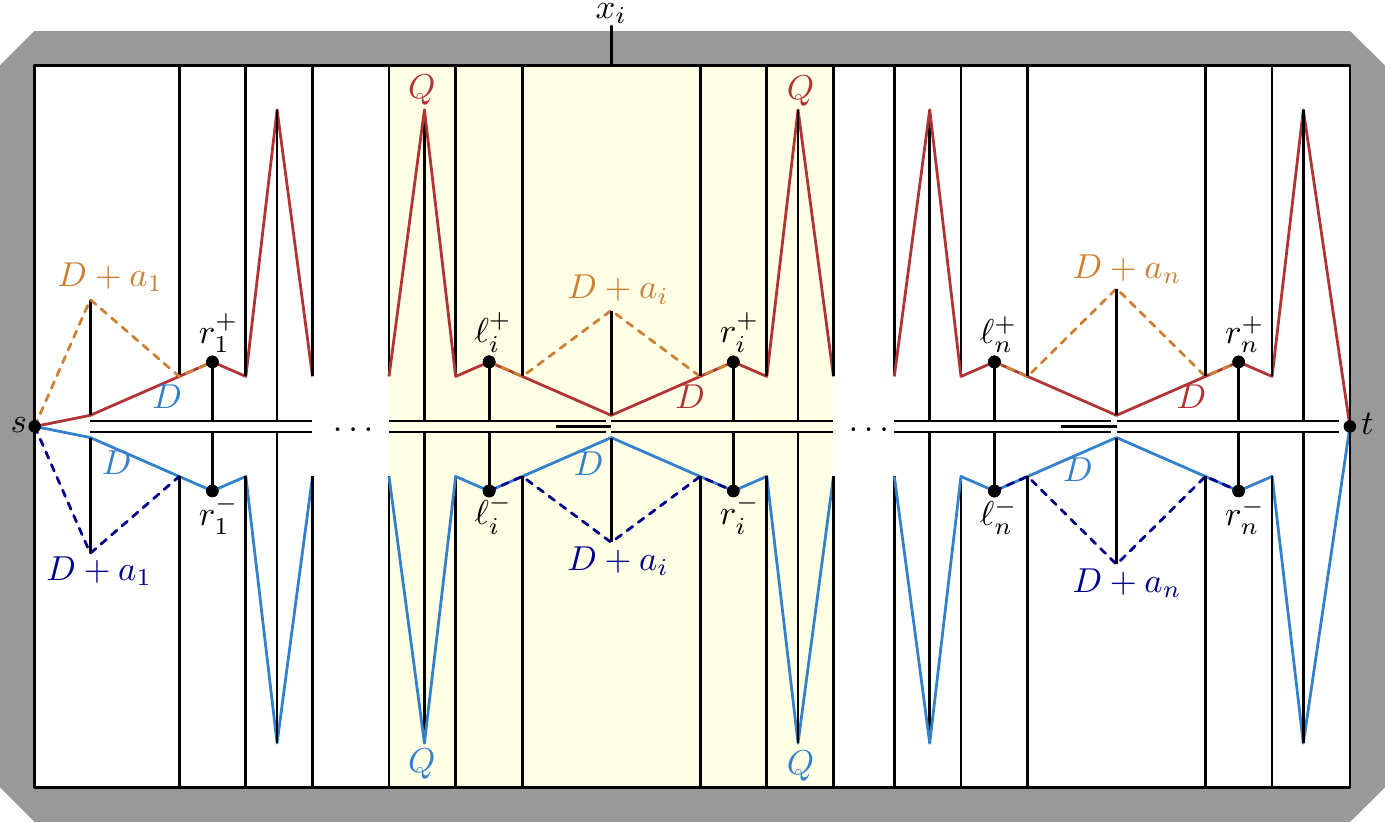}
    \caption{An overview of the construction.}
    \label{fig:hardness_global}
  \end{figure}

  Our construction is sketched in Fig.~\ref{fig:hardness_global}. The main idea
  is that there are three main routes from $s$ to $t$: a very short middle
  route, a ``red'' (top) route, and a ``blue'' (bottom) route. The red and
  blue routes both have length $C=mD+(m-1)E$, for some large constants $D$ and
  $E$, and are much longer than the middle route.  We make it such that with
  exactly $m$ \sts, we can block off the middle route completely. Moreover,
  by placing a \st $i$ appropriately, we can increase the length of either
  the red route or the blue route by exactly $a_i$. Increasing the length of
  the red route by $a_i$ corresponds to assigning $a_i$ to $R$ and increasing
  the length of the blue route by $a_i$ corresponds to assigning $a_i$ to
  $B$. So, in the end the length of the red and blue routes are $C+\sum(R)$ and
  $C+\sum(B)$, respectively. Hence, it is possible to partition $A$ into $R$
  and $B$, with $\sum(R)=\sum(B)=\sum(A)/2$ if and only if the length of the
  shortest path between $s$ and $t$ is at least $C+\sum(A)/2$.

  \paragraph{Description of the Construction.}  In detail, the outer boundary
  of $P$ is a large rectangle, centered vertically at the $x$-axis and \s at the origin. 

  Our middle route will consist of a rectangular corridor of height $1/2$
  vertically centered at the $x$-axis. We cut through the top and bottom walls
  of the corridor in $m$ intervals $[x_i-\eps,x_i]$, for some arbitrarily small
  $\eps>0$. Let $e_i$ and $f_i$ be the points with $x$-coordinate $x_i$ on the
  top and bottom wall, respectively. See Fig.~\ref{fig:hardness_stick}. We
  build another horizontal segment $h_i$ of length $H>2m+\sum(A)$ whose right
  endpoint is $(x_i,0)$.  Note that we have vertices with exactly the same
  $x$-coordinate; also note that in order for the construction to work, we must
  allow placing the \st so that it contains vertices of the
  domain.

  \begin{figure}
    \centering
    \includegraphics{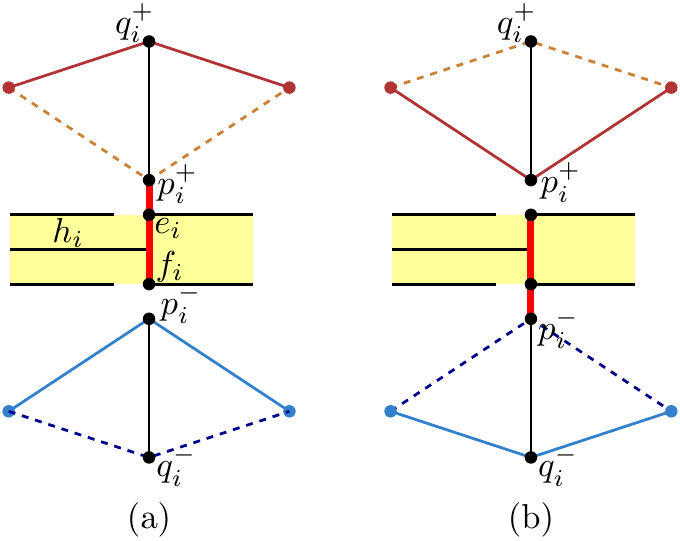}
    \caption{The two possible placements for \st $i$.}
    \label{fig:hardness_stick}
  \end{figure}

  Our red (top) and blue (bottom) routes are completely symmetric, so we
  describe only the top route. For each opening $[x_i-\eps,x_i]$ in the middle
  corridor, we build a vertical segment $p^+_iq^+_i$ with bottom
  endpoint $p^+_i=(x_i,3/4)$. In between every consecutive pair $x_i,x_{i+1}$
  we build three long vertical walls attached to the middle corridor, and three
  long vertical walls attached to the top boundary of $P$ that force the red
  route to zigzag (see Fig.~\ref{fig:hardness_global}). Let $r^+_i$, $k_i$, and
  $\ell^+_{i+1}$ be the top-endpoints of the walls connected to the middle
  corridor. The distances between these walls (and the walls extending from the
  top of $P$) are all large, i.e., larger than $m$, so that we cannot block the
  passage even if we place all \sts consecutively. The walls extending from
  the top of $\partial P$ are built so that the length of the shortest path
  between $r^+_i$ and $\ell^+_{i+1}$ via $k_i$, i.e. our zigzag, is very large,
  say $Q > m^2+\sum(A)$. Let $D>2m$ be the distance from $\ell^+_i$ via $p^+_i$
  to $r^+_i$. We place $q^+_i$ (and the endpoints of the walls extending from
  the top) such that the distance from $\ell^+_i$ via $q^+_i$ to $r^+_i$ is
  $D+a_i$.

  \paragraph{Making sure that we use $m$ \sts to block the middle route.}
  Note that for the shortest \s-\t path to have length at least
  $C=m(Q+D) > m^3+m\sum(A)$, it needs to pass through at least $m$ zigzags from
  our red or blue routes. Hence, we need to block the middle route between any
  pair of consecutive openings $x_i$ and $x_{i+1}-\eps$. Since we have exactly
  $m$ \sts, we have to place one \st, say \st $i$, to close off the
  middle corridor between $x_i$ and $x_{i+1}$.

  Next, observe that if we place \st $i$ to block off the middle corridor
  between $x_i$ and $x_{i+1}-\eps$ at some $x$-coordinate other than $x_i$, we
  can just shift it to $x_i$ without decreasing the length of the shortest
  path. 
  Furthermore note that
  connecting the left endpoint of the bottom gap to the right endpoint of the
  top gap allows either the top or the bottom path to bypass one of the long
  ``spikes'' of length $Q$, hence such a placement is non-optimal.

  \paragraph{Claim: Any shortest path either stays entirely above middle route
    or entirely below it.} We now argue that there are only two potential
  shortest paths left: one that stays entirely above the middle corridor, and
  one that stays below it. Suppose that a shortest path $\pi$
  goes through the $i^\text{th}$ zigzag on the red route, i.e., above the
  corridor, and thus passes through $\ell^+_i$, and then crosses the middle
  route to $r^-_i$. Since we place \st $i$ at $x_i$, this path has to go
  around the horizontal segment $h_i$. It follows that the length of this path
  is at least $X+2H+Y \geq X+2(m+\sum(A))+Y$, where $X=\geod(s,\ell^+_i)$ and
  $Y=\geod(r^-_i,t)$ are the shortest $\s\textrm-\ell^+_i$ and $r^-_i\textrm-t$ paths. Now consider the subpath of $\pi$ from $r^-_i$ to $t$, and
  mirror it in the $x$-axis. Observe that this does not increase its length,
  and that this path $\pi'$ now goes through $r^i_+$. Now consider the path that follows $\pi$ to $\ell^+_i$, then goes to $r^+_i$ via $q^+_i$, and then uses
  $\pi'$ from $r^+_i$ to $t$. This path has length $X+D+a_i+Y$ and is thus
  shorter than $\pi$. That is, $\pi$ cannot be a shortest path. It follows that any
  shortest path either stays entirely above or below the middle corridor.

  \paragraph{Increasing the length of either the red or blue route by
    $a_i$.} Finally, we argue that if we place \st $i$ exactly between $f_i$
  and $p^+_i$ we can increase the length of the shortest path between
  $\ell^+_i$ and $r^+_i$ by an additional $a_i$. Symmetrically, placing \st
  $i$ between $e_i$ and $p^-_i$ instead increases the path between $\ell^-_i$
  and $r^-_i$ by an additional $a_i$.

  It now follows that the length of the ``red'' shortest path that stays
  entirely above the middle corridor and the length of the ``blue'' shortest
  path that stays entirely below both have length (at least) $C+\sum(A)/2$ if
  and only if we can partition $A$ into two subsets $R$ and $B$ with
  $\sum(R)=\sum(B)=\sum(A)/2$. 
\end{proof}


Membership in NP for our problems is open, since verifying solutions involve summing square roots.

\section{Polynomial-time algorithms}\label{sec:poly}

We start from blocking path with \e{one} \st. Recall that two \s-\t paths in the domain have the same \e{homotopy type} if they can be continuously (without intersecting the obstacles) morphed to each other. A \e{locally shortest} (or ``pulled taut'') path is the shortest path within its homotopy type. We consider only those homotopy types for which the locally shortest paths do not self-intersect or self-touch. The shortest \s-\t path \sps{t} is the shortest path of all locally shortest paths. 
It is well known that shortest paths lie on edges of the \e{visibility graph} $VG(\P)$ that connects mutually visible vertices of the domain.

When speaking about a path or a path type, we will assume that the path or the path type exists and is locally shortest: that is, a statement like ``shortest path of type X'' should read ``locally shortest path of type X, if such exists'' (if a path does not exist, its length is set to infinity). We will also assume that any shortest path $\pi$ that we speak about is unique (again, omitting the modality ``if such a path exists''); otherwise, $\pi$ will mean an arbitrary shortest path of the spoken type.

The shortest \s-\t path in the presence of $ab$ is the shorter among the following:\begin{itemize}
\item the shortest path $\pi$ that does not intersect (touch) $ab$ (i.e., the shortest path that goes neither through $a$ nor through $b$)
\item the shortest path(s) intersecting $ab$.\end{itemize}
The latter type of paths can be of the following subtypes (Fig.~\ref{fig:competing}):
\begin{itemize}
\item $\pi_a$ going through $a$ but not $b$
\item $\pi_b$ going through $b$ but not $a$
\item $\pi_{a,b}$ going through both $a$ and $b$. This can either be a path $\pi_{\tilde{ab}}$ that uses more than one edge between $a$ and $b$ (if the \st intersects a hole), or a path $\pi_{ab}$ that has $ab$ as an edge (if $a$ and $b$ are mutually visible).
\end{itemize}
\begin{figure}
\begin{minipage}[t]{.5\columnwidth}
\centering
\includegraphics{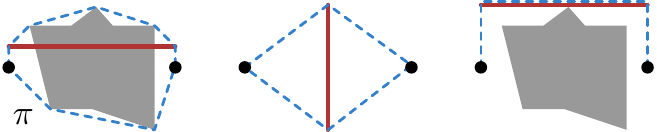}
\caption{The paths (dashed) competing to be the shortest: $\pi$ and $\pi_{\tilde{ab}}$ (left), $\pi_a$ and $\pi_b$ (middle), $\pi_{ab}$ (right); \s and \t are the dots, and $ab$ is bold red.}
\label{fig:competing}
\end{minipage}
\hfill
\begin{minipage}[t]{.46\columnwidth}
\centering
\includegraphics{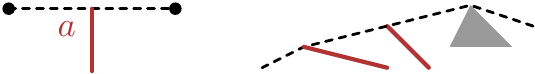}
\caption{Left: $\pi_a$ starting/stopping to be pulled taut implies $a$ moving over the edge of $VG(\P)$ (dashed); red dots are \rs{a} and \rt{b}. Right: formula for path length changes when and edge of $VG(\bar\P)$ appears/disappears.}\label{fig:taut}
\end{minipage}
\end{figure}
The next two lemmas show that the shortest path through the most vital \st cannot be of the last subtype, i.e., that none of $\pi_{\tilde{ab}},\pi_{ab}$ can be an optimal shortest path:\begin{lemma}\label{lem:pi_atildeb}$\pi_{\tilde{ab}}$ cannot be the shortest path for the most vital \st (segment) $ab$.\end{lemma}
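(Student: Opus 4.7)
The plan is to argue by contradiction: suppose the most vital barrier $ab$ admits $\pi_{\tilde{ab}}$ as its shortest $s$-$t$ path, and exhibit an infinitesimal perturbation of $ab$ strictly lengthening the shortest path. The structural observation is that $\pi_{\tilde{ab}}$ uses more than one edge between $a$ and $b$ exactly because the open segment $ab$ crosses the interior of some hole $h$, meeting $\partial h$ at two points $p,q$ (say $p$ nearer $a$). Writing $\pi_{\tilde{ab}} = s\to\dots\to u_a\to a\to v_1\to\dots\to v_k\to b\to u_b\to\dots\to t$ with $u_a=\rs{a}$, $u_b=\rt{b}$ and $v_1,\dots,v_k$ hole vertices, the middle subsegment $pq$ of the barrier lies inside an obstacle and so blocks no feasible path---only the pieces $[a,p]$ and $[q,b]$ do any actual work.

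The main step is a translation argument, analogous to the one in the proof of Lemma~\ref{lem:piab}. I would translate $ab$ infinitesimally by $\epsilon\hat d$ along its supporting line, with $\hat d = (b-a)/\lVert b-a\rVert$. For sufficiently small $|\epsilon|$ the combinatorial type of $\pi_{\tilde{ab}}$ is preserved (same predecessor $u_a$, same hole-vertex detour $v_1,\dots,v_k$, same successor $u_b$), so only the four subsegments $u_a a,\,av_1,\,v_kb,\,bu_b$ change length. Each of these four is a convex function of $\epsilon$ (distance from a fixed point to a point moving along a line), so the total length of $\pi_{\tilde{ab}}(\epsilon)$ is convex in $\epsilon$. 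Hence either (i) the derivative at $\epsilon=0$ is nonzero, in which case one translation direction strictly increases the length, or (ii) $\epsilon=0$ is the unique minimum of this convex function, in which case \emph{both} translation directions strictly increase the length. Either way a small translation yields a barrier whose shortest $s$-$t$ path is strictly longer, contradicting most-vitality of $ab$.

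The main obstacle is justifying feasibility of the perturbation. The hole-setting analogs of Lemmas~\ref{lem:touch} and \ref{lem:atvertices} pin at least one endpoint of $ab$ to $\partial P$ (either the outer boundary or some hole boundary), so pure translation along $\hat d$ in one direction may violate this contact. To handle this I would replace pure translation by an infinitesimal rotation about the pinned endpoint: the same convexity computation in the angle parameter shows that at least one of the two rotational directions preserves the pinning and strictly increases the length of $\pi_{\tilde{ab}}$. A minor technicality is that the combinatorial sequence $(u_a,v_1,\dots,v_k,u_b)$ is only guaranteed to persist for small enough perturbations, but since all the relevant inequalities are first-order in the perturbation parameter this suffices.
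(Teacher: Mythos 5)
Your core translation argument is valid and, while in the same spirit as the paper's, takes a genuinely different route. The paper's one-line proof invokes Lemma~\ref{lem:touch}: pick a translation direction $\theta$ lying in the intersection of the two closed half-planes $\tau^+$, $\sigma^+$ bounded by the tangents to $E(\rs{a},\rt{a},a)$ and $E(\rs{b},\rt{b},b)$ at $a$ and $b$; translating in $\theta$ then increases \emph{each} of $\rs{a}a+a\rt{a}$ and $\rs{b}b+b\rt{b}$ separately. You instead translate along the supporting line $\hat d$ and argue from convexity of the total length in $\epsilon$ that one of the two directions must strictly increase it. Both work; yours replaces the ellipse-tangent reasoning by a direct convexity argument and only controls the sum rather than each term, which is enough here because the hole-detour piece $v_1\dots v_k$ is unaffected.

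The genuine gap is in your rotational fallback. You assert that ``the same convexity computation in the angle parameter'' gives the analogous dichotomy, but it does not: the distance from a fixed point to a point $a(\theta)=b+\lVert ab\rVert(\cos\theta,\sin\theta)$ on a circle equals $\sqrt{A-B\cos(\theta-\theta_0)}$, which is convex near its minimum but concave near its maximum, so the total path length need not be convex in $\theta$. The configuration could be a strict rotational local maximum, and then the argument says nothing. (You would need something else there, e.g.\ the ellipse-tangent half-plane reasoning at the free endpoint plus a curvature comparison between the circle of rotation and the ellipse in the degenerate tangent case.) For what it is worth, the paper's proof only appeals to Lemma~\ref{lem:touch}, whose argument also assumes $ab$ is interior, so the pinned case is not really addressed there either; but the convexity-in-angle fix you propose does not close it. Additionally, invoking ``hole-setting analogs of Lemmas~\ref{lem:touch} and~\ref{lem:atvertices}'' to conclude that $ab$ \emph{must} be pinned risks circularity, since Lemma~\ref{lem:pi_atildeb} is itself one of the ingredients the paper uses to establish those analogs in the holes setting (see its use in Lemma~\ref{lem:two-sticks}); a clean ``pinned or not'' case split would avoid that.
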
\begin{proof}
Analogously to Lemma~\ref{lem:touch}, there is always a direction in which $ab$ can be moved so that the lengths of both $\rs{a}a+a\rt{a}$ and $\rs{b}b+b\rt{b}$ increase, lengthening the path.\end{proof}
\begin{lemma}\label{lem:piab_holes}$\pi_{ab}$ cannot be the shortest path for the most vital \st $ab$.\end{lemma}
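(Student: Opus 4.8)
The plan is to copy the proof strategy of Lemma~\ref{lem:piab} (with the translation step of Lemma~\ref{lem:touch}), adapted to obstacles. Suppose, for contradiction, that $ab$ is a most vital \st and that the shortest \s-\t path in $\P\setminus ab$ is $\pi_{ab}=\s\cdots\rs{a}\,a\,b\,\rt{b}\cdots\t$, traversing the whole segment $ab$ as one of its edges and bending at both $a$ and $b$. Since $a$ and $b$ are mutually visible, $ab$ is disjoint from every obstacle, so the \st meets an obstacle, if at all, only at an endpoint; and since $a$ is a genuine bend of $\pi_{ab}$, the preceding root $\rs{a}$ does not lie on the line through $ab$, so $\angle\rs{a}ab\in(0,\pi)$, and likewise $\angle ab\rt{b}\in(0,\pi)$. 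It suffices to produce an admissible infinitesimal motion of $ab$ --- one preserving $|ab|$ and preserving any endpoint--obstacle contact --- along which $|\pi_{ab}|$ strictly increases: since $\pi_{ab}$ is the (unique) shortest path, for a small enough such motion every competing \s-\t path stays strictly longer by continuity, so $\geod_{ab}(\s,\t)$ strictly increases, contradicting optimality of $ab$.

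When $ab$ is not pinned to an obstacle I would slide it along its supporting line. Along such a slide the roots $\rs{a},\rt{b}$, the subpaths $\geod(\s,\rs{a})$ and $\geod(\rt{b},\t)$, and the length $|ab|$ are all unchanged, so $|\pi_{ab}|$ varies exactly as the quantity $\rs{a}a+b\rt{b}$. This is a sum of two Euclidean distances restricted to an affine motion, hence a convex function of the slide parameter, whose derivative equals, up to sign, $\cos\angle\rs{a}ab-\cos\angle ab\rt{b}$. If one of the two angles is acute and the other obtuse, one checks that sliding so as to move the endpoint with the acute incident angle away from its root increases $\rs{a}a$ and $b\rt{b}$ simultaneously. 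Otherwise both angles are acute (or both obtuse), and the derivative is nonzero unless $\angle\rs{a}ab=\angle ab\rt{b}$: if it is nonzero we slide in the direction of increase, and if it vanishes then, by convexity, the current placement is a local \emph{minimum} of $\rs{a}a+b\rt{b}$ along the line, so sliding either way strictly increases it. In every case $|\pi_{ab}|$ strictly increases, as required.

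Two points need care, and I expect them to be the only real friction. First, exactly as in the proof of Lemma~\ref{lem:touch}, a root may change during the infinitesimal slide (say $\rt{b}$ jumps from a vertex $v$ to a vertex $u$, i.e.\ $b$ crosses the line $uv$); but at that instant the rays $b\!\to\!u$ and $b\!\to\!v$ coincide, so $\angle ab\rt{b}$, hence the one-sided slide derivatives of $\rs{a}a+b\rt{b}$, agree on both sides of the event, and the slide direction chosen above stays valid across it. Second, if $ab$ \emph{is} pinned to an obstacle at an endpoint, the admissible motion is instead a rotation of $ab$ about that endpoint (or a slide of the endpoint along the obstacle); the analogous computation shows $|\pi_{ab}|$ then varies as $\rs{a}a$, with $a$ moving on a circle, whose derivative vanishes only at placements that are degenerate --- there $\pi_{ab}$ would fail to be pulled taut, or $a$ would cease to be a genuine bend --- so we can always move so as to lengthen $\pi_{ab}$. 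Writing out this case distinction around endpoint--obstacle contacts and root changes is the bookkeeping-heavy part; the geometric heart is the single convex-slide step above.
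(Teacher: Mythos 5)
Your proposal is correct and follows essentially the same route as the paper: the paper's own proof of this lemma is the one-liner ``Analogous to Lemma~\ref{lem:piab},'' whose content is precisely the slide of $ab$ along its supporting line, the derivative $\cos\angle\rs{a}ab-\cos\angle ab\rt{b}$, and the convexity/local-minimum observation when that derivative vanishes --- all of which you spell out faithfully, together with the root-change bookkeeping borrowed from the proof of Lemma~\ref{lem:touch}. One remark: your second ``friction point,'' the pinned-endpoint case requiring a rotation about the contact point, is moot in the paper's model. Section~\ref{sec:tax} explicitly allows \sts to intersect holes, so the slide along the supporting line is always an admissible motion and no endpoint--obstacle contact ever needs to be preserved; the separate (and somewhat hand-wavy) rotation argument can simply be dropped.
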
\begin{proof}Analogous to Lemma~\ref{lem:piab}.\end{proof}

Say that two placements of $ab$ are \e{combinatorially equivalent} if in both placements each of $a,b$ lies within the same cell of each of \spms and \spmt, and in both placements the \st intersects the same set of edges of $VG(\P)$; say that the equivalence class defines the \e{combinatorial type} of the \st's placement. For a fixed combinatorial type, it is easy to write lengths of all the competitor paths $\pi,\pi_a,\pi_b$ as functions of $a$ and $b$. Indeed, $\pi$ stays the same, as the same edges of the visibility graph are blocked by $ab$, while $\pi_a$ either always stays pulled taut (and hence its length is $\sps{\rs{a}}+\rs{s}a+a\rt{a}+\spt{\rt{a}}$) or is never pulled taut (and hence is out of the competition) -- this is because when $\pi_a$ starts (resp.\ stops) being pulled taught the visibility edge between \rs{a} and \rt{a} starts (resp.\ stops) being cut by $ab$ (Fig.~\ref {fig:taut}, left); similarly for $\pi_b$.

We scroll through all combinatorial types of the \st placement; since \spms and \spmt have linear complexity and the visibility graph has $O(n^2)$ edges, the number of combinatorial types is polynomial. For each type, we compute the lower envelope of the lengths of the competing paths (the envelope defines the shortest \s-\t path) and take the highest point on the envelope.

We continue to placing a constant number of \sts.

\begin{theorem}\label{thm:path-HbD}
Path-HbD, and hence path-hbD, path-Hb1 and path-hb1, are polynomial.
\end{theorem}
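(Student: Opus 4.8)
The plan is to lift the single-\st algorithm described just above to $k=O(1)$ \sts, in three steps: reduce to a constant-dimensional family of candidate configurations, enumerate polynomially many ``combinatorial types'' inside that family, and solve each type as a constant-size optimization. Before any of this, test in polynomial time whether the $k$ \sts can be placed so as to disconnect \s from \t completely (then the answer is $\infty$): with $O(1)$ holes and $O(1)$ \sts there are only $O(1)$ combinatorial ways in which subsets of the \sts can bridge the gaps of a separating cycle through holes, \B and \T, and each candidate cycle is checked for geometric realisability by a shortest-connection computation as in Section~\ref{sub:blocking}.

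For the structural reduction, I would generalize Lemmas~\ref{lem:touch}, \ref{lem:piab}, \ref{lem:atvertices} and their hole-versions Lemmas~\ref{lem:pi_atildeb}, \ref{lem:piab_holes} from a single \st to rigid configurations of \sts, exactly as in the proof of Theorem~\ref{thm:many-sticks}: translating a rigid configuration, the length of the shortest \s-\t path on each side of it is a convex differentiable function of the translation vector, so an optimal configuration touches $\partial\P$; and, arguing as in Lemmas~\ref{lem:piab} and~\ref{lem:atvertices}, one may assume that each maximal super-\st (maximal collinear chain of mutually touching \sts, acting as one segment of known total length) is ``rooted'' at a vertex of \P or at a vertex of another super-\st, and that no degenerate path $\pi_{\tilde{ab}}$ or $\pi_{ab}$ realises the shortest \s-\t path (so, in particular, \sts do not waste length inside holes). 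As there are at most $k=O(1)$ super-\sts, once we fix for each one the vertex it is rooted at---one of $O(n)$ vertices of \P, or one of $O(1)$ vertices of another super-\st---the configuration has only $O(1)$ remaining continuous degrees of freedom, namely the directions of the super-\sts; this leaves $O(n^{O(1)})$ rooting choices.

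Next, define the \emph{combinatorial type} of a configuration, extending the single-\st notion: it records, for each of the $\le 2k$ \st endpoints, the cell of \spms and of \spmt containing it; for each \st, the set of edges of $VG(\P)$ it crosses; the rooting structure above; and the homotopy type of each of the $O(1)$ competing \s-\t paths (the one avoiding all \sts, and those routed through subsets of the \st endpoints). Since $k=O(1)$, and \spms, \spmt and $VG(\P)$ have polynomial complexity, and each competing path has a locally shortest representative supported on $VG(\P)$, there are only polynomially many combinatorial types. Fix a type and a rooting choice. As in the single-\st algorithm, within the type each competing path is either never taut (and is discarded) or has a fixed vertex sequence, so its length is an explicit formula---a sum of fixed geodesic lengths and $O(1)$ Euclidean distances---in the $O(1)$ direction parameters; the realised shortest \s-\t length is the lower envelope of these $O(1)$ formulas, and the parameter region consistent with the type is cut out by $O(1)$ algebraic constraints of bounded degree. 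Maximising the lower envelope over this constant-complexity region is a constant-size problem, solved exactly by the same ellipse/critical-configuration analysis as in the single-\st case. Taking the best value over all (polynomially many) types and rooting choices yields a most vital set of $k$ \sts in polynomial time. Path-hbD, path-Hb1 and path-hb1 then follow, being the special cases of path-HbD with $O(1)$ holes, with unit-length \sts, and with both, respectively.

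The main obstacle is the structural reduction of the second paragraph: carefully generalizing Lemmas~\ref{lem:touch}--\ref{lem:atvertices} and \ref{lem:pi_atildeb}--\ref{lem:piab_holes} to rigid multi-\st configurations in a domain with holes, including the delicate behaviour near \s and \t (placements on $S^*$ or $T^*$ are forbidden, and a \st at \s may touch only one of $s^-,s^+$), so that the reduction to $O(1)$-dimensional rooted candidates is actually valid---and, relatedly, making the complete-blockage test fully correct when several \sts must cooperate to separate \s from \t. Once the candidate family is pinned down, the enumeration of combinatorial types and the per-type optimization are routine extensions of the single-\st algorithm.
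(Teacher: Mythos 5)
Your overall framework---enumerate polynomially many ``combinatorial types'' of stick placements, and solve each type as a constant-size algebraic optimization via lower envelopes---matches the paper's proof. But you insert an extra ``rooting'' reduction that the paper does not use, and it is exactly this extra step that you then flag as ``the main obstacle.'' That obstacle is self-inflicted: since $k=O(1)$, the raw configuration space (all $\le 2k$ endpoints free) is \emph{already} constant-dimensional, so the paper needs no rooting at all. It simply observes that the boundaries between combinatorial types are cut out by polynomially many constant-description-complexity surfaces in this constant-dimensional space, hence the arrangement has polynomial complexity, and one can visit each cell. Your rooting step would buy a dimension reduction only if the number of \sts were unbounded; here it buys nothing, yet it requires generalizing Lemmas~\ref{lem:touch}, \ref{lem:piab} and \ref{lem:atvertices} (proved only for simple polygons) to rigid multi-\st configurations in domains with holes. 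The paper explicitly avoids that: for the holed case it only uses Lemmas~\ref{lem:pi_atildeb} and \ref{lem:piab_holes} to discard the degenerate path subtypes, not to pin \sts to vertices.

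Two further points where your combinatorial type diverges from the paper's. First, the paper's type records intersections with $VG(\bar\P)$, the visibility graph of $\P$ \emph{together with the \sts}, not $VG(\P)$; this matters because one \st can occlude visibility between another \st's endpoint and a vertex of $\P$, so the critical events where a path starts or stops being pulled taut are governed by $VG(\bar\P)$ (the events are ``three \st endpoints become aligned'' or ``two \st endpoints align with a vertex of $\P$''). Your version, which tracks $VG(\P)$ plus a separate homotopy label, may be patchable, but as written it is not the paper's invariant and you would have to re-argue that it determines the taut-path formulas within a cell. Second, the paper folds the choice of super-\st decomposition ($O(1)$ possibilities, since $k=O(1)$) into an outer enumeration and otherwise treats each super-\st as a single free segment; it does not require a separate pre-test for full blockage---a placement achieving blockage simply shows up as an unbounded lower envelope in the corresponding cell---so your ad hoc separating-cycle test, whose correctness you also leave open, is unnecessary. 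In short: drop the rooting reduction and the blockage pre-test, switch the visibility invariant to $VG(\bar\P)$, and the remaining skeleton of your argument is the paper's proof.
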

\begin{proof}
If the number of \sts is constant, only $O(1)$ different sets of super-\sts can be formed, and each such set has $O(1)$ super-\sts (since there were  a constant number of \sts to start from). Below we will describe how to deal with one such set; we call the super-\sts in the set just ``\sts''.

Similarly to finding one most vital \st, we say that two placements of the \sts are combinatorially equivalent if each \st endpoint is within the same cells of both \spms and \spmt, and if the same edges of $VG(\bar\P)$ are intersected by the \sts, where $VG(\bar\P)$ is the visibility graph of \P \e{and} the \sts. The number of combinatorially different placements remains polynomial (though exponential in the number of \sts): as the \sts move, $VG(\bar\P)$ changes when three endpoints of the \sts become aligned or when two \st endpoints align with a vertex of \P (Fig.~\ref{fig:taut}, right), which is defined by a polynomial number of constant-description-complexity curves in the constant-dimensional space of \st placements. We again scroll through all combinatorially different placements. For each placement, we compare the lengths of locally shortest simple (non-self-touching) paths of a constant number of homotopy types: a type is defined by the set of \st endpoints touched by the path (together with specifying whether the \st is above or below the path -- since \s and \t are on the outer boundary, the ``above'' and ``below'' are well defined) and the order in which the endpoints are touched -- altogether these define the homotopy type uniquely \cite[Section~4]{socg07}. The rest is the same as with placing one \st: build the lower envelope of the lengths of the (locally shortest) paths and take the highest point.
\end{proof}


\subsection{Polynomial-time algorithms for flow blocking}\label{sec:flow}

By the geometric MaxFlow/MinCut Theorem and the fact that MinCut is the shortest \B-\T path in the critical graph $G$ (see Section~\ref{sec:prelim}), decreasing the flow is equivalent to decreasing the length of the shortest path in $G$.\begin{lemma}\label{lem:G}There exists an optimal solution where \sts are placed along edges of $G$.\end{lemma}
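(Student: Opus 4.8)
The plan is to show that any solution (set of \sts blocking the flow, equivalently ``spending'' total length $L$ to shorten the $\B$-$\T$ path in the critical graph $G$) can be converted, without loss, into one in which every \st lies on an edge of $G$. Recall that the minimum cut is realized by a shortest $\B$-$\T$ path in $G$, whose vertices are holes, $\B$, and $\T$, and whose edges are straight segments connecting closest points of consecutive obstacles. A \st placed anywhere in $\P$ only affects the flow through the way it lengthens (or creates) connections between obstacles: the relevant quantity is how the \sts, together with the holes, $\B$ and $\T$, can be chained into a short $\B$-$\T$ barrier. So the first step is to make this precise: given an arbitrary placement of the \sts, consider the ``augmented critical graph'' $G'$ on the vertex set consisting of the holes, $\B$, $\T$, \emph{and} the \sts, with edge lengths equal to distances between the corresponding objects; the resulting minimum cut after blocking equals the shortest $\B$-$\T$ path in $G'$ (this is the geometric MaxFlow/MinCut statement applied to $\P$ with the \sts treated as additional obstacles).

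Next I would argue that along a shortest $\B$-$\T$ path in $G'$, we may assume the \sts appear consecutively and in fact can be slid so each lies on an edge of the \emph{original} $G$. Take a shortest $\B$-$\T$ path $P^*$ in $G'$. If it uses a \st $s$, look at the two path-neighbors $x$ and $y$ of $s$ (each either a hole, $\B$, $\T$, or another \st). The contribution of $s$ to the length of $P^*$ is $\mathrm{dist}(x,s)+\mathrm{dist}(s,y)$ plus (possibly) part of the length of $s$ itself if the path enters and exits at different points of $s$. Since a segment minimizes this sum by being placed so that it ``fills in'' the straight corridor between $x$ and $y$ — i.e.\ placed along the segment connecting the closest points of $x$ and $y$ — relocating $s$ onto that segment does not increase $\mathrm{dist}(x,\cdot)+\mathrm{dist}(\cdot,y)$, hence does not increase the length of $P^*$, hence does not increase the resulting flow. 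Iterating over all \sts on $P^*$, and noting that once a \st is on the $xy$-segment the relevant pair $(x,y)$ is an edge (or a sub-path) of the original critical graph $G$ (because $x$ and $y$ are holes/$\B$/$\T$), we conclude all \sts can be taken to lie on edges of $G$. \Sts not on $P^*$ are irrelevant to the cut value and can be placed arbitrarily (in particular, along $P^*$ as well), which only helps.

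The main obstacle I anticipate is the bookkeeping around \sts that are chained to \emph{each other} on $P^*$ rather than each sitting between two holes: one must argue that a maximal run of \sts between two obstacles $x,y$ can be collinearly stacked along the $xy$-segment, which is exactly the ``gluing into a super-\st'' phenomenon already established for the path problem (Lemma~\ref{lem:two-sticks} and Theorem~\ref{thm:many-sticks}) and should transfer here since the local optimization — minimizing $\mathrm{dist}(x,\cdot)+(\text{stacked lengths})+\mathrm{dist}(\cdot,y)$ over placements — has the same convexity structure. A secondary subtlety is ensuring the sliding does not create shortcuts elsewhere in $G'$ that would \emph{increase} the flow; but since we only ever decrease (or keep) the length of $P^*$ while $P^*$ remains a valid $\B$-$\T$ path, the minimum cut — hence the flow — cannot go up, which is all that is needed for ``there exists an optimal solution of this form.''
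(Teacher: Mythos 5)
Your overall strategy --- pass to the augmented critical graph (holes, $\B$, $\T$, and \sts as vertices), take a shortest $\B$-$\T$ path there, and locally relocate the \sts onto edges of the original critical graph $G$ --- is the same exchange argument the paper uses, and the key observation that a \st placed on the $xy$-segment minimizes $\mathrm{dist}(x,\cdot)+\mathrm{dist}(\cdot,y)$ is the right triangle-inequality fact. But two issues need fixing.

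First, a bookkeeping slip: the ``plus (possibly) part of the length of $s$ itself'' term should not appear. The length the cut curve spends \emph{inside} a \st contributes zero to the cut capacity --- that is precisely the effect of the \st --- so the contribution of $s$ is exactly $\mathrm{dist}(x,s)+\mathrm{dist}(s,y)$. Your version only overestimates the cost and so does not wreck the conclusion, but it muddies the accounting.

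Second, and more substantively: by iterating \st by \st you are forced to handle maximal runs of \sts chained to each other, and you propose to import Lemma~\ref{lem:two-sticks}/Theorem~\ref{thm:many-sticks}. That transfer does not hold as stated: those results are about \emph{lengthening} a geodesic in a \emph{simple} polygon and rest on geodesic convexity (Lemma~\ref{lem:pollack}), which is not available in a domain with holes and is not the relevant structure for shortening a Euclidean-distance cut. As written, this leaves a genuine gap. The paper sidesteps it by applying the triangle inequality once to the \emph{entire} maximal subpath between two consecutive non-\st vertices $R$ and $Q$: if that subpath spends length $D$ inside \sts and $d$ in free space, then $|RQ|\le D+d$, and stacking all the involved \sts (total length $\ge D$) end-to-end along the edge $RQ$ leaves uncovered length $\le |RQ|-D\le d$. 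A single exchange per subpath, no gluing lemma, no convexity. Replace your \st-by-\st iteration with this subpath-level exchange and the proof closes.
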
\begin{proof}Suppose a subpath of a \B-\T path, going between holes $R$ and $Q$, deviates from edges of $G$ (Fig.~\ref{fig:shortcut}). Let $D$ be the length that the subpath spends inside \sts, and let $d$ be the length of the subpath in the free space (it is possible that either of $D,d$ is 0). By the triangle inequality, the distance between the holes $R$ and $Q$ (i.e., the length of the edge $RQ$ of the critical graph) is at most $D+d$. Moving the \sts from the subpath onto the edge shortens its length by at least $D$, decreasing the overall length of the path (recall from Section~\ref{sec:tax} that we allow \sts to intersect holes, so we may freely move the \sts).\end{proof}
The above lemma reduces flow blocking to a purely graph-theoretic problem: shorten the \B-\T path as much as possible using the given \sts, where a length-$l$ \st shortens a length-$L$ edge to $\max[0,L-l]$.

\begin{figure}
\centering
\includegraphics{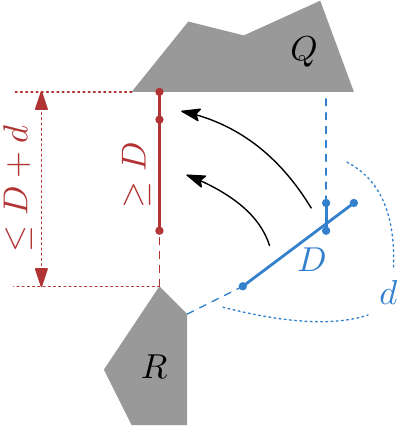}
\caption{\Sts are bars with dots at endpoints. A subpath of a shortest \B-\T path, going between two holes and not using edges of $G$ (blue), can be replaced by a shorter path (red) with the same \sts aligned along the edge of $G$.}\label{fig:shortcut}
\end{figure}
\begin{theorem}
\label{thm:flow-HB1}
Flow-HB1 can be solved in pseudopolynomial time.
\end{theorem}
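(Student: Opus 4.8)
The plan is to push through the graph reduction already set up in Lemma~\ref{lem:G}. By the geometric MaxFlow/MinCut theorem, maximally decreasing the $S$-$T$ flow is the same as minimizing the length of the shortest \B-\T path in the critical graph $G$, and by Lemma~\ref{lem:G} there is an optimal solution that places all \sts along edges of $G$. Since all \sts have unit length (this is the HB1 case), stacking $j$ of them on an edge of length $L$ replaces it by $\max(0,L-j)$ (apply the single-\st rule of Lemma~\ref{lem:G} $j$ times). The graph $G$ has $O(n)$ vertices and $O(n^2)$ edges and can be constructed in polynomial time~\cite{gewali}, so it remains to distribute $k$ unit \sts over the edges of $G$ so as to minimize the length of the shortest \B-\T path, in time polynomial in $n$ and $k$.

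I would first observe that there is an optimal distribution that puts all \sts on the edges of a single \B-\T path. Take any distribution, let $\ell^*$ be the resulting length of a shortest \B-\T path, and let $\pi^*$ realize it. Reallocate all \sts so that every edge of $\pi^*$ keeps at least as many \sts as before --- this is feasible because $\sum_{e\in E(\pi^*)}j_e\le\sum_e j_e\le k$ --- and leave the remaining \sts unused (or pile them onto any edge of $\pi^*$). The length of $\pi^*$ does not increase, so in the new configuration the shortest \B-\T path has length at most $\ell^*$. Consequently the optimum equals
\[
  \min_{\pi}\ \min\Bigl\{\textstyle\sum_{e\in E(\pi)}\max(0,w(e)-j_e)\ :\ j_e\in\mathbb{Z}_{\ge 0},\ \textstyle\sum_{e\in E(\pi)}j_e\le k\Bigr\},
\]
the outer minimum ranging over \B-\T paths $\pi$ of $G$ and $w(e)$ denoting the length of $e$.

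This quantity is computed by a resource-constrained shortest-path dynamic program on a layered copy of $G$. For a vertex $v$ of $G$ and a budget $b\in\{0,\dots,k\}$, let $D[v][b]$ be the minimum of $\sum_{e\in E(\pi)}\max(0,w(e)-j_e)$ over paths $\pi$ from \B to $v$ and allocations of at most $b$ \sts to the edges of $\pi$. Then $D[\B][b]=0$ for all $b$, and
\[
  D[v][b]=\min_{(u,v)\in E(G)}\ \min_{0\le j\le b}\ \bigl(D[u][b-j]+\max(0,w(u,v)-j)\bigr).
\]
Equivalently, $D[v][b]$ is the length of a shortest path from $(\B,0)$ to $(v,b)$ in the graph on $V(G)\times\{0,\dots,k\}$ having, for each edge $(u,v)$ of $G$ and each $j\ge0$ with $a+j\le k$, an arc $(u,a)\to(v,a+j)$ of (non-negative) cost $\max(0,w(u,v)-j)$, together with zero-cost arcs $(v,a)\to(v,a+1)$ that allow us to leave \sts unused. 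All arc costs are non-negative, so the $D[v][b]$ can be obtained by Dijkstra's algorithm (or Bellman--Ford); the layered graph has $O(nk)$ vertices and $O(n^2k^2)$ arcs (it already suffices to try $j\le\min(b,\lceil w(u,v)\rceil)$ per edge), so the running time is polynomial in $n$ and $k$ and independent of the magnitudes of the edge lengths --- that is, pseudopolynomial. The maximum achievable decrease of the flow is read off from $D[\T][k]$, and the corresponding placement of the \sts is recovered by tracing the DP back and, via Lemma~\ref{lem:G}, laying the \sts along the matching edges of $G$.

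The main obstacle is the structural claim that concentrating all \sts on a single \B-\T path loses nothing; once that is in place, the rest is a routine DP. One should also note that allowing non-simple walks in the layered graph is harmless, since deleting a cycle from a walk can only decrease both its cost and its \st usage, so $D[\T][k]$ indeed matches the minimum over simple \B-\T paths, consistently with the MinCut being realized by a simple shortest \B-\T path of $G$.
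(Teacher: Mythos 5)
Your proof is correct and follows essentially the same approach as the paper: after the Lemma~\ref{lem:G} reduction, you propagate labels $D[v][b]$ (the paper's $l_b(v)$) from $\B$ over the critical graph, which is exactly the bi-criteria/resource-constrained shortest-path dynamic program the paper describes, cast explicitly as Dijkstra on a layered graph of size $O(nk)\times O(n^2k)$. The only additions are the (correct but not strictly necessary for the DP's validity) structural lemma that an optimum concentrates all \sts on one $\B$-$\T$ path, and the explicit running-time accounting; both are consistent with, and slightly more detailed than, the paper's terse argument.
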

\begin{proof}Let $K$ be the number of \sts. Similarly to the standard pseudopolynomial-time algorithms for bi-criteria shortest paths in graphs with two kinds of edge lengths (sometimes called weights and costs) \cite{piatko}, we propagate $K$ labels from \B; label $k=1\dots K$ of a vertex is the length of the shortest path from \B to the vertex, using $k$ \sts. When propagating a label $l_k(u)$ along an edge $uv$ of length $L$, every label $i\ge k$ of $v$ is updated to the minimum of its current $i$th label $l_i(v)$ and $l_k(v)+\max[0,L-(i-k)]$, signifying the placement of $i-k$ \sts along $uv$.
\end{proof}
For a constant number of \sts we have:
\begin{theorem}
\label{thm:flow-HbD}
Flow-HbD, and hence flow-hbD, flow-Hb1 and flow-hb1, are polynomial.
\end{theorem}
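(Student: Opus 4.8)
The plan is to build directly on Lemma~\ref{lem:G} and the reduction stated after it: blocking the flow is equivalent to shortening the shortest \B-\T path in the critical graph $G$ as much as possible, where placing a length-$\ell$ barrier along a length-$L$ edge of $G$ lowers that edge's length to $\max[0,L-\ell]$, and stacking several barriers on one edge lowers it by the sum of their lengths. Since there are only $K = O(1)$ barriers, there are only $O(1)$ ways (a Bell number's worth) to partition them into groups that will be glued into super-barriers; I would enumerate all of them. Fix one partition into super-barriers of lengths $\ell_1,\dots,\ell_j$ with $j \le K$. Stacking two of these super-barriers on a common edge is the same as using a coarser partition, which is also enumerated, so for this fixed partition I may assume an optimal solution places the $j$ super-barriers on $j$ distinct edges of $G$.

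Next, for the fixed partition, enumerate all ordered tuples of distinct edges $(e_1,\dots,e_j)$ of $G$; there are $O(m^j)$ of them, where $m = O(n^2)$ is the number of edges of $G$. For each tuple, form the graph $G'$ obtained from $G$ by replacing the length $L_{e_i}$ of $e_i$ with $\max[0,L_{e_i}-\ell_i]$, and compute the length of the shortest \B-\T path in $G'$ with Dijkstra's algorithm. Keep the tuple (over all partitions and all edge assignments) minimizing this length, and output the corresponding physical placement: the barriers of group $i$ are laid consecutively along the straight segment $e_i$ starting from one of its endpoints. Correctness is immediate from Lemma~\ref{lem:G}: every optimal blocking can be transformed, without blocking less flow, into one of the enumerated form. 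For the running time, computing $G$ takes time polynomial in $n$; there are $O(1)$ partitions; each contributes $O(n^{2K})$ edge assignments; and each Dijkstra call on $G'$ (which has $O(n)$ nodes and $O(n^2)$ edges) costs $O(n^2\log n)$. So the total time is $O(n^{2K+2}\log n)$, polynomial for constant $K$. Finally, flow-hbD, flow-Hb1, and flow-hb1 are all special cases of flow-HbD, so they are polynomial too.

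The only real subtlety — and it is already discharged by Lemma~\ref{lem:G} together with the remark following it — is the structural claim that an optimal solution can be assumed to align each barrier with a single edge of $G$ and to interact with the other barriers only by stacking; the rest is a brute-force enumeration that is polynomial precisely because $K=O(1)$. Two minor points to address in passing: shortening an edge all the way to length $0$ (which happens when a super-barrier is at least as long as its edge) simply makes the two incident holes coincide from the graph's viewpoint, and Dijkstra handles zero-length edges unchanged; and no new edges of $G$ ever need to be created, since $G$ is already the complete graph on the holes, \B, and \T, so all possible hole-to-hole shortcuts are present from the start.
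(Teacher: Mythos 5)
Your proposal is correct and follows essentially the same strategy as the paper's proof: both rely on Lemma~\ref{lem:G} to reduce the problem to placing the $O(1)$ super-barriers on edges of the critical graph $G$, enumerate the $O(1)$ groupings into super-barriers and the polynomially many edge assignments, and recompute the shortest $\B$-$\T$ path for each. Your version spells out the running time and the handling of zero-length edges more explicitly than the paper, but the underlying idea is identical.
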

\begin{proof}Analogously to Theorem~\ref{thm:path-HbD}, only $O(1)$ super-\sts can be formed from a constant number of \sts. For each (constant-size) set of the super-\sts, there is a polynomial number of ways to place the super-\sts on edges of the critical graph (since $G$ has $O(h^2)=O(n^2)$ edges where $h$ is the number of holes in \P); for each such placement the shortest \B-\T path is computed, and the best overall placement is chosen.
\end{proof}
The remaining problem is flow-hB1:
\begin{theorem}
\label{thm:flow-hB1}
Flow-hB1 is polynomial.
\end{theorem}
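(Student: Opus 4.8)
The plan is to reduce \textbf{flow-hB1} to a finite combinatorial optimization on the critical graph and then solve that directly. Recall from the start of Section~\ref{sec:flow} that, by the geometric MaxFlow/MinCut theorem, decreasing the $S$-$T$ flow is the same as decreasing the length of the shortest \B-\T path in the critical graph $G$ of \P, and that by Lemma~\ref{lem:G} we may assume every \st is placed along an edge of $G$, where putting $j$ unit \sts along an edge of length $L$ turns it into an edge of length $\max[0,L-j]$. Since \P has $O(1)$ holes, $G$ has $O(1)$ vertices and hence $O(1)$ edges, and it (the holes, \B, \T, and the pairwise distances) can be computed in time polynomial in $n$ (see \cite{gewali,mitchell90}). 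So the problem becomes: given $G$, edge lengths $L_e$, and a budget of $K$ unit \sts, choose nonnegative integers $x_e$ with $\sum_e x_e\le K$ minimizing $\mathrm{dist}_{G'}(\B,\T)$, where $G'$ has edge lengths $\max[0,L_e-x_e]$.

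Next I would argue that it suffices to consider placements that put all \sts on the edges of a single simple \B-\T path. Edge lengths in $G'$ are always nonnegative, so a shortest \B-\T path in $G'$ is simple, and $G$ has only $O(1)$ simple \B-\T paths, which we enumerate. Given any placement $x$, let $P$ be a shortest \B-\T path in $G'_x$; discard the \sts lying off $P$ and re-distribute the $K$ \sts among the edges of $P$ so as to maximize the total shortening $\sum_{e\in P}\min(x'_e,L_e)$. Since the length of $P$ equals $\sum_{e\in P}L_e-\sum_{e\in P}\min(x'_e,L_e)$, and the re-distribution can only increase this shortening (the old restriction of $x$ to $P$ is itself a feasible distribution), the length of $P$ in $G'_{x'}$ is at most its length in $G'_x$, i.e. at most $\mathrm{dist}_{G'_x}(\B,\T)$; and $\mathrm{dist}_{G'_{x'}}(\B,\T)$ is no larger still. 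So the optimum is attained by some pair (simple path $P$, allocation of $\le K$ \sts to the edges of $P$), and it suffices to optimize over these.

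For a fixed path $P$ with $m=O(1)$ edges the remaining subproblem is trivial: maximize $\sum_{e\in P}\min(x_e,L_e)$ over nonnegative integers with $\sum_{e\in P}x_e\le K$. This objective is separable and concave in each $x_e$, so a greedy allocation is optimal: the available marginal gains are $\lfloor L_e\rfloor$ increments of value $1$ plus one increment of value $L_e-\lfloor L_e\rfloor$ for each edge $e$, and one takes the $\min(K,\sum_{e\in P}\lceil L_e\rceil)$ largest of them. (Equivalently, since the \sts are part of the input, $K$ is polynomially bounded, and one may instead enumerate all $O(K^{\,m})$ ways of splitting $\le K$ \sts among the $m$ edges.) Combining: $G$ is built in time polynomial in $n$, there are $O(1)$ simple \B-\T paths, and for each the optimal allocation is found in polynomial time; picking the best pair overall gives the placement minimizing the mincut, hence the flow. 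So \textbf{flow-hB1} is polynomial.

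I do not expect a genuine obstacle here; the two points needing care are (i) the integrality of the $x_e$ together with edge lengths that may be smaller than, or not a multiple of, the \st length, which is exactly what the concavity/greedy argument handles so that no part of a \st is wasted, and (ii) making explicit that it is optimal to crush one cut, which is what reduces the search to the $O(1)$ simple \B-\T paths of $G$.
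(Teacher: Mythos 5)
Your proposal is correct and follows essentially the same route as the paper: reduce via Lemma~\ref{lem:G} to placing \sts on edges of the critical graph, enumerate the $O(1)$ \B-\T paths (there are only $O(1)$ holes), and solve a greedy unit-\st allocation on each. You spell out two points the paper leaves implicit — that the optimum may be assumed to concentrate all \sts on a single \B-\T path, and that the greedy is optimal because the per-edge shortening $\min(x_e,L_e)$ has concave integer increments — which makes the argument a bit more airtight but is not a different method.
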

\begin{proof}With $O(1)$ holes, there are only $O(1)$ \B-\T paths. For each \B-\T path, we place the \sts greedily: First, we place $\lfloor L\rfloor$ \sts on each edge of length $L$ of the path (until we run out of the \sts) -- this way we do not waste the \sts. Then, if any \sts remain and there is an edge of the path not yet fully covered by the \sts, we place one more \st per edge, in decreasing order of the length of the part of the edge which is not yet covered. That is, we first place a \st on the edge with the largest fractional part of the length (eating up as much of the remaining path length as possible), then on the edge with the second largest fractional part, etc. (this is optimal, since we waste as little total \st length as possible).
\end{proof}

\section{Conclusion}We introduced geometric versions of the graph-theoretic most vital arcs problem. 
We presented efficient solutions for simple polygons, and gave hardness results and algorithms for various versions of the problem. The most intriguing open problem is the hardness of path-hB1 (path blocking with few holes, our only unresolved version); we conjecture that it is polynomial, as still only a constant-number of super-\sts may be needed. Another interesting question is whether the flow and the path blocking have fundamentally different complexities: we proved that the complexities are the same for all versions except HB1 -- for path-HB1 we showed weak hardness but lack a pseudopolynomial-time algorithm, while for flow-HB1 we have a a pseudopolynomial-time algorithm but no (weak) hardness proof.
\iffun

An important related question is solving \e{a mazing problem}, which is our polygon sticking problem with the additional restriction that \e{full blockage is forbidden}: such a mazing problem shows up when installing \sts for managing the queue to an airline checkin desk, an attraction in a theme park or another fun place (or when controlling the flow of spectators to an event entrance).
Indeed, people tend to queue along shortest paths -- it is practically unseen that the queue would substantially deviate from the "pull taut string" between its ends, as someone would definitely cut any "hanging corner" of the queue (after all, people are smarter than ants who find paths in Ant Colony optimization methods); hence, lengthening the shortest path is a natural solution to keep the queuing inside the designated area, without "spilling" into an adjacent neighborhood where the queue might be unwelcome.
Our results provide crucial ingredients for solving a mazing problem which we believe deserves a separate study; in particular, the problem turns out to be somewhat harder (e.g., with different-length \sts, the problem is strongly hard even in simple polygons and even for vertical \sts) -- this may be expected, because cutting off a child or a queue may be simpler than managing them smartly with the given \sts.
\else
More generally, various other setups may be considered. For instance, one may be given a budget on the total length of the \sts -- the problem then is how to split the budget between the \sts and where to locate them. For minimizing the maximum flow this version is easy: just place the \sts along the shortest \B-\T path in the critical graph of the domain. For maximizing the shortest path in a simple polygon the solution is trivial: make a single \st of the full length (and use our algorithm to find the optimal \st location). Blocking shortest paths in polygons with holes in this setting is an open problem. \val{do we wanna mention a mazing problem (including the possible hardness proof for polygons with holes)?}
\fi

\bibliographystyle{abbrv}
\bibliography{document}
\end{document}